\numberwithin{equation}{section}
\newtheorem{lemma}{\textbf{Lemma}}[section]
\newtheorem{theorem}{\textbf{Theorem}}
\newtheorem{remark}{\textbf{Remark}}[section]
\newtheorem{corollary}{\textbf{Corollary}}[section]
\newtheorem{proposition}{\textbf{Proposition}}[section]
\newtheorem{Definition}{\textbf{Definition}}
\begin{document}

\baselineskip 17pt\title{\Large\bf Codes with Weighted Poset Block Metrics}
\author{\large  Wen Ma \quad\quad Jinquan Luo\footnote{The authors are with School of Mathematics
and Statistics \& Hubei Key Laboratory of Mathematical Sciences, Central China Normal University, Wuhan China.\newline
 E-mails: mawen95@126.com(W.Ma),  luojinquan@mail.ccnu.edu.cn(J.Luo)}}
\date{}
\maketitle
\date{}
\maketitle

{\bf Abstract}: Weighted poset block metric is a generalization of weighted poset metric introduced by Panek et al. ([\ref{panek}]) and the metric for linear error-block codes introduced by Feng et al. ([\ref{FENG}]). This type of metrics includes many classical metrics such as Hamming metric, Lee metric, poset metric, pomset metric, poset block metric, pomset block metric and so on. In this work, we focus on constructing new codes under weighted poset block metric from given ones. Some basic properties such as minimum distance and covering radius are determined.

{\bf Key words}: weighted poset block metric, order ideal, covering radius, packing radius

\section{Introduction}

\quad\; Let $\mathbb{F}_q^n$ be the spaces of $n$-tuples over a finite field $\mathbb{F}_q$. Most of the coding theory was developed considering the metric determined by Hamming weight on $\mathbb{F}_q^n$. The study of codes endowed with a metric other than the Hamming metric gained momentum since 1990's. In 1995, Brualdi, Graves and Lawrence introduced \emph{poset metric}, which is defined by partial orders on the set of coordinate positions of $\mathbb{F}_q^n$  ([\ref{BG}]). Poset metric is a generalization of the Hamming metric, in the sense that the latter is attained by considering the trivial order. This has been a fruitful approach, since a number of unusual properties arise in this context such as intriguing relative abundance of MDS and perfect codes ([\ref{HYUN}], [\ref{panek}]).  Over the last two decades, the study of codes in the poset metric has made many developments in different subjects in coding theory.

Feng, Xu and Hickernell ([\ref{FENG}]) introduced the block metric by partitioning the set of coordinate positions of $\mathbb{F}_q^n$ and studied MDS block codes. In 2008, Alves, Panek and Firer combined the poset and block structure, obtaining a further generalization called the \emph{poset block metrics} ([\ref{ALVES}]). A particular instance of poset block codes and spaces, with one-dimensional blocks, are the spaces introduced by Niederreiter in 1991 ([\ref{NIED}]) and Rosenbloom and Tafasman in 1997 ([\ref{ROSEN}]). Later, Dass, Sharma and Verma obtained a Singleton type bound for poset block codes([\ref{DASS}]). A code meeting this bound is called a maximum distance separable poset block code. \emph{Niederreiter-Rosenbloom-Tsfasman block metric} (in short, \emph{NRT block metric}) is a particular case of poset block metric when the poset is a chain ([\ref{panek}]).

As the support of a vector $v$ in $\mathbb{F}_q^n$ is a set which induces order ideals and metrics on $\mathbb{F}_q^n$, the poset metric codes could not accommodate Lee metric structure due to the fact that the support of a vector with respect to Lee weight is not a set but rather a multiset. In order to handle Lee metric, a much general class of metrics called \emph{pomset metric} is introduced by Irrinki and Selvaraj ([\ref{POMSET}],[\ref{POMSETMAC}],[\ref{POMSETMDS}]) for codes over $\mathbb{Z}_m$. Over $\mathbb{Z}_2^n$ and $\mathbb{Z}_3^n$ the pomset metric is actually the poset metric. Moreover, when the pomset is induced by an antichain, it is the Lee metric.

More recently in [\ref{PANEK}] and [\ref{ANTI}], Panek and Pinheiro has proposed and studied weighted poset weight for finite field alphabet, which is a generalization of both the poset metric and pomset metric. In particular, it is a generalization of Hamming metric and Lee metric.

In this work, we combine weighted poset metric with error-block metric to obtain a further generalization called the\emph{ weighted poset block metric} which includes not only all additive metrics mentioned above but also some block metric such as \emph{poset block metric}, \emph{block metric}, \emph{pomset block metric} and so on.

It is known that many interesting and important codes will arise by modifying or combining existing codes under classical Hamming metric ([\ref{HUFFMAN}]).  There are also several different ways to join two ordered sets together ([\ref{Davey}]). The poset structure that could be imposed on the resultant codes will have its effect on the minimum distance and covering radius.

The remainder of the paper is organized as follows. In Section 2, we give some definitions, notations and basic facts of posets and weighted poset block weight over $\mathbb{F}_q^n$. In section 3, we consider the packing radius and covering radius of a $(P,\pi,w)$-code when $P$ is a chain. When $w$ is taken to be the Hamming weight, our conclusion will coincide with the result under NRT block metric. In section 4, we give several different ways to construct new $(P,\pi,w)$-codes from given ones. We introduce the concept of the direct sum and direct product of the labeling maps. The new poset block structure that could be imposed on the resultant codes. We focus on discussing its effect on minimum distance and covering radius.

\section{Preliminary}

\quad\; In the following, we give some basic definitions and notations about poset that are used throughout the remainder of the paper. For more details of posets see [\ref{Davey}].

Let $P$ be a set. A \emph{partial order} on $P$ is a binary relation $\leq_P$ on $P$ such that for all $x,y,z\in P$, we have $x\leq_P x$ (\emph{reflexivity}), $x\leq_P y$ and $y\leq_P x$ imply $x=y$ (\emph{antisymmetry}), $x\leq_P y$ and $y\leq_P z$ imply $x\leq_P z$ (\emph{transitivity}). A set $P$ equipped with an order relation $\leq_P$ is said to be a \emph{poset}. A poset $P$ is a \emph{chain} if any two elements of $P$ are comparable. The opposition of a chain is an antichain, that is, poset $P$ is an \emph{antichain} if $x\leq y$ in $P$ only when $x=y$. We call a subset $Q$ of $P$ an \emph{ideal} if, whenever $x\in Q$, $y\in P$ and $y\leq_P x$, we have $y\in Q$. For a subset $E$ of $P$, the \emph{ideal generated by $E$}, denoted by $\langle E\rangle_P$, is the smallest ideal of $P$ containing $E$. We prefer to denote the ideal generated by $\{i\}$ as $\langle i\rangle$ instead of $\langle\{i\}\rangle$. We denoted by $\langle i\rangle^{*}$ the \emph{difference} $\langle i\rangle-\{i\}=\{j\in P: j<i\}$.

There are several different ways to construct a new poset from any two given posets.
Let $P$ and $Q$ be two posets.
\begin{itemize}
\item \textbf{Disjoint union:} The disjoint union of $P$ and $Q$ denoted by $P\uplus Q$ is the poset formed by defining order relation on the underlying set $P\cup Q$:
  \begin{center}
  $x\leq y$ in $P\cup Q$ $\Leftrightarrow$ ($x,y\in P$ and $x\leq_P y$) or ($x,y\in Q$ and $x\leq_Q y$).
  \end{center}
\item \textbf{Linear sum:} The linear sum of $P$ and $Q$ denoted by $P\oplus Q$ is also a poset whose order relation is defined on $P\cup Q$ in the following way:
  \begin{center}
  $x\leq y$ in $P\cup Q$ $\Leftrightarrow$ ($x,y\in P$ and $x\leq_P y$) or ($x,y\in Q$ and $x\leq_Q y$) or ($x\in P$ and $y\in Q$).
  \end{center}
\item \textbf{Cartesian product:} Denote by $P\times Q=\{(i,j):i\in P, j\in Q\}$. Define an order relation $\leq$ on the underlying set $P\times Q$ as
  $$(x,y)\leq(x',y')\ \text{in}\ P\times Q\Leftrightarrow x\leq_P x'\ \text{and}\ y\leq_Q y'.$$
  Then $P\times Q$ is a poset with the order relation defined above and is called Cartesian product of $P$ and $Q$, denoted by $P\otimes Q$.
\item \textbf{Lexicographic product:} Define an order relation $\leq$ on the underlying set $P\times Q$ again in a different way:
  \begin{center}
  $(x,y)\leq(x',y')$ in $P\times Q$ $\Leftrightarrow$ ($x<_P x'$) or ($x=x'$ and $y\leq y'$).
  \end{center}
  Then $P\times Q$ is a poset with this order relation and is called lexicographic product of $P$ and $Q$, denoted by $P\star Q$.
\end{itemize}

\quad\;In the following, we give two ways to obtain a new poset from the old one.
Let $P$ be a poset.
\begin{itemize}
\item \textbf{Puncturing:} We can get a new poset $P^{-}$ from $P$ by deleting an element $z\in P$ and its order relation is defined as:
    $$x\leq y\in P^{-}\Leftrightarrow x\leq y\in P.$$
\item \textbf{Extending:} By adding an element $z$ in $P$, we obtain a new poset $P^{+}$ whose order relation is defined as:
    \begin{center}
    $x\leq y\in P^{+}$ $\Leftrightarrow$ ($x\leq y$ in $P$) or ($x=y=z$).
    \end{center}
\end{itemize}

\begin{remark}
By the definitions of puncturing poset and extending poset, we get the following.
\begin{enumerate}[(1)]
\item When poset $P$ is a chain (antichain), then $P^{-}$ is a chain (antichain).
\item Poset $P^{+}$ can never be a chain.
\end{enumerate}

\end{remark}

The definitions of weight and metric can be defined on general rings. In particular, we restrict it to finite field because it is the most explored topic in the context of coding theory.

Let $\mathbb{F}_q$ be the finite field of order $q$ and $\mathbb{F}_q^n$ the $n$-dimensional vector space over $\mathbb{F}_q$.

\begin{Definition}
A map $d: \mathbb{F}_q^n\times \mathbb{F}_q^n\rightarrow \mathbb{N}$ is a metric on $\mathbb{F}_q^n$ if it satisfies the following conditions:
\begin{enumerate}[(1)]
\item (non-negativity) $d(\boldsymbol{u,v})\geq 0$ for all $\boldsymbol{u,v}\in \mathbb{F}_q^n$ and $d(\boldsymbol{u,v})=0$ if and only if $\boldsymbol{u}=\boldsymbol{v}$;
\item (symmetry) $d(\boldsymbol{u,v})=d(\boldsymbol{v,u})$ for all $\boldsymbol{u,v}\in \mathbb{F}_q^n$;
\item (triangle inequality) $d(\boldsymbol{u,v})\leq d(\boldsymbol{u,w})+d(\boldsymbol{w,v})$ for all $\boldsymbol{u,v,w}\in \mathbb{F}_q^n$.
\end{enumerate}
\end{Definition}

\begin{Definition}
A map $w: \mathbb{F}_q^n\rightarrow \mathbb{N}$ is a weight on $\mathbb{F}_q^n$ if it satisfies the following conditions:
\begin{enumerate}[(1)]
\item $w(\boldsymbol{u})\geq 0$ for all $\boldsymbol{u}\in \mathbb{F}_q^n$ and $w(\boldsymbol{u})=0$ if and only if $\boldsymbol{u}=\boldsymbol{0}$;
\item $w(\boldsymbol{u})=w(\boldsymbol{-u})$ for all $\boldsymbol{u}\in \mathbb{F}_q^n$;
\item $w(\boldsymbol{u+v})\leq w(\boldsymbol{u})+w(\boldsymbol{v})$.
\end{enumerate}
\end{Definition}

It is straightforward to prove that, if $w$ is a weight over $\mathbb{F}_q^n$, then the map $d_w$ defined by $d(\boldsymbol{u},\boldsymbol{v})=w(\boldsymbol{u}-\boldsymbol{v})$ is a metric on $\mathbb{F}_q^n$. See [\ref{DEZA}] and [\ref{GAB}] for detailed discussion on weight and metric.

Let $\pi: [s]\rightarrow\mathbb{N}$ be a map such that $n=\sum\limits_{i=1}^s\pi(i)$. The map $\pi$ is said to be a \emph{labeling} of the poset $P$, and the pair $(P,\pi)$ is called a \emph{poset block structure} over $[s]$. Denote $\pi(i)$ by $k_i$. We take $V_i$ as the $\mathbb{F}_q$-vector space $\mathbb{F}_q^{k_i}$ for all $1\leq i\leq s$. We define $V$ as the direct sum
$$V=V_1\oplus V_2\oplus\cdots\oplus V_s$$
which is isomorphic to $\mathbb{F}_q^n$. Each $\boldsymbol{u}\in V$ can be uniquely decomposed as
$$\boldsymbol{u}=\boldsymbol{u_1}+\boldsymbol{u_2}+\cdots+\boldsymbol{u_s}$$
 where $\boldsymbol{u_i}=(u_{i_1},\ldots,u_{ik_i})\in V_i$ for $1\leq i\leq s$.

Let $w$ be a weight on $\mathbb{F}_q$ and let $P$ be a poset. Given $\boldsymbol{u}\in V$, set
\begin{enumerate}[(1)]
\item $W_i^P(\boldsymbol{u})=\max\left\{w(u_{ij}): 1\leq j\leq k_i\right\}$ for $1\leq i\leq s$;
\item $M_w=\max\left\{w(\alpha): \alpha\in\mathbb{F}_q\right\}$;
\item $m_w=\min\left\{w(\alpha): 0\neq\alpha\in\mathbb{F}_q\right\}$.
\end{enumerate}

The \emph{block support} or \emph{$\pi$-support} of $\boldsymbol{u}\in V$ is the set
$$supp_{\pi}^P(\boldsymbol{u})=\left\{i\in [s]:\boldsymbol{u_i}\neq\boldsymbol{ 0}\right\}.$$
We denote by $I_{\boldsymbol{u}}^P$ the ideal generated by $supp_{\pi}^P(\boldsymbol{u})$ and denote by $M_{\boldsymbol{u}}^P$ the set of maximal elements in the ideal $I_{\boldsymbol{u}}^P$. The \emph{$\left(P,\pi,w\right)$-weight} of $\boldsymbol{u}$ is defined as
$$\overline{w}_{w,(P,\pi)}(\boldsymbol{u})=\sum\limits_{i\in M_{\boldsymbol{u}}^P} W_i(\boldsymbol{u})+\sum\limits_{i\in I_{\boldsymbol{u}}^P\setminus M_{\boldsymbol{u}}^P}M_w.$$
For $\boldsymbol{u},\boldsymbol{v}\in V$, define their \emph{$\left(P,\pi,w\right)$-distance} as
$$d_{w,(P,\pi)}(\boldsymbol{u},\boldsymbol{v}) =\overline{\omega}_{w,(P,\pi)}(\boldsymbol{u}-\boldsymbol{v}).$$
The $(P,\pi,w)$-weight $\overline{\omega}_{w,(P,\pi)}$ and the $(P,\pi,w)$ distance $d_{w,(P,\pi)}$ is also called \emph{weighted poset block weight} and \emph{weighted poset block distance}.

The pair $\left(V,d_{w,(P,\pi)}\right)$ is said to be a \emph{weighted poset block space}. A $(P,\pi,w)$-code $C$ of length $n$ over $\mathbb{F}_q$ is a subset of $(\left(V,d_{w,(P,\pi)}\right)$.  A \emph{linear $(P,\pi,w)$-code} is a subspace of $V$. The minimum $(P,\pi,w)$-distance of a code $C$ is
$$d_{w,(P,\pi)}(C)=\min\left\{d_{w,(P,\pi)}(\boldsymbol{u},\boldsymbol{v}): \boldsymbol{u}\neq \boldsymbol{v}\in C\right\}$$
is the minimal $(P,\pi,w)$-distance of $C$.
When the weight $w$ over $\mathbb{F}_q$ is considered to be the Hamming weight $w_H$, we denote by $d_{(P,\pi)}(C)=d_{w_H,(P,\pi)}(C)$. A \emph{linear $(P,\pi,w)$-code} is a subspace of $V$.

\begin{remark}
It is worth noting that this metric combines and extends several classical metrics in coding theory. For instance,
\begin{enumerate}[(1)]
\item When the weight $w$ over $\mathbb{F}_q$ is the Hamming weight, the $(P,\pi,w)$-weight reduces to the poset block weight introduced by Alves et al. (see [\ref{ALVES}]).
\item Similarly, the Niederreiter-Rosenbloom-Tafasman block weight (NRT block weight), introduced by Panek (see [\ref{panek}]), becomes a particular case of $(P,\pi,w)$-weight when $P$ is taken to be a chain and $w$ is taken to be the Hamming weight.
\item When the label $\pi$ satisfies $\pi(i)=1$ for all $i\in[s]$, the $(P,\pi,w)$-weight reduces to the weighted poset weight introduced by Panek et al. (see [\ref{PANEK}] and [\ref{ANTI}]).
\item In case both conditions occur ($\pi(i)=1$ for all $i\in[s]$, $w$ is the Hamming weight and $P$ is the antichain order), the $(P,\pi,w)$-weight reduces to the usual Hamming weight.
\item In case both conditions occur ($\pi(i)=1$ for all $i\in[s]$, $w$ is the Lee weight and $P$ is the antichain order), the $(P,\pi,w)$-weight reduces to the usual Lee weight.
\end{enumerate}
\end{remark}

\section{Packing Radius and Covering Radius }

\quad\;In this section, we extend the concept of radius defined for the Hamming metric (see [\ref{HUFFMAN}]) to the case of weighted poset block metric. We always assume that $w$ is a weight on $\mathbb{F}_q$, $P=([s],\leq)$ is a poset, $\pi: [s]\rightarrow \mathbb{N}$ is a labeling of the poset $P$ and $V=\bigoplus\limits_{i=1}^s\mathbb{F}_q^{k_i}$ which is isomorphic to $\mathbb{F}_q^n$.

\begin{Definition}
Let $w$ be a weight on $\mathbb{F}_q$. For $u\in V$, the \emph{$(P,\pi,w)$-ball} with center $u$ and radius $r$ is the set
$$B_{w,(P,\pi)}(u,r)=\{v\in V: d_{w,(P,\pi)}(u,v)\leq r\}.$$
When the wight $w$ over $\mathbb{F}_q$ is considered to be Hamming weight, we denote by $B_{(P,\pi)}(u,r)$ the $(P,\pi,w)$-ball with center $u$ and radius $r$.
\end{Definition}

\begin{Definition}
Let $C$ be a $(P,\pi,w)$-code. The covering radius $\tilde{\rho}(C)$ is the smallest integer $l$ such that $V$ is the union of the balls with radius $l$ centered at the codewords of $C$, that is:
$$\tilde{\rho}(C)=\max\limits_{\boldsymbol{v}\in V}\min\limits_{\boldsymbol{u}\in C}d_{w,(P,\pi)}(\boldsymbol{u},\boldsymbol{v}).$$
\end{Definition}

\begin{Definition}
Let $C$ be a linear $\left(P,\pi,w\right)$-code and $\boldsymbol{v}\in V$. Define the coset of $C$ determined by $\boldsymbol{v}$ is defined as $\boldsymbol{v}+C=\{\boldsymbol{v}+\boldsymbol{u}: \boldsymbol{u}\in C\}$. The weight of a coset is the smallest weight of all vectors in the coset, and any vector having the smallest weight in the coset is called a coset leader.
\end{Definition}

\begin{remark}
For a linear $\left(P,\pi,w\right)$-code $C$,  one has that $\tilde{\rho}(C)$ is the weight of a coset with the largest weight.
\end{remark}

\begin{Definition}
A code $C$ is said to be an \emph{$r$-perfect $(P,\pi,w)$-code} if the $(P,\pi,w)$-balls of radius $r$ centered at the codewords of $C$ are pairwise disjoint and their union is $V$.
\end{Definition}

\begin{Definition}
The \emph{packing radius} $\rho(C)$ of a code $C$ is the largest radius of balls centered at codewords so that the balls are pairwise disjoint. We call a code $C$ is \emph{perfect} if it is $\rho(C)$-perfect.
\end{Definition}

In the remainder of this section, we always suppose that $P$ is a chain. Without loss of generality, we may assume that $P$ has order relation $1<\cdots <s$.

\begin{lemma}\label{ball}
Let $w$ be a weight on $\mathbb{F}_q$ and let $r=s+iM_w$ where $s\in[M_w]$ and $i\geq 0$ be an integer. Then $B_{w, (P,\pi)}(\boldsymbol{0},r)\subseteq B_{(P,\pi)}(\boldsymbol{0},i+1)$, with equality holds if and only if $s=M_w$.
\end{lemma}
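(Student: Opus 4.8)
The plan is to first make the chain hypothesis completely explicit, because for a chain the weight formula collapses to a single expression. When $P$ is the chain $1<\cdots<\ell$ (I write $\ell$ for the length of the chain to avoid clashing with the integer $s\in[M_w]$ appearing in the statement), the ideal generated by the block support of a nonzero $\boldsymbol{v}$ is $\{1,2,\ldots,m\}$, where $m=m(\boldsymbol{v}):=\max\{i:\boldsymbol{v}_i\neq\boldsymbol{0}\}$, and its unique maximal element is $m$ itself. Substituting into the definition of the weighted poset block weight gives
$$\overline{w}_{w,(P,\pi)}(\boldsymbol{v})=W_m(\boldsymbol{v})+(m-1)M_w,\qquad m_w\le W_m(\boldsymbol{v})\le M_w,$$
while the Hamming specialization reduces to $\overline{w}_{(P,\pi)}(\boldsymbol{v})=m$, the index of the top nonzero block. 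Thus membership in the two balls is controlled entirely by $m(\boldsymbol{v})$ together with the top-block contribution $W_m(\boldsymbol{v})$, and the lemma becomes an elementary comparison of integers.

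For the inclusion, take $\boldsymbol{v}\in B_{w,(P,\pi)}(\boldsymbol{0},r)$ with $\boldsymbol{v}\neq\boldsymbol{0}$ (the zero vector lies in every ball). Using $W_m(\boldsymbol{v})\ge m_w\ge 1$ and $r=s+iM_w$ with $s\le M_w$, I would chain
$$1+(m-1)M_w\le W_m(\boldsymbol{v})+(m-1)M_w\le s+iM_w\le M_w+iM_w=(i+1)M_w.$$
Hence $(m-1)M_w\le(i+1)M_w-1<(i+1)M_w$, and dividing by $M_w\ge 1$ forces $m\le i+1$, that is $\overline{w}_{(P,\pi)}(\boldsymbol{v})=m\le i+1$. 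This is precisely $\boldsymbol{v}\in B_{(P,\pi)}(\boldsymbol{0},i+1)$, which proves the inclusion.

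For the equality statement I would argue both directions from the same formula. If $s=M_w$ then $r=(i+1)M_w$, and for any $\boldsymbol{v}$ with $m\le i+1$ one has $\overline{w}_{w,(P,\pi)}(\boldsymbol{v})=W_m(\boldsymbol{v})+(m-1)M_w\le mM_w\le(i+1)M_w=r$, giving the reverse inclusion $B_{(P,\pi)}(\boldsymbol{0},i+1)\subseteq B_{w,(P,\pi)}(\boldsymbol{0},r)$ and hence equality. Conversely, if $s<M_w$ I would exhibit a separating vector: choose $\boldsymbol{v}$ supported only in block $i+1$, with a single coordinate equal to some $\alpha$ realizing $w(\alpha)=M_w$. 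Then $m(\boldsymbol{v})=i+1$ and $W_{i+1}(\boldsymbol{v})=M_w$, so $\boldsymbol{v}\in B_{(P,\pi)}(\boldsymbol{0},i+1)$ while $\overline{w}_{w,(P,\pi)}(\boldsymbol{v})=(i+1)M_w>s+iM_w=r$, so $\boldsymbol{v}\notin B_{w,(P,\pi)}(\boldsymbol{0},r)$ and the inclusion is strict.

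The one point needing care is the existence of this separating vector, which presupposes that block $i+1$ exists, i.e.\ $i+1\le\ell$. In the degenerate range $i+1>\ell$ both balls already exhaust $V$ and the equivalence holds only within the meaningful range $0\le i+1\le\ell$ (equivalently, for radii $r$ below the maximal weight $\ell M_w$); I would state this restriction explicitly rather than leave it implicit. Apart from this bookkeeping, the argument is just the chain-weight computation followed by the two integer comparisons above, so no deeper structural input is required.
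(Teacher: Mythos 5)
Your proof is correct and follows essentially the same route as the paper's: the inclusion by bounding the index of the top nonzero block, the ``only if'' direction by exhibiting a vector whose $(i+1)$-st block carries a coordinate of weight $M_w$, and the ``if'' direction by the direct bound $\overline{w}_{w,(P,\pi)}(\boldsymbol{v})\le (i+1)M_w=r$. Your explicit remark that the separating vector requires block $i+1$ to exist is a point the paper leaves implicit, but it does not change the argument.
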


\begin{proof}
Let $\boldsymbol{u}\in B_{w, (P,\pi)}(\boldsymbol{0},r)$. Then $\overline{\omega}_{w,(P,\pi)}(\boldsymbol{u})\leq r=s+iM_w\leq (i+1)M_w$. Therefore $\boldsymbol{u_j}=\boldsymbol{0}$ for $j\geq i+2$ and hence $\boldsymbol{u}\in B_{(P,\pi)}(\boldsymbol{0},i+1)$.

Suppose that $B_{w, (P,\pi)}(\boldsymbol{0},r)= B_{(P,\pi)}(\boldsymbol{0},i+1)$. Take $\boldsymbol{u}\in B_{(P,\pi)}(\boldsymbol{0},i+1)$ such that $u_{i+1,1}=\alpha$ where $\alpha\in\mathbb{F}_q$ satisfies $w(\alpha)=M_w$. Then $\overline{\omega}_{w,(P,\pi)}(\boldsymbol{u})=M_w+iM_w$. Therefore $s=M_w$.

Conversely suppose $s=M_w$. For $\boldsymbol{u}\in B_{(P,\pi)}(\boldsymbol{0},i+1)$, we have $\overline{\omega}_{w,(P,\pi)}(\boldsymbol{u})\leq (i+1)M_w=r$. Therefore $\boldsymbol{u}\in B_{w, (P,\pi)}(\boldsymbol{0},r)$ and hence $ B_{(P,\pi)}(\boldsymbol{0},i+1)\subseteq B_{w, (P,\pi)}(\boldsymbol{0},r)$.
\end{proof}

When $w$ is taken to be the Hamming weight over $\mathbb{F}_q$, the $(P,\pi,w)$-weight reduces to the NRT block weight. It is known that the packing radius of a linear $(n,K)$ code $C$ under NRT block metric is
$$\rho(C)=d_{(P,\pi)}(C)-1$$
(see [\ref{panek}], Theorem 5).

By Lemma \ref{ball}, we immediately get the following result.
\begin{theorem}
The packing radius of a linear $(P,\pi,w)$-code $C$ satisfies that
$$\rho(C)\geq\left(d_{(P,\pi)}(C)-1\right)M_w.$$
Furthermore, $\rho(C)=\left(d_{(P,\pi)}(C)-1\right)M_w$ if and only if $d_{w,(P,\pi)}(C)=m_w+\left(d_{(P,\pi)}(C)-1\right)M_w$.
\end{theorem}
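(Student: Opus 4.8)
The plan is to reduce both assertions to the Niederreiter--Rosenbloom--Tsfasman situation via Lemma \ref{ball}, exploiting that on a chain the weight has the explicit form $\overline{\omega}_{w,(P,\pi)}(\boldsymbol{u})=W_k(\boldsymbol{u})+(k-1)M_w$, where $k=\max\{i:i\in supp_{\pi}^P(\boldsymbol{u})\}$ is the top nonzero block and $d_{(P,\pi)}(\boldsymbol{u})=k$. First I would record the elementary estimate $m_w+(d_{(P,\pi)}(C)-1)M_w\le d_{w,(P,\pi)}(C)$: every nonzero $\boldsymbol{c}\in C$ has top block $k\ge d_{(P,\pi)}(C)$ and satisfies $\overline{\omega}_{w,(P,\pi)}(\boldsymbol{c})=W_k(\boldsymbol{c})+(k-1)M_w\ge m_w+(d_{(P,\pi)}(C)-1)M_w$, since $W_k(\boldsymbol{c})\ge m_w$ and $(k-1)M_w\ge (d_{(P,\pi)}(C)-1)M_w$. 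A weight comparison shows the minimum is attained at a codeword with $k=d_{(P,\pi)}(C)$, so equality in this estimate holds exactly when some codeword of block distance $d_{(P,\pi)}(C)$ has a top block whose maximal symbol weight equals $m_w$.

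For the inequality $\rho(C)\ge (d_{(P,\pi)}(C)-1)M_w$, set $r=(d_{(P,\pi)}(C)-1)M_w=M_w+(d_{(P,\pi)}(C)-2)M_w$. Applying Lemma \ref{ball} with the parameter $s=M_w$ and $i=d_{(P,\pi)}(C)-2$ yields the equality $B_{w,(P,\pi)}(\boldsymbol{0},r)=B_{(P,\pi)}(\boldsymbol{0},d_{(P,\pi)}(C)-1)$. Since the Hamming specialization of $C$ has block distance $d_{(P,\pi)}(C)$, Theorem~5 of $[\ref{panek}]$ gives NRT packing radius $d_{(P,\pi)}(C)-1$, so the Hamming balls $B_{(P,\pi)}(\boldsymbol{c},d_{(P,\pi)}(C)-1)$, $\boldsymbol{c}\in C$, are pairwise disjoint; by the displayed equality the weighted balls $B_{w,(P,\pi)}(\boldsymbol{c},r)$ are then pairwise disjoint as well, whence $\rho(C)\ge (d_{(P,\pi)}(C)-1)M_w$.

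For the equality characterization I would test the next radius $r'=(d_{(P,\pi)}(C)-1)M_w+1$. By translation invariance of the metric it suffices to decide, for each nonzero $\boldsymbol{c}\in C$, whether $B_{w,(P,\pi)}(\boldsymbol{0},r')\cap B_{w,(P,\pi)}(\boldsymbol{c},r')=\emptyset$. If $d_{w,(P,\pi)}(C)=m_w+(d_{(P,\pi)}(C)-1)M_w$, choose a codeword $\boldsymbol{c}$ realizing this value; it has top block $k=d_{(P,\pi)}(C)$ and $W_k(\boldsymbol{c})=m_w$. Splitting $\boldsymbol{c}=(\boldsymbol{c}-\boldsymbol{c_k})+\boldsymbol{c_k}$ into its lower blocks and its top block and taking $\boldsymbol{v}=\boldsymbol{c}-\boldsymbol{c_k}$, I would verify that $\boldsymbol{v}$ lies in both balls, forcing them to meet and giving $\rho(C)\le (d_{(P,\pi)}(C)-1)M_w$, hence equality with the lower bound. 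Conversely, if $d_{w,(P,\pi)}(C)>m_w+(d_{(P,\pi)}(C)-1)M_w$, I would show the balls of radius $r'$ stay pairwise disjoint: a common point of $B_{w,(P,\pi)}(\boldsymbol{0},r')$ and $B_{w,(P,\pi)}(\boldsymbol{c},r')$ for some nonzero $\boldsymbol{c}$ would, after splitting $\boldsymbol{c}$ along its top block and using Lemma \ref{ball}, yield a codeword of $(P,\pi,w)$-weight at most $m_w+(d_{(P,\pi)}(C)-1)M_w$, contradicting the hypothesis; thus $\rho(C)\ge r'>(d_{(P,\pi)}(C)-1)M_w$.

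The main obstacle is the equality direction, namely making the overlap/non-overlap dichotomy at radius $r'$ rigorous. One must control precisely how the top-block weight can be redistributed under a split $\boldsymbol{c}=\boldsymbol{v}+(\boldsymbol{c}-\boldsymbol{v})$ --- that is, understand the least possible value of $\max\{W_k(\boldsymbol{v_k}),W_k(\boldsymbol{c_k}-\boldsymbol{v_k})\}$ --- and confirm that the threshold at which two balls first intersect is governed exactly by the existence of a minimum-block-distance codeword whose top block realizes the symbol weight $m_w$. Checking that this threshold equals $(d_{(P,\pi)}(C)-1)M_w$ rather than a strictly larger value is the delicate point, and it is here that the hypothesis tying $d_{w,(P,\pi)}(C)$ to $m_w$ is used.
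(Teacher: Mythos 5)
Your opening reduction and the proof of the lower bound $\rho(C)\geq\left(d_{(P,\pi)}(C)-1\right)M_w$ are correct and follow exactly the route the paper intends: Lemma \ref{ball} with parameters $s=M_w$ and $i=d_{(P,\pi)}(C)-2$ identifies $B_{w,(P,\pi)}\left(\boldsymbol{c},\left(d_{(P,\pi)}(C)-1\right)M_w\right)$ with the NRT ball $B_{(P,\pi)}\left(\boldsymbol{c},d_{(P,\pi)}(C)-1\right)$, and pairwise disjointness is inherited from Theorem 5 of [\ref{panek}]. The paper itself offers no more detail than ``by Lemma \ref{ball}'', so on this half you have actually supplied the missing argument.

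The equality characterization, however, has a genuine gap, and it sits exactly where you flagged it. Two concrete problems. First, in the forward direction your witness $\boldsymbol{v}=\boldsymbol{c}-\boldsymbol{c_k}$ satisfies $d_{w,(P,\pi)}(\boldsymbol{c},\boldsymbol{v})=\overline{\omega}_{w,(P,\pi)}(\boldsymbol{c_k})=m_w+\left(d_{(P,\pi)}(C)-1\right)M_w$, which exceeds your test radius $r'=\left(d_{(P,\pi)}(C)-1\right)M_w+1$ whenever $m_w\geq 2$; so $\boldsymbol{v}$ need not lie in $B_{w,(P,\pi)}(\boldsymbol{c},r')$ and the two balls you exhibit need not meet. (Indeed, for $m_w\geq 2$ every pair of balls of radius $r'$ remains disjoint, so $\rho(C)>\left(d_{(P,\pi)}(C)-1\right)M_w$ no matter what $d_{w,(P,\pi)}(C)$ is; the equivalence can only be salvaged under the normalization $m_w=1$, which the statement does not impose.) Second, and more seriously, the converse implication ``balls of radius $r'$ meet $\Rightarrow$ some codeword has weight $m_w+\left(d_{(P,\pi)}(C)-1\right)M_w$'' is not established and cannot be established by your splitting argument. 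From $\boldsymbol{v}\in B_{w,(P,\pi)}(\boldsymbol{0},r')\cap B_{w,(P,\pi)}(\boldsymbol{c},r')$ with $m_w=1$ one only extracts a decomposition $\boldsymbol{c_d}=\boldsymbol{v_d}+(\boldsymbol{c}-\boldsymbol{v})_d$ in which every symbol of both summands has weight at most $1$; this bounds $W_d(\boldsymbol{c})$ by $2$, not by $m_w$. The quantity that actually governs the first radius at which two balls meet is the best split of the top block, $\min\{\max\left(W_d(\boldsymbol{a}),W_d(\boldsymbol{b})\right):\boldsymbol{a}+\boldsymbol{b}=\boldsymbol{c_d},\ \boldsymbol{a}\neq\boldsymbol{0},\ \boldsymbol{b}\neq\boldsymbol{0}\}$, and this can equal $m_w$ while $W_d(\boldsymbol{c})>m_w$ for every relevant codeword: take the Lee weight on $\mathbb{F}_5$, a single block of size $2$, and the code generated by $(1,2)$ --- every nonzero codeword has top-block weight $2$, yet $(1,1)$ lies in both $B_{w,(P,\pi)}((0,0),1)$ and $B_{w,(P,\pi)}((1,2),1)$. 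So the ``only if'' direction does not follow from your sketch; a correct treatment would need either extra hypotheses on $w$ or a right-hand condition phrased in terms of splittings of the top block rather than in terms of $d_{w,(P,\pi)}(C)$ alone.
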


Let $C$ be a  $(P,\pi,w)$-code, denote by $C_i=\{\boldsymbol{u_i}:\boldsymbol{u}\in C\}$ for $i\in[s]$.

\begin{theorem}
Let $C$ be a linear $(P,\pi,w)$-code. Set
$$r=\left\{
                             \begin{array}{ll}
                             s &\text{if}\ C_s\neq \mathbb{F}_q^{k_s};\\[2mm]
                             \min\left\{l: (C_{l+1},\ldots,C_s)= \mathbb{F}_q^{k_{l+1}}\oplus\cdots\oplus\mathbb{F}_q^{k_s}\right\} &\text{otherwise}.
                             \end{array}
                           \right.$$
Then
$$(r-1)M_w<\tilde{\rho}(C)\leq rM_w.$$
\end{theorem}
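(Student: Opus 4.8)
The plan is to combine the coset-leader description of the covering radius for a linear code (Remark 3.x), namely
$$\tilde{\rho}(C)=\max_{\boldsymbol{v}\in V}\ \min_{\boldsymbol{u}\in C}\ \overline{\omega}_{w,(P,\pi)}(\boldsymbol{u}-\boldsymbol{v}),$$
with the explicit shape of the weight for the chain $1<\cdots<s$. For such $P$ the ideal generated by $supp_\pi^P(\boldsymbol{x})$ is the initial segment $\{1,\dots,m\}$ with $m=\max\{i:\boldsymbol{x_i}\neq\boldsymbol{0}\}$, whose unique maximal element is $m$; hence $\overline{\omega}_{w,(P,\pi)}(\boldsymbol{x})=W_m(\boldsymbol{x})+(m-1)M_w$ with $m_w\le W_m(\boldsymbol{x})\le M_w$ (and weight $0$ when $\boldsymbol{x}=\boldsymbol{0}$). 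From this I read off the two facts that drive the proof: (i) $\max\bigl(supp_\pi^P(\boldsymbol{x})\bigr)\le r$ forces $\overline{\omega}_{w,(P,\pi)}(\boldsymbol{x})\le rM_w$; and (ii) $\overline{\omega}_{w,(P,\pi)}(\boldsymbol{x})>(r-1)M_w$ holds precisely when $\boldsymbol{x}$ has a nonzero block in some position $\ge r$, in which case the weight is in fact at least $(r-1)M_w+m_w$.

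First I would fix the reading of $r$: writing $(C_{l+1},\dots,C_s)$ for the image of $C$ under the projection $V\to\bigoplus_{i=l+1}^{s}V_i$, the integer $r$ is the least $l$ for which this projection is all of $\bigoplus_{i>l}V_i$. The branch $C_s\neq\mathbb{F}_q^{k_s}$ is subsumed here, since then no proper tail projection is surjective and the empty tail forces $r=s$. Thus, by minimality, the projection onto blocks $r+1,\dots,s$ is surjective, whereas for $r\ge 1$ the projection onto blocks $r,\dots,s$ is a \emph{proper} subspace of $\bigoplus_{i\ge r}V_i$. These two surjectivity/non-surjectivity statements are exactly what the upper and lower bounds consume.

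For the upper bound $\tilde{\rho}(C)\le rM_w$, fix $\boldsymbol{v}\in V$. Surjectivity of the tail projection onto blocks $r+1,\dots,s$ yields a codeword $\boldsymbol{u}\in C$ with $\boldsymbol{u_i}=\boldsymbol{v_i}$ for all $i>r$. Then $\boldsymbol{u}-\boldsymbol{v}$ is supported on $\{1,\dots,r\}$, so by (i) $\overline{\omega}_{w,(P,\pi)}(\boldsymbol{u}-\boldsymbol{v})\le rM_w$, whence $\min_{\boldsymbol{u}\in C}\overline{\omega}_{w,(P,\pi)}(\boldsymbol{u}-\boldsymbol{v})\le rM_w$ for every $\boldsymbol{v}$. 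For the lower bound $\tilde{\rho}(C)>(r-1)M_w$ (with $r\ge 1$) I would exhibit one uniformly heavy coset: since the projection onto blocks $r,\dots,s$ is proper, choose $\boldsymbol{v}$ whose $(r,\dots,s)$-part lies outside that image. Then for every $\boldsymbol{u}\in C$ the difference $\boldsymbol{u}-\boldsymbol{v}$ must have a nonzero block in some position $\ge r$ (otherwise the $(r,\dots,s)$-part of $\boldsymbol{v}$ would coincide with that of $\boldsymbol{u}\in C$, contradicting the choice), so by (ii) $\overline{\omega}_{w,(P,\pi)}(\boldsymbol{u}-\boldsymbol{v})\ge(r-1)M_w+m_w>(r-1)M_w$; taking the minimum over $\boldsymbol{u}$ and then the maximum over $\boldsymbol{v}$ gives the strict inequality. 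The degenerate case $r=0$ (that is, $C=V$) makes both inequalities trivial.

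I expect the only genuine obstacle to be pinning down the meaning of the symbol $(C_{l+1},\dots,C_s)$ and verifying that the two-branch definition of $r$ agrees with ``least $l$ with surjective tail projection''; the projection reading (rather than a coordinatewise product $C_{l+1}\times\cdots\times C_s$) is forced by the upper-bound step, which needs a \emph{single} codeword matching $\boldsymbol{v}$ on the top blocks. Once this dictionary is set, both bounds follow immediately from the chain weight formula and a one-codeword matching argument, so the remaining work is purely the bookkeeping of the weight estimates (i) and (ii).
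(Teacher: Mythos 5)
Your proposal is correct and follows essentially the same two-step strategy as the paper: the upper bound comes from surjectivity of the tail projection onto blocks $r+1,\dots,s$ (match $\boldsymbol{v}$ there by a single codeword, so the difference is supported on $\{1,\dots,r\}$ and has weight at most $rM_w$), and the lower bound comes from exhibiting a vector whose distance to every codeword is at least $(r-1)M_w+m_w$. The only substantive difference is in the choice of witness for the lower bound. The paper takes $\boldsymbol{v}$ with $\boldsymbol{v_r}\in\mathbb{F}_q^{k_r}\setminus C_r$, which tacitly assumes the single-block projection $C_r$ is proper; this does not follow from the minimality of $r$, which only guarantees that the \emph{joint} projection onto blocks $r,\dots,s$ is a proper subspace of $\bigoplus_{i\geq r}V_i$ (e.g.\ a ``diagonal'' code can have every $C_i=\mathbb{F}_q^{k_i}$ while the joint projection is proper). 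You instead choose $\boldsymbol{v}$ whose $(r,\dots,s)$-part lies outside that joint image, which forces every difference $\boldsymbol{u}-\boldsymbol{v}$ to have a nonzero block in some position $\geq r$ and hence weight at least $(r-1)M_w+m_w$. This is the more robust choice and in fact repairs a small gap in the paper's own argument; otherwise the two proofs coincide, including the observation that the branch $C_s\neq\mathbb{F}_q^{k_s}$ is just the case $r=s$ of the same computation.
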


\begin{proof}
Suppose that $C_s\neq \mathbb{F}_q^{k_s}$. Then there exists $\boldsymbol{v}\in\mathbb{F}_q^n$ such that $\boldsymbol{v_s}\in\mathbb{F}_q^{k_s}\setminus C_s$. Then $(s-1)M_w+m_w\leq d_{w,(P,\pi)}(\boldsymbol{v},\boldsymbol{u})$ for any $\boldsymbol{u}\in C$ which implies that $(s-1)M_w<\tilde{\rho}(C)\leq sM_w$.

Suppose that $r=\min\left\{l: (C_{l+1},\ldots,C_s)= \mathbb{F}_q^{k_{l+1}}\oplus\cdots\oplus\mathbb{F}_q^{k_s}\right\}$. Take $\boldsymbol{v}\in\mathbb{F}_q^n$ such that $\boldsymbol{v_r}\in\mathbb{F}_q^{k_r}\setminus C_r$, then $d_{w,(P,\pi)}(\boldsymbol{v},\boldsymbol{u})\geq (r-1)M_w+m_w$ which implies that $\tilde{\rho}(C)\geq (r-1)M_w+m_w.$ On the other hand, for any $\boldsymbol{v}=(\boldsymbol{v_1},\ldots,\boldsymbol{v_r},\boldsymbol{v_{r+1}}, \ldots,\boldsymbol{v_s})\in\mathbb{F}_q^n$, there exists $\boldsymbol{u}\in C$ such that $\boldsymbol{u}=(\boldsymbol{u_1},\ldots,\boldsymbol{u_r},\boldsymbol{v_{r+1}}, \ldots,\boldsymbol{v_s})$. Therefore
$$d_{w,(P,\pi)}(\boldsymbol{u},\boldsymbol{v})= \overline{\omega}_{w,(P,\pi)}(\boldsymbol{u_1}-\boldsymbol{v_1},\ldots, \boldsymbol{u_l}-\boldsymbol{v_l},\boldsymbol{0},\ldots,\boldsymbol{0})\leq rM_w.$$
\end{proof}

\section{Code Constructions}

\quad\; In this section, we give several different ways to construct new $(P,\pi,w)$-codes from given.

\subsection{Construction 1}

\quad\;Let $P,Q$ be two posets and let $w$ be a weight on $\mathbb{F}_q$. Let $\pi_1: [s]\rightarrow\mathbb{N}$ such that $n_1=\sum\limits_{i=1}^s\pi_1(i)$ be a labeling of $P$ and let $\pi_2: [t]\rightarrow\mathbb{N}$ such that $\sum\limits_{i=1}^{t}\pi_2(i)=n_2$ be a labeling of $Q$.  Let $C_1\subseteq \left(\mathbb{F}_q^{n_1},d_{w,(P,\pi_1)}\right)$ be a $(P,\pi,w)$-code and $C_2\subseteq \left(\mathbb{F}_q^{n_2},d_{w,(Q,\pi_2)}\right)$ be an $(Q,\pi_2,w)$-code.   The \emph{direct sum} of $C_1$ and $C_2$ is defined as
$$\mathcal {C}=C_1\oplus C_2=\left\{\left(\boldsymbol{u'},\boldsymbol{u''}\right):\boldsymbol{u'}\in C_1,\ \boldsymbol{u''}\in C_2\right\}.$$
Define the \emph{direct sum of labeling map} $\pi_1$ and $\pi_2$ as $\pi=\pi_1\oplus\pi_2:[s+t]\rightarrow \mathbb{N}$ such that
$$\pi(i)=\left\{
                             \begin{array}{ll}
                             \pi_1(i) &\text{if}\ i\leq s;\\[2mm]
                             \pi_2(i-s)&\text{if}\ i>s.
                             \end{array}
                           \right.$$
Suppose that $\mathcal {L}=P\uplus Q$ (or $\mathcal {L}=P\oplus Q$).
For $\boldsymbol{u}=\left(\boldsymbol{u'},\boldsymbol{u''}\right)\in \mathcal {C}$ where $\boldsymbol{u'}\in C_1$ and $\boldsymbol{u''}\in C_2$, define
$$W_i^{\mathcal {L}}(\boldsymbol{u})=\left\{
                             \begin{array}{ll}
                             W_i^P\left(\boldsymbol{u'}\right) &\text{if}\ i\leq s;\\[2mm]
                            W_{i-s}^Q\left(\boldsymbol{u''}\right) &\text{if}\ i>s.
                             \end{array}
                           \right.$$

With notations introduced above, we obtain the following result.
\begin{proposition}\label{DIRECT}
(1) The code $\mathcal {C}\subseteq\left(\mathbb{F}_q^{n_1+n_2},d_{w,(P\uplus Q,\pi_1\oplus\pi_2)}\right)$ is a $\left(P\uplus Q,\pi_1\oplus\pi_2,w\right)$-code such that
$$d_{w,(P\uplus Q,\pi_1\oplus\pi_2)}(\mathcal {C})=\min\left\{d_{w,(P,\pi_1)}(C_1),d_{w,(Q,\pi_2)}(C_2)\right\}.$$
(2) The code $\mathcal {C}\subseteq\left(\mathbb{F}_q^{n_1+n_2},d_{w,(P\oplus Q,\pi_1\oplus\pi_2)}\right)$ is a $\left(P\oplus Q,\pi_1\oplus\pi_2,w\right)$-code such that
$$d_{w,(P\oplus Q,\pi_1\oplus\pi_2)}(\mathcal {C})=d_{w,(P,\pi_1)}(C_1).$$
\end{proposition}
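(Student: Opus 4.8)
The plan is to reduce everything to a single structural fact: how the $(\mathcal{L},\pi)$-weight of a concatenated vector $\boldsymbol{u}=(\boldsymbol{u'},\boldsymbol{u''})$ decomposes in terms of the $(P,\pi_1)$-weight of $\boldsymbol{u'}$ and the $(Q,\pi_2)$-weight of $\boldsymbol{u''}$. Since both codes carry the same alphabet weight $w$, and the $(\mathcal{L},\pi,w)$-distance between two codewords is by definition the weight of their coordinatewise difference, it suffices to analyze the weight of an arbitrary $\boldsymbol{u}=(\boldsymbol{u'},\boldsymbol{u''})$ and then minimize over nonzero differences. Throughout I would identify the ground set of $\mathcal{L}$ with $[s+t]$, where $i\in[s]$ indexes the $P$-blocks of $\boldsymbol{u'}$ and $s+j$ (with $j\in[t]$) indexes the $Q$-blocks of $\boldsymbol{u''}$, so that $supp_\pi^{\mathcal{L}}(\boldsymbol{u})=supp_{\pi_1}^P(\boldsymbol{u'})\cup\bigl(s+supp_{\pi_2}^Q(\boldsymbol{u''})\bigr)$. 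I would also record the elementary bound $\overline{\omega}_{w,(P,\pi_1)}(\boldsymbol{x})\le sM_w$ for every $\boldsymbol{x}$ (each of the at most $s$ ideal elements contributes at most $M_w$), which will be decisive in part (2).

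For part (1), the key step is additivity. In $P\uplus Q$ no element of $[s]$ is comparable to any element of $s+[t]$, so the ideal generated by $supp_\pi^{\mathcal{L}}(\boldsymbol{u})$ is the disjoint union $I_{\boldsymbol{u'}}^P\cup(s+I_{\boldsymbol{u''}}^Q)$, and likewise its maximal elements split as $M_{\boldsymbol{u'}}^P\cup(s+M_{\boldsymbol{u''}}^Q)$. Feeding this into the definition of the weight, together with $W_i^{\mathcal{L}}(\boldsymbol{u})=W_i^P(\boldsymbol{u'})$ for $i\le s$ and $W_i^{\mathcal{L}}(\boldsymbol{u})=W_{i-s}^Q(\boldsymbol{u''})$ for $i>s$, the two defining sums split block-by-block and yield $\overline{\omega}_{w,(\mathcal{L},\pi)}(\boldsymbol{u})=\overline{\omega}_{w,(P,\pi_1)}(\boldsymbol{u'})+\overline{\omega}_{w,(Q,\pi_2)}(\boldsymbol{u''})$. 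Consequently distances add. The minimum-distance claim then follows by a two-sided estimate: choosing a pair of codewords that differ only in the $C_1$-coordinate realizes $d_{w,(P,\pi_1)}(C_1)$ (and symmetrically for $C_2$), giving the bound $\le\min$; conversely any two distinct codewords differ in at least one coordinate, and since both summands are nonnegative the total distance is at least $\min\{d_{w,(P,\pi_1)}(C_1),d_{w,(Q,\pi_2)}(C_2)\}$, giving the reverse bound.

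For part (2) the weight no longer adds, and computing the ideal together with its maximal set under the linear sum $P\oplus Q$ is the crux. Here every element of $s+[t]$ dominates every element of $[s]$, so I would split into two cases according to whether $\boldsymbol{u''}=\boldsymbol{0}$. If $\boldsymbol{u''}=\boldsymbol{0}$, no $Q$-index lies in the support, hence $I_{\boldsymbol{u}}^{\mathcal{L}}=I_{\boldsymbol{u'}}^P$ and $M_{\boldsymbol{u}}^{\mathcal{L}}=M_{\boldsymbol{u'}}^P$, whence $\overline{\omega}_{w,(\mathcal{L},\pi)}(\boldsymbol{u})=\overline{\omega}_{w,(P,\pi_1)}(\boldsymbol{u'})$. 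If $\boldsymbol{u''}\neq\boldsymbol{0}$, the presence of a $Q$-index forces all of $[s]$ into the ideal, so $I_{\boldsymbol{u}}^{\mathcal{L}}=[s]\cup(s+I_{\boldsymbol{u''}}^Q)$; moreover no element of $[s]$ can be maximal, since each is dominated by the nonempty $Q$-part, so $M_{\boldsymbol{u}}^{\mathcal{L}}=s+M_{\boldsymbol{u''}}^Q$. Substituting, the $s$ non-maximal $P$-indices each contribute $M_w$ while the $Q$-indices reproduce the $(Q,\pi_2)$-weight, giving $\overline{\omega}_{w,(\mathcal{L},\pi)}(\boldsymbol{u})=sM_w+\overline{\omega}_{w,(Q,\pi_2)}(\boldsymbol{u''})$. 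This maximal-element computation in the second case is the delicate point and the step I would write out most carefully, since it is exactly the fact that the linear sum hides all of $P$ beneath any occupied $Q$-coordinate.

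Finally, to extract the minimum distance in part (2) I would minimize the weight of a difference $(\boldsymbol{u'}-\boldsymbol{v'},\boldsymbol{u''}-\boldsymbol{v''})$ over distinct codewords. When the $C_2$-part of the difference vanishes the weight equals $\overline{\omega}_{w,(P,\pi_1)}(\boldsymbol{u'}-\boldsymbol{v'})$, and minimizing over $\boldsymbol{u'}\neq\boldsymbol{v'}$ in $C_1$ returns exactly $d_{w,(P,\pi_1)}(C_1)$; when the $C_2$-part is nonzero the weight equals $sM_w+\overline{\omega}_{w,(Q,\pi_2)}(\boldsymbol{u''}-\boldsymbol{v''})\ge sM_w+m_w$. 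Comparing the two regimes through the bound $d_{w,(P,\pi_1)}(C_1)\le sM_w<sM_w+m_w$ (using $m_w>0$), the minimum is always attained in the first regime, so $d_{w,(\mathcal{L},\pi)}(\mathcal{C})=d_{w,(P,\pi_1)}(C_1)$. Both assertions tacitly assume $|C_1|,|C_2|\ge 2$, so that the relevant code distances are defined.
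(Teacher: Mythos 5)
Your proposal is correct and follows essentially the same route as the paper: decompose $I_{\boldsymbol{u-v}}^{\mathcal{L}}$ and $M_{\boldsymbol{u-v}}^{\mathcal{L}}$ under the disjoint union (giving additivity of the weight) and under the linear sum (giving the two-case formula $\overline{\omega}_{w,(P,\pi_1)}(\boldsymbol{u'}-\boldsymbol{v'})$ versus $sM_w+\overline{\omega}_{w,(Q,\pi_2)}(\boldsymbol{u''}-\boldsymbol{v''})$), then minimize. If anything, your final comparison $d_{w,(P,\pi_1)}(C_1)\leq sM_w<sM_w+m_w$ makes explicit a step the paper leaves to the reader, and the remark that $|C_1|,|C_2|\geq 2$ is tacitly needed is a fair addition.
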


\begin{proof}
Take $\boldsymbol{u}=(\boldsymbol{u'},\boldsymbol{u''})$ and $\boldsymbol{v}=(\boldsymbol{v',v''})$ where $\boldsymbol{u'}, \boldsymbol{v'}\in C_1$ and $\boldsymbol{u''}, \boldsymbol{v''}\in C_2$ respectively. It follows from the definition of weighted poset block weight that
$$\overline{\omega}_{w,(\mathcal {L},\pi)}(\boldsymbol{u}-\boldsymbol{v})=\sum\limits_{i\in M_{\boldsymbol{u-v}}^{\mathcal {L}}}W_i^{\mathcal {L}}(\boldsymbol{u}-\boldsymbol{v})+\sum\limits_{i\in I_{\boldsymbol{u-v}}^{\mathcal {L}}\setminus M_{\boldsymbol{u-v}}^{\mathcal {L}}} M_w.$$

Consider the weighted poset block metric under the poset $\mathcal {L}=P\uplus Q$, we have $I_{\boldsymbol{u-v}}^{\mathcal {L}}=I_{\boldsymbol{u'}-\boldsymbol{v'}}^{P}\cup I_{\boldsymbol{u''-v''}}^{Q}$ and $ M_{\boldsymbol{u-v}}^{\mathcal {L}}=M_{\boldsymbol{u'}-\boldsymbol{v'}}^{P}\cup M_{\boldsymbol{u''-v''}}^{Q}$.
Hence
\begin{eqnarray*}
\overline{\omega}_{w,(\mathcal {L},\pi)}(\boldsymbol{u}-\boldsymbol{v})
&=&\sum\limits_{i\in I_{\boldsymbol{u'}-\boldsymbol{v'}}^{P}} W_i^P\left(\boldsymbol{u'}-\boldsymbol{v'}\right)+ \sum\limits_{i\in I_{\boldsymbol{u''}-\boldsymbol{v''}}^{P}} W_i^Q\left(\boldsymbol{u''}-\boldsymbol{v''}\right)\\
&&+\sum\limits_{i\in I_{\boldsymbol{u'-v'}}^{P}\setminus M_{\boldsymbol{u'-v'}}^P}M_w+\sum\limits_{i\in I_{\boldsymbol{u''-v''}}^{Q}\setminus M_{\boldsymbol{u''-v''}}^Q}M_w\\[2mm]
&=&\overline{\omega}_{w,(P,\pi_1)}\left(\boldsymbol{u'}-\boldsymbol{v'}\right)+ \overline{\omega}_{w,(Q,\pi_2)}\left(\boldsymbol{u''}-\boldsymbol{v''}\right).
\end{eqnarray*}

If $\mathcal {L}=P\oplus Q$, then
$$\begin{array}{ccc}
\hline
    & \boldsymbol{u''}\neq\boldsymbol{v''} & \boldsymbol{u''}=\boldsymbol{v''}  \\[2mm]
  \hline
  I_{\boldsymbol{u-v}}^{\mathcal {L}} & P\cup I_{\boldsymbol{u''-v''}}^{\mathcal {Q}} & I_{\boldsymbol{u'-v'}}^{\mathcal {P}} \\[2mm]
  M_{\boldsymbol{u-v}}^{\mathcal {L}} & M_{\boldsymbol{u''-v''}}^{Q} & M_{\boldsymbol{u'-v'}}^{P}.\\
  \hline
\end{array}$$
Hence
$$\overline{\omega}_{w,(\mathcal {L},\pi)}(\boldsymbol{u}-\boldsymbol{v})=\left\{
                             \begin{array}{ll}
                             sM_w+\overline{\omega}_{w,(Q,\pi_2)} \left(\boldsymbol{u''}-\boldsymbol{v''}\right) &\text{if}\ \boldsymbol{u''}\neq\boldsymbol{v''};\\[2mm]
                            \overline{\omega}_{w,(P,\pi_1)} \left(\boldsymbol{u'}-\boldsymbol{v'}\right)  &\text{if}\ \boldsymbol{u''}=\boldsymbol{v''}.
                             \end{array}
                           \right.$$
The result then follows.
\end{proof}

\begin{lemma}\label{coset}
Suppose that $C_1$ is a linear $(P,\pi_1,w)$-code and $C_2$ is a linear $(Q,\pi_2,w)$-code. Let $\boldsymbol{u'},\boldsymbol{u''}$ be a coset leader of $C_1,C_2$ respectively. Then
\begin{enumerate}[(1)]
\item
 $\boldsymbol{u}=\left(\boldsymbol{u'},\boldsymbol{u''}\right)\in\mathcal {C}$ is a coset leader of $\mathcal {C}$ when $\mathcal {L}=P\uplus Q$.
\item
 $\boldsymbol{u}=\left(\boldsymbol{u'},\boldsymbol{u''}\right)\in\mathcal {C}$ is a coset leader of $\mathcal {C}$ when $\mathcal {L}=P\oplus Q$.
\end{enumerate}
\end{lemma}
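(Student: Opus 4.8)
The plan is to reduce everything to the weight decompositions already established in Proposition \ref{DIRECT}. The computation carried out there for a difference $\boldsymbol{u}-\boldsymbol{v}$ of codewords in fact uses only that the block support, the generated ideal $I^{\mathcal{L}}$ and its maximal elements $M^{\mathcal{L}}$ split across $P$ and $Q$; hence it is valid for an \emph{arbitrary} vector $\boldsymbol{x}=(\boldsymbol{x'},\boldsymbol{x''})\in\mathbb{F}_q^{n_1+n_2}$, giving $\overline{\omega}_{w,(P\uplus Q,\pi)}(\boldsymbol{x})=\overline{\omega}_{w,(P,\pi_1)}(\boldsymbol{x'})+\overline{\omega}_{w,(Q,\pi_2)}(\boldsymbol{x''})$ in the disjoint union case, and the corresponding two-branch formula in the linear sum case. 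First I would observe that the coset of $\mathcal{C}$ determined by $\boldsymbol{u}=(\boldsymbol{u'},\boldsymbol{u''})$ is precisely $\{(\boldsymbol{u'}+\boldsymbol{c_1},\boldsymbol{u''}+\boldsymbol{c_2}):\boldsymbol{c_1}\in C_1,\ \boldsymbol{c_2}\in C_2\}$, so that showing $\boldsymbol{u}$ is a coset leader means comparing $\overline{\omega}_{w,(\mathcal{L},\pi)}(\boldsymbol{u})$ with the weight of each such representative.

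For part (1) this is immediate from additivity: since $\boldsymbol{u'}$ and $\boldsymbol{u''}$ are coset leaders, $\overline{\omega}_{w,(P,\pi_1)}(\boldsymbol{u'})\leq\overline{\omega}_{w,(P,\pi_1)}(\boldsymbol{u'}+\boldsymbol{c_1})$ and $\overline{\omega}_{w,(Q,\pi_2)}(\boldsymbol{u''})\leq\overline{\omega}_{w,(Q,\pi_2)}(\boldsymbol{u''}+\boldsymbol{c_2})$ for all $\boldsymbol{c_1}\in C_1$, $\boldsymbol{c_2}\in C_2$. Summing and applying the decomposition gives $\overline{\omega}_{w,(\mathcal{L},\pi)}(\boldsymbol{u})\leq\overline{\omega}_{w,(\mathcal{L},\pi)}(\boldsymbol{u'}+\boldsymbol{c_1},\boldsymbol{u''}+\boldsymbol{c_2})$, so $\boldsymbol{u}$ has minimal weight in its coset.

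For part (2) I would split according to whether the coset $\boldsymbol{u''}+C_2$ is trivial, which is equivalent to whether $\boldsymbol{u''}=\boldsymbol{0}$ (a coset leader of the trivial coset $C_2$ must be $\boldsymbol{0}$). If $\boldsymbol{u''}\neq\boldsymbol{0}$, then $\boldsymbol{u''}+\boldsymbol{c_2}\neq\boldsymbol{0}$ for every $\boldsymbol{c_2}\in C_2$, so by the first branch every representative has weight $sM_w+\overline{\omega}_{w,(Q,\pi_2)}(\boldsymbol{u''}+\boldsymbol{c_2})$, and minimality follows directly from $\boldsymbol{u''}$ being a coset leader of $C_2$. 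If $\boldsymbol{u''}=\boldsymbol{0}$, then $\overline{\omega}_{w,(\mathcal{L},\pi)}(\boldsymbol{u})=\overline{\omega}_{w,(P,\pi_1)}(\boldsymbol{u'})$, and a representative $(\boldsymbol{u'}+\boldsymbol{c_1},\boldsymbol{c_2})$ has weight $\overline{\omega}_{w,(P,\pi_1)}(\boldsymbol{u'}+\boldsymbol{c_1})$ when $\boldsymbol{c_2}=\boldsymbol{0}$ and weight $sM_w+\overline{\omega}_{w,(Q,\pi_2)}(\boldsymbol{c_2})\geq sM_w$ when $\boldsymbol{c_2}\neq\boldsymbol{0}$. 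The first subcase is dealt with by $\boldsymbol{u'}$ being a coset leader of $C_1$; the second requires the uniform bound $\overline{\omega}_{w,(P,\pi_1)}(\boldsymbol{u'})\leq sM_w$, which holds because each of the $s$ blocks contributes at most $M_w=\max_{\alpha\in\mathbb{F}_q}w(\alpha)$ to the weight.

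The additivity and monotonicity comparisons are routine; the point that needs care is the linear-sum case with $\boldsymbol{u''}=\boldsymbol{0}$, where one must rule out that activating a nonzero $Q$-component could lower the weight. This is exactly where $\overline{\omega}_{w,(P,\pi_1)}\leq sM_w$ is essential: any representative with nonzero $Q$-part pays the full penalty $sM_w$ coming from all $s$ coordinates of $P$ lying below the first active $Q$-coordinate in the linear sum, and this penalty already dominates the entire $P$-weight of $\boldsymbol{u'}$, so no such representative can beat $\boldsymbol{u}$.
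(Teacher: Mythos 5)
Your proof is correct and follows essentially the same route as the paper: reduce to the weight decomposition of Proposition \ref{DIRECT} (valid for arbitrary vectors, not just codeword differences), use additivity for $P\uplus Q$, and split into cases according to whether the $Q$-component of a representative vanishes for $P\oplus Q$. In fact your treatment of the linear-sum case is slightly more careful than the paper's: when $\boldsymbol{u''}=\boldsymbol{0}$ but a representative has nonzero $Q$-part, the paper's displayed equality $sM_w+\overline{\omega}_{w,(Q,\pi_2)}(\boldsymbol{u''})=\overline{\omega}_{w,(\mathcal{L},\pi)}(\boldsymbol{u})$ does not literally hold, and the comparison genuinely needs the bound $\overline{\omega}_{w,(P,\pi_1)}(\boldsymbol{u'})\leq sM_w$ that you supply explicitly.
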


\begin{proof}
Let $\boldsymbol{v}=\left(\boldsymbol{u'}+\boldsymbol{x'}, \boldsymbol{u''}+\boldsymbol{x''}\right)
\in \boldsymbol{u}+\mathcal {C}$ where $\left(\boldsymbol{x'},\boldsymbol{x''}\right)\in\mathcal {C}$.

If $\mathcal {L}=P\uplus Q$, then
$$\overline{\omega}_{w,(\mathcal {L},\pi)}(\boldsymbol{v})= \overline{\omega}_{w,(P,\pi_1)}\left(\boldsymbol{u'}+\boldsymbol{x'}\right)+ \overline{\omega}_{w,(Q,\pi_2)}\left(\boldsymbol{u''}+\boldsymbol{x''}\right) \geq\overline{\omega}_{w,(P,\pi_1)}\left(\boldsymbol{u'}\right)+ \overline{\omega}_{w,(Q,\pi_2)}\left(\boldsymbol{u''}\right)= \overline{\omega}_{w,(\mathcal {L},\pi)}(\boldsymbol{u})$$
which implies that $\boldsymbol{u}$ is a coset leader of $\boldsymbol{u}+\mathcal {C}$.

If $\mathcal {L}=P\oplus Q$, then we have
$$\begin{array}{ccc}
\hline
    & \boldsymbol{u''}+\boldsymbol{x''}\neq\boldsymbol{0} & \boldsymbol{u''}+\boldsymbol{x''}=\boldsymbol{0}  \\[2mm]
  \hline
  I_{\boldsymbol{v}}^{\mathcal {L}} & P\cup I_{\boldsymbol{u''+x''}}^{Q} & I_{\boldsymbol{u'+x'}}^P \\[2mm]
  M_{\boldsymbol{v}}^{\mathcal {L}} & M_{\boldsymbol{u''+x''}}^{Q} & M_{\boldsymbol{u'+x'}}^{P}.\\
  \hline
\end{array}$$
Hence
$$\overline{\omega}_{w,(\mathcal {L},\pi)}(\boldsymbol{v})=\left\{
                             \begin{array}{ll}
                             sM_w+\overline{\omega}_{w,(Q,\pi_2)} \left(\boldsymbol{u''}+\boldsymbol{x''}\right)\geq s M_w+\overline{\omega}_{w,(Q,\pi_2)} \left(\boldsymbol{u''}\right)= \overline{\omega}_{w,(\mathcal {L},\pi)}(\boldsymbol{u}) &\text{if}\ \boldsymbol{u^{''}}+\boldsymbol{v^{''}}\neq\boldsymbol{0};
                             \\[3mm]
                            \overline{\omega}_{w,(P,\pi_1)} \left(\boldsymbol{u'}-\boldsymbol{v'}\right)\geq  \overline{\omega}_{w,(P,\pi_1)}\left(\boldsymbol{u'}\right) &\text{if}\ \boldsymbol{u''}+\boldsymbol{x''}=\boldsymbol{0}.
                             \end{array}
                           \right.$$
Note that $\boldsymbol{u''}+\boldsymbol{x''}=\boldsymbol{0}$ implies that $\boldsymbol{u''}=-\boldsymbol{x''}\in C_2$ and hence $\boldsymbol{u''}=\boldsymbol{0}$. Therefore
$$\overline{\omega}_{w,(\mathcal {L},\pi)}(\boldsymbol{v})\geq\overline{\omega}_{w,(P,\pi_1)} \left(\boldsymbol{u'}\right)=\overline{\omega}_{w,(\mathcal {L},\pi)}(\boldsymbol{u}).$$
The result then follows.
\end{proof}

\begin{theorem}
Let $C_1$ be a linear $(P,\pi_1,w)$-code and $C_2$ be a linear $(Q,\pi_2,w)$-code. Then
\begin{enumerate}[(1)]
\item $\tilde{\rho}(\mathcal {C})=\tilde{\rho}(C_1)+\tilde{\rho}(C_2)$ when $\mathcal {L}=P\uplus Q$.
\item $\tilde{\rho}(\mathcal {C})=sM_w+\tilde{\rho}(C_2)$ when $\mathcal {L}=P\oplus Q$.
\end{enumerate}
\end{theorem}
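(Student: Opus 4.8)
The plan is to compute $\tilde\rho(\mathcal C)$ through the coset description recorded in the remark following the definition of covering radius: for a linear code the covering radius equals the largest weight occurring among its cosets. The starting point is that the cosets of $\mathcal C = C_1\oplus C_2$ in $V_1\oplus V_2$ are precisely the products $(\boldsymbol a+C_1)\times(\boldsymbol b+C_2)$, so the coset $(\boldsymbol a,\boldsymbol b)+\mathcal C$ is determined by, and determines, the pair $(\boldsymbol a+C_1,\boldsymbol b+C_2)$. Hence I only need to evaluate the weight of each such coset in the two poset structures and then maximize over all pairs.

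For part (1), where $\mathcal L = P\uplus Q$, Lemma \ref{coset}(1) guarantees that a coset leader may be taken in product form $(\boldsymbol{u'},\boldsymbol{u''})$ with $\boldsymbol{u'},\boldsymbol{u''}$ coset leaders of the two factors. The additive identity $\overline\omega_{w,(\mathcal L,\pi)}(\boldsymbol{u'},\boldsymbol{u''}) = \overline\omega_{w,(P,\pi_1)}(\boldsymbol{u'})+\overline\omega_{w,(Q,\pi_2)}(\boldsymbol{u''})$ from Proposition \ref{DIRECT} then shows that the weight of the coset is the sum of the weights of the cosets $\boldsymbol a+C_1$ and $\boldsymbol b+C_2$. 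Since these two contributions vary independently over the two families of cosets, maximizing each separately gives $\tilde\rho(\mathcal C)=\tilde\rho(C_1)+\tilde\rho(C_2)$.

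For part (2), where $\mathcal L = P\oplus Q$, I would again take the product coset leader supplied by Lemma \ref{coset}(2), but now apply the piecewise weight formula of Proposition \ref{DIRECT}. Splitting on whether $\boldsymbol b\in C_2$ (equivalently on whether the $C_2$-leader $\boldsymbol{u''}$ equals $\boldsymbol 0$), the weight of the coset $(\boldsymbol a,\boldsymbol b)+\mathcal C$ equals $\overline\omega_{w,(P,\pi_1)}$ evaluated at the $C_1$-leader when $\boldsymbol b\in C_2$, and equals $sM_w+\overline\omega_{w,(Q,\pi_2)}$ evaluated at the $C_2$-leader when $\boldsymbol b\notin C_2$. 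Maximizing over the two families of cosets gives $\tilde\rho(\mathcal C)=\max\{\tilde\rho(C_1),\,sM_w+\tilde\rho(C_2)\}$.

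The one place needing care is showing that the second term dominates. I would use the elementary bound $\tilde\rho(C_1)\le sM_w$, which holds because the largest $(P,\pi_1,w)$-weight attainable in $V_1$ is $sM_w$ (the ideal $I_{\boldsymbol u}^{P}$ has at most $s$ elements, each contributing at most $M_w$). Assuming $C_2\neq V_2$, every nonzero coset of $C_2$ has weight at least $m_w>0$, so $sM_w+\tilde\rho(C_2)>sM_w\ge\tilde\rho(C_1)$ and the maximum is realized in the family with $\boldsymbol b\notin C_2$, yielding $\tilde\rho(\mathcal C)=sM_w+\tilde\rho(C_2)$. The degenerate case $C_2=V_2$, where $\tilde\rho(C_2)=0$ and the quantity instead collapses to $\tilde\rho(C_1)$, should be excluded or treated separately. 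I expect this dominance comparison, rather than the coset bookkeeping, to be the main obstacle.
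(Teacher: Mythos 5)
Your proposal is correct and follows essentially the same route as the paper: both reduce the covering radius to the maximal coset weight, invoke Lemma \ref{coset} to obtain product-form coset leaders, and apply the weight decomposition established in the proof of Proposition \ref{DIRECT}. Your explicit dominance check $\tilde{\rho}(C_1)\le sM_w$ and your flagging of the degenerate case $C_2=V_2$ (where the stated formula for $P\oplus Q$ can fail) are in fact slightly more careful than the paper's argument, which asserts $\overline{\omega}_{w,(\mathcal{L},\pi)}(\boldsymbol{u})=sM_w+\tilde{\rho}(C_2)$ without checking that the maximal coset leader $\boldsymbol{u''}$ is nonzero.
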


\begin{proof}
Let $\mathcal {L}=P\oplus Q$. We first show that $\tilde{\rho}(\mathcal {C})\geq s Mw+\tilde{\rho}(C_2)$. Let $\boldsymbol{u'}$ be a coset leader of $C_1$ such that $\overline{\omega}_{w,(P,\pi_1)}\left(\boldsymbol{u'}\right)=\tilde{\rho}(C_1)$ and let $\boldsymbol{u''}$ be a coset leader of $C_2$ such that $\overline{\omega}_{w,(Q,\pi_2)}\left(\boldsymbol{u''}\right)=\tilde{\rho}(C_2)$. It follows from Lemma \ref{coset} that $\boldsymbol{u}=\left(\boldsymbol{u'},\boldsymbol{u''}\right)$ is a coset leader of $\mathcal {C}$. Then for any $\boldsymbol{c}\in\mathcal {C}$, we have
$$d_{w,(\mathcal {L},\pi)}(\boldsymbol{u},\boldsymbol{c})=\overline{\omega}_{w,(\mathcal {L},\pi)}(\boldsymbol{u}-\boldsymbol{c})\geq\overline{\omega}_{w,(\mathcal {L},\pi)}(\boldsymbol{u})=sM_w+\tilde{\rho}(C_2)$$
which implies that $\tilde{\rho}(\mathcal {C})\geq sM_w+\tilde{\rho}(C_2)$.

Conversely, let $\boldsymbol{z}=\left(\boldsymbol{z'},\boldsymbol{z''}\right)\in \mathbb{F}_q^{n_1+n_2}$. Then $\boldsymbol{z'}=\boldsymbol{u'}+\boldsymbol{x'}\in\boldsymbol{u'}+C_1$ where $\boldsymbol{u'}$ is a coset leader of $C_1$ and $\boldsymbol{z''}=\boldsymbol{u''}+\boldsymbol{x''}\in\boldsymbol{u''}+C_2$ where $\boldsymbol{u''}$ is a coset leader of $C_2$. Denote by $\boldsymbol{u}=\left(\boldsymbol{u'},\boldsymbol{u''}\right)$. Take $\boldsymbol{x}=\left(\boldsymbol{x'},\boldsymbol{x''}\right)\in\mathcal {C}$, we have
$$d_{w,(\mathcal {L},\pi)}(\boldsymbol{z},\boldsymbol{x})=\overline{\omega}_{w,(\mathcal {L},\pi)}(\boldsymbol{z}-\boldsymbol{x})=\overline{\omega}_{w,(\mathcal {L},\pi)}(\boldsymbol{z}-\boldsymbol{x})=\overline{\omega}_{w,(\mathcal {L},\pi)}(\boldsymbol{u})=s M_w+\overline{\omega}_{w,(Q,\pi_2)}\left(\boldsymbol{u''}\right)\leq s M_w+\tilde{\rho}(C_2).$$
Therefore $\min\limits_{\boldsymbol{u}\in\mathcal {C}}d_{w,(\mathcal {L},\pi)}(\boldsymbol{u},\boldsymbol{z})\leq s M_w+\tilde{\rho}(C_2)$ for any $\boldsymbol{z}\in\mathbb{F}_q^{n_1+n_2}$ and hence $\tilde{\rho}(\mathcal {C})\leq sM_w+\tilde{\rho}(C_2)$.

The case for $\mathcal {L}=P\uplus Q$ can be proved in the same way.
\end{proof}

\subsection{Construction 2}

\quad\;Let $C_1\subseteq\left(\mathbb{F}_q^n,d_{w,(P,\pi_1)}\right)$ be a $(P,\pi_1,w)$-code and let $C_2\subseteq\left(\mathbb{F}_q^n,d_{w,(Q,\pi_2)}\right)$ be a $(Q,\pi_2,w)$-code where $n=\sum\limits_{i=1}^{s}\pi_1(i)=\sum\limits_{i=1}^t\pi_2(i)$. Let $\pi$ be the direct sum of labeling $\pi_1$ and $\pi_2$. Let $\mathcal {L}=P\uplus Q$ (or $\mathcal {L}=P\oplus Q$). The \emph{$(\boldsymbol{u'}\ |\ \boldsymbol{u'}+\boldsymbol{u''})$ construction} produces the $(\mathcal {L},\pi,w)$-code
$$\mathcal {C}=\left\{\left(\boldsymbol{u'},\boldsymbol{u'}+\boldsymbol{u''}\right):\boldsymbol{u'}\in C_1, \boldsymbol{u''}\in C_2\right\}.$$

With the notations introduced above, we have the following result.

\begin{proposition}
\begin{enumerate}[(1)]
\item
The code $\mathcal {C}\subseteq \left(\mathbb{F}_q^{2n},d_{w,(P\uplus Q,\pi)}\right)$ is a $(P\uplus Q, \pi,w)$-code such that
$$d_{w,(P\uplus Q,\pi)}(\mathcal {C})\geq\min\left\{d_{w,(P,\pi_1)}(C_1),d_{w,(Q,\pi_2)}(C_2)\right\}$$
or
$$d_{w,(P\uplus Q,\pi)}(\mathcal {C})\geq\min\left\{d_{w,(Q,\pi_2)}(C_2),d_{w,(P,\pi_1)}(C_1)+d_{w,(Q,\pi_2)}(C_1), d_{w,(P,\pi_1)}(C_1)+d_{w,(Q,\pi_2)}(C_1+C_2)\right\}.$$
\item The code $\mathcal {C}\subseteq \left(\mathbb{F}_q^{2n},d_{w,(P\oplus Q,\pi)}\right)$ is a $(P\oplus Q, \pi,w)$-code such that
    $$d_{w,(P\oplus Q,\pi)}(\mathcal {C})\geq d_{(w,(P,\pi_1)}(C_1)$$
     or
    $$d_{w,(P\oplus Q,\pi)}(\mathcal {C})=\min\left\{d_{w,(Q,\pi_2)}(C_2), d_{w,(Q,\pi_2)}(C_1),d_{w,(Q,\pi_2)}(C_1+C_2)\right\}+sM_w$$
(here $C_1+C_2=\left\{\boldsymbol{u'}+\boldsymbol{u''}: \boldsymbol{u'}\in C_1,\boldsymbol{u''}\in C_2\right\}$).
\end{enumerate}
\end{proposition}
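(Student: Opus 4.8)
The plan is to reduce the minimum-distance computation to a minimum-weight computation (assuming, as the companion Proposition~\ref{DIRECT} implicitly does, that $C_1$ and $C_2$ are linear, so that every difference of two codewords of $\mathcal{C}$ is again of the form $(\boldsymbol{a},\boldsymbol{a}+\boldsymbol{b})$ with $\boldsymbol{a}\in C_1$, $\boldsymbol{b}\in C_2$). The whole argument then rests on the two weight formulas already established in Proposition~\ref{DIRECT}: for $\mathcal{L}=P\uplus Q$ one has $\overline{\omega}_{w,(\mathcal{L},\pi)}(\boldsymbol{a},\boldsymbol{a}+\boldsymbol{b})=\overline{\omega}_{w,(P,\pi_1)}(\boldsymbol{a})+\overline{\omega}_{w,(Q,\pi_2)}(\boldsymbol{a}+\boldsymbol{b})$, while for $\mathcal{L}=P\oplus Q$ the weight equals $sM_w+\overline{\omega}_{w,(Q,\pi_2)}(\boldsymbol{a}+\boldsymbol{b})$ when $\boldsymbol{a}+\boldsymbol{b}\neq\boldsymbol{0}$ and equals $\overline{\omega}_{w,(P,\pi_1)}(\boldsymbol{a})$ when $\boldsymbol{a}+\boldsymbol{b}=\boldsymbol{0}$. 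Once $(\boldsymbol{a},\boldsymbol{a}+\boldsymbol{b})$ is substituted into these, everything becomes a case analysis on which of $\boldsymbol{a}$, $\boldsymbol{b}$, $\boldsymbol{a}+\boldsymbol{b}$ vanish.

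For part (1) ($\mathcal{L}=P\uplus Q$) I would split the nonzero codewords into three families. If $\boldsymbol{a}=\boldsymbol{0}$ (so $\boldsymbol{b}\neq\boldsymbol{0}$) the weight is $\overline{\omega}_{w,(Q,\pi_2)}(\boldsymbol{b})\ge d_{w,(Q,\pi_2)}(C_2)$; if $\boldsymbol{a}\neq\boldsymbol{0}$ I discard the nonnegative $Q$-term and keep only $\overline{\omega}_{w,(P,\pi_1)}(\boldsymbol{a})\ge d_{w,(P,\pi_1)}(C_1)$, and combining the two cases yields the crude bound $\min\{d_{w,(P,\pi_1)}(C_1),d_{w,(Q,\pi_2)}(C_2)\}$. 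For the sharper three-term bound I retain both summands: the family $\boldsymbol{b}=\boldsymbol{0}$, $\boldsymbol{a}\neq\boldsymbol{0}$ gives the codeword $(\boldsymbol{a},\boldsymbol{a})$ of weight $\overline{\omega}_{w,(P,\pi_1)}(\boldsymbol{a})+\overline{\omega}_{w,(Q,\pi_2)}(\boldsymbol{a})\ge d_{w,(P,\pi_1)}(C_1)+d_{w,(Q,\pi_2)}(C_1)$ — here I measure $\boldsymbol{a}\in C_1$ in the $Q$-metric, which is what explains the otherwise odd term $d_{w,(Q,\pi_2)}(C_1)$ — while the family $\boldsymbol{a},\boldsymbol{b}\neq\boldsymbol{0}$ with $\boldsymbol{a}+\boldsymbol{b}\neq\boldsymbol{0}$ gives weight $\ge d_{w,(P,\pi_1)}(C_1)+d_{w,(Q,\pi_2)}(C_1+C_2)$ since $\boldsymbol{a}+\boldsymbol{b}$ is then a nonzero element of $C_1+C_2$. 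Taking the minimum over the three families produces the stated second bound.

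For part (2) ($\mathcal{L}=P\oplus Q$) the linear-sum formula makes a non-vanishing $Q$-part dominate: every codeword with $\boldsymbol{a}+\boldsymbol{b}\neq\boldsymbol{0}$ has weight $sM_w+\overline{\omega}_{w,(Q,\pi_2)}(\boldsymbol{a}+\boldsymbol{b})$. Splitting on $\boldsymbol{a}=\boldsymbol{0}$, on $\boldsymbol{b}=\boldsymbol{0}$, and on both nonzero, the $Q$-weight is bounded below by $d_{w,(Q,\pi_2)}(C_2)$, $d_{w,(Q,\pi_2)}(C_1)$ and $d_{w,(Q,\pi_2)}(C_1+C_2)$ respectively, which is exactly the minimum appearing in the claimed equality; the value $sM_w+d_{w,(Q,\pi_2)}(C_1+C_2)$ is then attained by choosing $\boldsymbol{a},\boldsymbol{b}$ so that $\boldsymbol{a}+\boldsymbol{b}$ is a minimum-weight nonzero word of $C_1+C_2$. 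The crude lower bound $d_{w,(P,\pi_1)}(C_1)$ follows at once, since any codeword with $\boldsymbol{a}+\boldsymbol{b}=\boldsymbol{0}$ has weight $\overline{\omega}_{w,(P,\pi_1)}(\boldsymbol{a})\ge d_{w,(P,\pi_1)}(C_1)$, and every other codeword has weight $\ge sM_w\ge d_{w,(P,\pi_1)}(C_1)$.

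The main obstacle — and the step I would handle most carefully — is precisely the family with $\boldsymbol{a}\neq\boldsymbol{0}$ but $\boldsymbol{a}+\boldsymbol{b}=\boldsymbol{0}$, i.e. $\boldsymbol{a}\in(C_1\cap C_2)\setminus\{\boldsymbol{0}\}$. In the classical Hamming $(\boldsymbol{u}\mid\boldsymbol{u}+\boldsymbol{v})$ construction such a word is absorbed by the triangle inequality $\overline{\omega}(\boldsymbol{b})\le\overline{\omega}(\boldsymbol{a})+\overline{\omega}(\boldsymbol{a}+\boldsymbol{b})$, but here the two coordinate blocks are measured in two different poset metrics ($P$ versus $Q$), so no single triangle inequality relates $\overline{\omega}_{w,(P,\pi_1)}(\boldsymbol{a})$ to the $Q$-data. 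Consequently this family admits only the bound $\overline{\omega}_{w,(P,\pi_1)}(\boldsymbol{a})\ge d_{w,(P,\pi_1)}(C_1)$, which is what forces the weaker crude bounds to be stated separately (the ``or'') and which means the sharp equality in part (2) genuinely requires $C_1\cap C_2=\{\boldsymbol{0}\}$ (otherwise a word $(\boldsymbol{a},\boldsymbol{0})$ of weight at most $sM_w$ undercuts $sM_w+d_{w,(Q,\pi_2)}(C_1+C_2)$). I would therefore either impose $C_1\cap C_2=\{\boldsymbol{0}\}$ for the equality or record the general statement only as the crude lower bounds.
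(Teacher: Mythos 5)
Your proposal is correct and follows essentially the same route as the paper: both compute the weight of a generic difference $(\boldsymbol{u'}-\boldsymbol{v'},\,\boldsymbol{u'}+\boldsymbol{u''}-\boldsymbol{v'}-\boldsymbol{v''})$ from the ideal decompositions of $P\uplus Q$ and $P\oplus Q$ and then run the same four-way case analysis on which of the two components and their ``$Q$-part'' vanish, yielding exactly the stated case-dependent bounds. Your closing observation that the troublesome case is a nonzero element of $C_1\cap C_2$ is precisely a reformulation (for linear codes) of the paper's hypothesis that no nontrivial collision $\boldsymbol{u'}+\boldsymbol{u''}=\boldsymbol{v'}+\boldsymbol{v''}$ occurs, so it is a correct clarification rather than a departure.
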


\begin{proof}
Let $\boldsymbol{u}=\left(\boldsymbol{u'},\boldsymbol{u'}+\boldsymbol{u''}\right)$ and $\boldsymbol{v}=\left(\boldsymbol{v'},\boldsymbol{v'}+\boldsymbol{v''}\right)$ where $\boldsymbol{u'},\boldsymbol{v'}\in C_1$ and $\boldsymbol{u''},\boldsymbol{v''}\in C_2$. Then
$$d_{w,(\mathcal {L},\pi)}(\boldsymbol{u},\boldsymbol{v})=\overline{\omega}_{w,(\mathcal {L},\pi)}(\boldsymbol{u}-\boldsymbol{v})=\sum\limits_{i\in M_{\boldsymbol{u-v}}^{\mathcal {L}}}W_i(\boldsymbol{u}-\boldsymbol{v})+M_w\left|I_{\boldsymbol{u-v}}^{\mathcal {L}}\setminus M_{\boldsymbol{u-v}}^{\mathcal {L}}\right|.$$

If $\mathcal {L}=P\uplus Q$, then
$$\begin{array}{ccccc}
\hline
    & \boldsymbol{u'}=\boldsymbol{v'} & \boldsymbol{u''}=\boldsymbol{v''}& \boldsymbol{u'}+\boldsymbol{u''}=\boldsymbol{v'}+\boldsymbol{v''}& \boldsymbol{u'}+\boldsymbol{u''}\neq\boldsymbol{v'}+\boldsymbol{v''} \\[2mm]
  \hline
  I_{\boldsymbol{u-v}}^{\mathcal {L}} & I_{\boldsymbol{u''-u''}}^{Q} & I_{\boldsymbol{u'-v'}}^{P}\cup I_{\boldsymbol{u'-v'}}^{Q}& I_{\boldsymbol{u'-v'}}^{P}& I_{\boldsymbol{u'-v'}}^{P}\cup I_{\boldsymbol{u'+u''-v'-v''}}^Q \\[2mm]
  M_{\boldsymbol{u-v}}^{\mathcal {L}} & M_{\boldsymbol{u''-v''}}^{Q} & M_{\boldsymbol{u'-v'}}^{P}\cup M_{\boldsymbol{u'-v'}}^{Q} & M_{\boldsymbol{u'-v'}}^{P}&M_{\boldsymbol{u'-v'}}^{P}\cup M_{\boldsymbol{u'+u''-v'-v''}}^Q. \\
  \hline
\end{array}$$
Therefore
$$\overline{\omega}_{w,(\mathcal {L},\pi)}(\boldsymbol{u}-\boldsymbol{v})=\left\{
                             \begin{array}{ll}
  \overline{\omega}_{w,(Q,\pi_2)}\left(\boldsymbol{u''}-\boldsymbol{v''}\right) & \text{if}\ \boldsymbol{u'}=\boldsymbol{v'};\\[2mm]
  \overline{\omega}_{w,(P,\pi_1)}\left(\boldsymbol{u'}-\boldsymbol{v'}\right)+ \overline{\omega}_{w,(Q,\pi_2)}\left(\boldsymbol{u'}-\boldsymbol{v'}\right)& \text{if}\ \boldsymbol{u''}=\boldsymbol{v''};\\[2mm]
  \overline{\omega}_{w,(P,\pi_1)}\left(\boldsymbol{u'}-\boldsymbol{v'}\right)& \text{if}\ \boldsymbol{u'}+\boldsymbol{u''}=\boldsymbol{v'}+\boldsymbol{v''};\\[2mm]
  \overline{\omega}_{w,(P,\pi_1)}\left(\boldsymbol{u'}-\boldsymbol{v'}\right)+ \overline{\omega}_{w,(Q,\pi_2)}\left(\boldsymbol{u'}+\boldsymbol{u''}- \boldsymbol{v'}-\boldsymbol{v''}\right)
  & \text{if}\ \boldsymbol{u'}+\boldsymbol{u''}\neq\boldsymbol{v'}+\boldsymbol{v''}.
                             \end{array}\right.$$
Hence
$$d_{w,(\mathcal {L},\pi)}(\mathcal {C})\geq\min\left\{d_{w,(Q,\pi_2)}(C_2),d_{w,(P,\pi_1)}(C_1)+d_{w,(Q,\pi_2)}(C_1), d_{w,(P,\pi_1)}(C_1)+d_{w,(Q,\pi_2)}(C_1+C_2)\right\}$$
if there does not exist $\boldsymbol{u'},\boldsymbol{v'}\in C_1$ and $\boldsymbol{u''},\boldsymbol{v''}\in C_2$  such that $\boldsymbol{u'}+\boldsymbol{u''}=\boldsymbol{v'}+\boldsymbol{v''}$. Otherwise we have that
$$d_{w,(\mathcal {L},\pi)}(\mathcal {C})\geq\min\left\{d_{w,(P,\pi_1)}(C_1),d_{w,(Q,\pi_2)}(C_2)\right\}.$$

If $\mathcal {L}=P\oplus Q$, then
$$\begin{array}{ccccc}
\hline
    & \boldsymbol{u'}=\boldsymbol{v'} & \boldsymbol{u''}=\boldsymbol{v''}& \boldsymbol{u'}+\boldsymbol{u''}=\boldsymbol{v'}+\boldsymbol{v''}& \boldsymbol{u'}+\boldsymbol{u''}\neq\boldsymbol{v'}+\boldsymbol{v''} \\[2mm]
  \hline
  I_{\boldsymbol{u-v}}^{\mathcal {L}} & P\cup I_{\boldsymbol{u''-u''}}^{Q} & P \cup I_{\boldsymbol{u'-v'}}^{Q}& I_{\boldsymbol{u'-v'}}^{P}& P\cup I_{\boldsymbol{u'+u''-v'-v''}}^Q \\[2mm]
  M_{\boldsymbol{u-v}}^{\mathcal {L}} & M_{\boldsymbol{u''-v''}}^{Q} & M_{\boldsymbol{u'-v'}}^{Q} & M_{\boldsymbol{u'-v'}}^{P}& M_{\boldsymbol{u'+u''-v'-v''}}^Q. \\
  \hline
\end{array}$$
Therefore
$$\overline{\omega}_{w,(\mathcal {L},\pi)}(\boldsymbol{u}-\boldsymbol{v})=\left\{
                             \begin{array}{ll}
  \overline{\omega}_{w,(Q,\pi_2)}\left(\boldsymbol{u''}-\boldsymbol{v''}\right)+s M_w & \text{if}\ \boldsymbol{u'}=\boldsymbol{v'};\\[2mm] \overline{\omega}_{w,(Q,\pi_2)}\left(\boldsymbol{u'}-\boldsymbol{v'}\right)+s M_w& \text{if}\ \boldsymbol{u''}=\boldsymbol{v''};\\[2mm]
  \overline{\omega}_{w,(P,\pi_1)}\left(\boldsymbol{u'}-\boldsymbol{v'}\right)& \text{if}\ \boldsymbol{u'}+\boldsymbol{u''}=\boldsymbol{v'}+\boldsymbol{v''};\\[2mm] \overline{\omega}_{w,(Q,\pi_2)}\left(\boldsymbol{u'}+\boldsymbol{u''}- \boldsymbol{v'}-\boldsymbol{v''}\right)+sM_w
  & \text{if}\ \boldsymbol{u'}+\boldsymbol{u''}\neq\boldsymbol{v'}+\boldsymbol{v''}.
                             \end{array}\right.$$
Hence
$$d_{w,(\mathcal {L},\pi)}(\mathcal {C})=\min\left\{d_{w,(Q,\pi_2)}(C_2), d_{w,(Q,\pi_2)}(C_1),d_{w,(Q,\pi_2)}(C_1+C_2)\right\}+s M_w$$
if there exists no $\boldsymbol{u'},\boldsymbol{v'}\in C_1$ and $\boldsymbol{u''},\boldsymbol{v''}\in C_2$ such that $\boldsymbol{u'}+\boldsymbol{u''}=\boldsymbol{v'}+\boldsymbol{v''}$.
\end{proof}

\begin{theorem}
Let $C_1$ be a linear $(P,\pi_1,w)$-code and $C_2$ be a linear $(Q,\pi_2,w)$-code. Then
\begin{enumerate}[(1)]
\item The code $\mathcal {C}$ is a linear $\left(P\uplus Q,\pi,w\right)$-code satisfies $\tilde{\rho}(\mathcal {C})\leq\tilde{\rho}(C_1)+\tilde{\rho}(C_2)$.
\item The code $\mathcal {C}$ is a linear $\left(P\oplus Q,\pi,w\right)$-code satisfies $\tilde{\rho}(\mathcal {C})\leq\tilde{\rho}(C_2)+sM_w$.
\end{enumerate}
\end{theorem}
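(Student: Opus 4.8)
The plan is to bound $\tilde{\rho}(\mathcal{C})$ from above directly from its definition $\tilde{\rho}(\mathcal{C})=\max_{\boldsymbol{z}\in\mathbb{F}_q^{2n}}\min_{\boldsymbol{c}\in\mathcal{C}}d_{w,(\mathcal{L},\pi)}(\boldsymbol{c},\boldsymbol{z})$: for an arbitrary target $\boldsymbol{z}=(\boldsymbol{z'},\boldsymbol{z''})$ I will exhibit one codeword $\boldsymbol{c}=(\boldsymbol{u'},\boldsymbol{u'}+\boldsymbol{u''})\in\mathcal{C}$ whose distance to $\boldsymbol{z}$ is at most the asserted bound, which forces the inner minimum, hence the outer maximum, to respect that bound. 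The engine of the argument is the covering property itself: for any vector $\boldsymbol{y}$ there is a codeword of $C_1$ (resp.\ $C_2$) within $(P,\pi_1,w)$-distance $\tilde{\rho}(C_1)$ (resp.\ $(Q,\pi_2,w)$-distance $\tilde{\rho}(C_2)$) of $\boldsymbol{y}$. I will feed this into the explicit weight formulas for the difference $\boldsymbol{c}-\boldsymbol{z}=(\boldsymbol{u'}-\boldsymbol{z'},\,\boldsymbol{u'}+\boldsymbol{u''}-\boldsymbol{z''})$ established in the preceding proposition. Linearity of $\mathcal{C}$ is immediate because $(\boldsymbol{u'},\boldsymbol{u''})\mapsto(\boldsymbol{u'},\boldsymbol{u'}+\boldsymbol{u''})$ is $\mathbb{F}_q$-linear and $C_1,C_2$ are subspaces.

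For part (1), with $\mathcal{L}=P\uplus Q$, the weight splits additively, so $d_{w,(\mathcal{L},\pi)}(\boldsymbol{c},\boldsymbol{z})=\overline{\omega}_{w,(P,\pi_1)}(\boldsymbol{u'}-\boldsymbol{z'})+\overline{\omega}_{w,(Q,\pi_2)}(\boldsymbol{u'}+\boldsymbol{u''}-\boldsymbol{z''})$. I will first pick $\boldsymbol{u'}\in C_1$ covering $\boldsymbol{z'}$, so the first summand is at most $\tilde{\rho}(C_1)$; then, with $\boldsymbol{u'}$ frozen, pick $\boldsymbol{u''}\in C_2$ covering $\boldsymbol{z''}-\boldsymbol{u'}$, so that $\overline{\omega}_{w,(Q,\pi_2)}(\boldsymbol{u'}+\boldsymbol{u''}-\boldsymbol{z''})\leq\tilde{\rho}(C_2)$. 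Summing the two estimates gives $d_{w,(\mathcal{L},\pi)}(\boldsymbol{c},\boldsymbol{z})\leq\tilde{\rho}(C_1)+\tilde{\rho}(C_2)$, and maximising over $\boldsymbol{z}$ finishes this case.

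For part (2), with $\mathcal{L}=P\oplus Q$, additivity fails: the linear-sum formula gives $d_{w,(\mathcal{L},\pi)}(\boldsymbol{c},\boldsymbol{z})=sM_w+\overline{\omega}_{w,(Q,\pi_2)}(\boldsymbol{u'}+\boldsymbol{u''}-\boldsymbol{z''})$ when $\boldsymbol{u'}+\boldsymbol{u''}\neq\boldsymbol{z''}$, and $d_{w,(\mathcal{L},\pi)}(\boldsymbol{c},\boldsymbol{z})=\overline{\omega}_{w,(P,\pi_1)}(\boldsymbol{u'}-\boldsymbol{z'})$ when $\boldsymbol{u'}+\boldsymbol{u''}=\boldsymbol{z''}$. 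Here I will simply take $\boldsymbol{u'}=\boldsymbol{0}\in C_1$ and pick $\boldsymbol{u''}\in C_2$ covering $\boldsymbol{z''}$, so $\overline{\omega}_{w,(Q,\pi_2)}(\boldsymbol{u''}-\boldsymbol{z''})\leq\tilde{\rho}(C_2)$. If $\boldsymbol{u''}\neq\boldsymbol{z''}$ the distance is $sM_w+\overline{\omega}_{w,(Q,\pi_2)}(\boldsymbol{u''}-\boldsymbol{z''})\leq sM_w+\tilde{\rho}(C_2)$; if $\boldsymbol{u''}=\boldsymbol{z''}$, meaning $\boldsymbol{z''}\in C_2$, the distance collapses to $\overline{\omega}_{w,(P,\pi_1)}(-\boldsymbol{z'})$, which is at most $sM_w$ because every $(P,\pi_1,w)$-weight over the $s$ blocks of $P$ is bounded by $sM_w$. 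In either case the distance is at most $sM_w+\tilde{\rho}(C_2)$, and maximising over $\boldsymbol{z}$ gives the result.

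The step demanding the most care is the degenerate branch of part (2): when the second block of $\boldsymbol{c}-\boldsymbol{z}$ vanishes, the covering property of $C_2$ no longer controls the distance, and one must instead invoke the crude but sufficient bound $\overline{\omega}_{w,(P,\pi_1)}(\cdot)\leq sM_w$. I expect the bounds to be mere inequalities rather than equalities (in contrast with the direct-sum construction, where the covering radius is determined exactly) precisely because the two halves of a codeword in the $(\boldsymbol{u'}\mid\boldsymbol{u'}+\boldsymbol{u''})$ construction are coupled through the shared $\boldsymbol{u'}$, so the independent covering choices used above need not be simultaneously optimal.
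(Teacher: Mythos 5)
Your proof is correct and takes essentially the same route as the paper: both arguments exhibit, for an arbitrary $\boldsymbol{z}=(\boldsymbol{z'},\boldsymbol{z''})$, a codeword $(\boldsymbol{u'},\boldsymbol{u'}+\boldsymbol{u''})\in\mathcal{C}$ obtained by first covering $\boldsymbol{z'}$ with $C_1$ and then covering the residual second component with $C_2$, the paper phrasing the two covering steps via coset leaders rather than covering balls. Your explicit treatment of the degenerate branch of part (2) (where the second block of the difference vanishes and one falls back on $\overline{\omega}_{w,(P,\pi_1)}(\cdot)\leq sM_w$) is in fact slightly more careful than the paper's, which writes the linear-sum weight as a plain sum of the two component weights.
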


\begin{proof}
Let $\boldsymbol{v}=\left(\boldsymbol{v'},\boldsymbol{v''}\right)\in \mathbb{F}_q^{2n}$ where $\boldsymbol{v'}, \boldsymbol{v''}\in\mathbb{F}_q^n$. Then $\boldsymbol{v'}=\boldsymbol{\alpha'}+\boldsymbol{a'}$ and $\boldsymbol{v''}=\boldsymbol{\beta'}+\boldsymbol{b'}$ where $\boldsymbol{a'}$, $\boldsymbol{b'}\in C_1$ and  $\boldsymbol{\alpha'}$, $\boldsymbol{\beta'}$ are two coset leaders of in the corresponding coset of $C_1$. Then
$$\boldsymbol{v}=\left(\boldsymbol{v'},\boldsymbol{v''}\right)= \left(\boldsymbol{\alpha'}+\boldsymbol{a'},\boldsymbol{\beta'}+\boldsymbol{b'}\right)=
\left(\boldsymbol{\alpha'}+\boldsymbol{a'}, \boldsymbol{\beta'}+\boldsymbol{a'}+\left(\boldsymbol{b'}-\boldsymbol{a'}\right)\right).$$

Assume that $\boldsymbol{\beta'}+\left(\boldsymbol{b'}-\boldsymbol{a'}\right)= \boldsymbol{\gamma''}+\boldsymbol{c''}$ where $\boldsymbol{c''}\in C_2$ and $\boldsymbol{\gamma''}$ is a coset leader in the corresponding coset of $C_2$. Then
$$\boldsymbol{v}=\left(\boldsymbol{\alpha'}+\boldsymbol{a'}, \boldsymbol{\gamma''}+\boldsymbol{a'}+ \boldsymbol{c''}\right)=\left(\boldsymbol{\alpha'},\boldsymbol{\gamma''}\right)+ \left(\boldsymbol{a'},\boldsymbol{a'}+\boldsymbol{c''}\right).$$
Denote by $\boldsymbol{c}=\left(\boldsymbol{a'},\boldsymbol{a'}+\boldsymbol{c''}\right)\in\mathcal {C}$ and
$\boldsymbol{u}=\left(\boldsymbol{\alpha'},\boldsymbol{\gamma''}\right)$, we have
$$d_{w,(\mathcal {L},\pi)}(\boldsymbol{v},\boldsymbol{c})=\overline{\omega}_{w,(\mathcal {L},\pi)}(\boldsymbol{v}-\boldsymbol{c})=\overline{\omega}_{w,(\mathcal {L},\pi)}(\boldsymbol{u}).$$
If $\mathcal {L}=P\uplus Q$, then
$$\overline{\omega}_{w,(\mathcal {L},\pi)}(\boldsymbol{u})=\overline{\omega}_{w,(P,\pi_1)}\left(\boldsymbol{\alpha'}\right) + \overline{\omega}_{w,(Q,\pi_2)}\left(\boldsymbol{\gamma''}\right)\leq \tilde{\rho}(C_1)+\tilde{\rho}(C_2).$$
If $\mathcal {L}=P\oplus Q$, then
$$\overline{\omega}_{w,(\mathcal {L},\pi)}(\boldsymbol{u})=\overline{\omega}_{w,(P,\pi_1)}\left(\boldsymbol{\alpha'}\right) + \overline{\omega}_{w,(Q,\pi_2)}\left(\boldsymbol{\gamma''}\right)\leq s M_w+\tilde{\rho}(C_2).$$
The result then follows.
\end{proof}

\subsection{Construction 3}

\quad\; Let $\mathcal {L}$ be a poset with underling set $[s]$ and $\pi$ be a labeling map of the poset $\mathcal {L}$ such that $\sum\limits_{i=1}^{s}\pi(i)=n$. Let $\mathcal {C}\subseteq \left(\mathbb{F}_q^n,d_{w,(\mathcal {L},\pi)}\right)$ be a $(\mathcal {L},\pi,w)$-code over $\mathbb{F}_q$.

The extended code $\widehat{\mathcal {C}}$ of $\mathcal {C}$ is defined as:
$$\widehat{\mathcal {C}}=\left\{\left(\boldsymbol{u},u_{s+1}\right): \boldsymbol{u}\in\mathcal {C},\ u_{s+1}\in\mathbb{F}_q\ \text{with}\ u_{11}+\cdots+u_{1\pi(1)}+\cdots+u_{s1}+\cdots+u_{s\pi(s)}+u_{s+1}=0\right\}.$$

Consider the extending poset $\mathcal {L}^{+}$ of $\mathcal {L}$ by adding an element $s+1$ in $\mathcal {L}$ and the labeling map $\pi^{+}:[t]\rightarrow \mathbb{N}$ of $\mathcal {L}^{+}$ such that $\pi^{+}(i)=\pi(i)$ for $i\leq s$ and $\pi^{+}(s+1)=1$. Define
$$W_{i}^{\mathcal {L}^{+}}\left(\boldsymbol{u^{+}}\right)=\left\{
                             \begin{array}{ll}
                             W_i^{\mathcal {L}}(\boldsymbol{u}) &\text{if}\ i\leq s;\\[2mm]
                             w(u_{s+1})&\text{if}\ i=s+1.
                             \end{array}
                           \right.$$

The following results can be proved in a routine way.

\begin{remark}
The extended code $\widehat{\mathcal {C}}\subseteq\left(\mathbb{F}_q^{n+1},d_{w,(\mathcal {L}^{+},\pi^{+})}\right) $ is a $(\mathcal {L}^{+},\pi^{+},w)$-code satisfying that
$$d_{w,(\mathcal {L},\pi)}(\mathcal {C})\leq d_{w,(\mathcal {L}^{+},\pi^{+})}(\widehat{\mathcal {C}})\leq d_{w,(\mathcal {L},\pi)}(\mathcal {C})+M_w.$$
\end{remark}

\begin{theorem}
Let $\mathcal {C}$ be a linear $(\mathcal {L},\pi,w)$-code over $\mathbb{F}_q$. The covering radius of the $\left(\mathcal {L}^{+},\pi^{+},w\right)$-code $\widehat{\mathcal {C}}$ satisfies $\tilde{\rho}(\mathcal {C})\leq\tilde{\rho}(\widehat{\mathcal {C}})\leq\tilde{\rho}(\mathcal {C})+M_w$.
\end{theorem}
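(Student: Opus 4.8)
The plan is to reduce the whole statement to one additive identity for the extended weight, and then read off both inequalities from the characterization of the covering radius of a linear code as the largest coset weight (the Remark following Definition~3.3). The decisive structural fact I would exploit is that, by the definition of the extending poset, the adjoined element $s+1$ is comparable only to itself; it is an \emph{isolated} (simultaneously maximal and minimal) point of $\mathcal{L}^{+}$, and nothing of $\mathcal{L}$ lies above or below it.

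First I would prove the key lemma: for every $\boldsymbol{u}^{+}=(\boldsymbol{u},u_{s+1})\in\mathbb{F}_q^{n+1}$,
$$\overline{\omega}_{w,(\mathcal{L}^{+},\pi^{+})}(\boldsymbol{u},u_{s+1})=\overline{\omega}_{w,(\mathcal{L},\pi)}(\boldsymbol{u})+w(u_{s+1}).$$
To see this, note that when $u_{s+1}\neq 0$ the index $s+1$ enters the block support, but since $\langle s+1\rangle=\{s+1\}$ and $s+1$ is maximal, we get $I_{\boldsymbol{u}^{+}}^{\mathcal{L}^{+}}=I_{\boldsymbol{u}}^{\mathcal{L}}\cup\{s+1\}$, $M_{\boldsymbol{u}^{+}}^{\mathcal{L}^{+}}=M_{\boldsymbol{u}}^{\mathcal{L}}\cup\{s+1\}$, and hence $I_{\boldsymbol{u}^{+}}^{\mathcal{L}^{+}}\setminus M_{\boldsymbol{u}^{+}}^{\mathcal{L}^{+}}=I_{\boldsymbol{u}}^{\mathcal{L}}\setminus M_{\boldsymbol{u}}^{\mathcal{L}}$; the only change to the weight sum is the new maximal contribution $W_{s+1}^{\mathcal{L}^{+}}(\boldsymbol{u}^{+})=w(u_{s+1})$. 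When $u_{s+1}=0$ the two ideals and their maximal sets coincide, and since $w(0)=0$ the displayed identity holds in both cases. Writing $\widehat{\mathcal{C}}=\{(\boldsymbol{u},-\sigma(\boldsymbol{u})):\boldsymbol{u}\in\mathcal{C}\}$, where $\sigma$ denotes the sum of all $n$ coordinates, the weight of the coset $\boldsymbol{z}^{+}+\widehat{\mathcal{C}}$ then equals $\min_{\boldsymbol{c}\in\mathcal{C}}\bigl[\overline{\omega}_{w,(\mathcal{L},\pi)}(\boldsymbol{z}+\boldsymbol{c})+w(z_{s+1}-\sigma(\boldsymbol{c}))\bigr]$.

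For the lower bound I would choose $\boldsymbol{z}$ so that $\boldsymbol{z}+\mathcal{C}$ is a maximum-weight coset of $\mathcal{C}$, of weight $\tilde{\rho}(\mathcal{C})$, and extend it to any $\boldsymbol{z}^{+}$. Since $w\geq 0$, the coset weight above is at least $\min_{\boldsymbol{c}}\overline{\omega}_{w,(\mathcal{L},\pi)}(\boldsymbol{z}+\boldsymbol{c})=\tilde{\rho}(\mathcal{C})$; as $\boldsymbol{c}$ ranges over $\mathcal{C}$ the argument $\boldsymbol{z}+\boldsymbol{c}$ sweeps the whole coset $\boldsymbol{z}+\mathcal{C}$. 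Taking the maximum over cosets of $\widehat{\mathcal{C}}$ gives $\tilde{\rho}(\widehat{\mathcal{C}})\geq\tilde{\rho}(\mathcal{C})$. For the upper bound I would take an arbitrary $\boldsymbol{z}^{+}$ and pick $\boldsymbol{c}_{0}\in\mathcal{C}$ with $\boldsymbol{z}+\boldsymbol{c}_{0}$ a coset leader of $\boldsymbol{z}+\mathcal{C}$, so that $\overline{\omega}_{w,(\mathcal{L},\pi)}(\boldsymbol{z}+\boldsymbol{c}_{0})\leq\tilde{\rho}(\mathcal{C})$; since the minimum is at most the value at $\boldsymbol{c}_{0}$ and $w(z_{s+1}-\sigma(\boldsymbol{c}_{0}))\leq M_w$ always, the coset weight is at most $\tilde{\rho}(\mathcal{C})+M_w$. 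This holds for every $\boldsymbol{z}^{+}$, yielding $\tilde{\rho}(\widehat{\mathcal{C}})\leq\tilde{\rho}(\mathcal{C})+M_w$.

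The only real obstacle is the verification of the additive identity, and within it the bookkeeping of the ideal and its maximal set after adjoining $s+1$: one must be sure that an isolated maximal element contributes exactly $w(u_{s+1})$ and leaves $I\setminus M$ untouched, so that the weight genuinely splits. Once that identity is in hand, both inequalities are immediate from $0\leq w(\alpha)\leq M_w$ together with the largest-coset-weight description of $\tilde{\rho}$, and no further computation is needed.
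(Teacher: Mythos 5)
Your proposal is correct. The paper actually omits the proof of this theorem entirely (it is preceded by the remark that ``the following results can be proved in a routine way''), and your argument is exactly the routine one the authors presumably intend: the additive identity $\overline{\omega}_{w,(\mathcal{L}^{+},\pi^{+})}(\boldsymbol{u},u_{s+1})=\overline{\omega}_{w,(\mathcal{L},\pi)}(\boldsymbol{u})+w(u_{s+1})$ is valid because the adjoined element is isolated in $\mathcal{L}^{+}$ (so it enters both the ideal and its maximal set, leaving $I\setminus M$ unchanged, and contributes $w(u_{s+1})$ since $\pi^{+}(s+1)=1$), and both inequalities then follow from $0\leq w(\alpha)\leq M_w$ together with the largest-coset-weight description of the covering radius.
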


\subsection{Construction 4}

\quad\;
Let $\mathcal {L}$ be a poset with underlying set $[s]$ and let $\pi$ be a labeling map of the poset $\mathcal {L}$ such that $\sum\limits_{i=1}^{s}\pi(i)=n$. Let $T\subseteq [s]$ be any set of $t$ blocks. Let $\mathcal {C}$ be an $\left[n,K,d_{w,(\mathcal {L},\pi)}(\mathcal {C})\right]$ code over $\mathbb{F}_q$. Puncturing $\mathcal {C}$ on $T$ gives a code over $\mathbb{F}_q$ of length $n-\sum\limits_{i\in T}\pi(i)$, called the punctured code of $\mathcal {C}$ and denoted by $\mathcal {C}_T$.

In this section, we fix $T=\{i\}$ for some $i\in[s]$. For $\boldsymbol{u}=(\boldsymbol{u_1},\ldots,\boldsymbol{u_s})\in\mathbb{F}_q^n$, define
$$\boldsymbol{u^{*}}
=(\boldsymbol{u_1},\ldots,\boldsymbol{u_{i-1}}, \boldsymbol{u_{i+1}},\ldots,\boldsymbol{u_s}).$$
The punctured code $\mathcal {C^{*}}$ is given by
$$\mathcal {C^{*}}=\left\{\boldsymbol{u^{*}}: \boldsymbol{u}\in\mathcal {C}\right\}.$$

Considering the puncturing poset $\mathcal {L}^{-}$ of $\mathcal {L}$ by deleting $i$ from $[s]$ and the labeling map $\pi^{-}:[s]\setminus\{i\}\rightarrow \mathbb{N}$ of $\mathcal {L}^{-}$ such that $\pi^{-}(j)=\pi(j)$ for $j\in[s]\setminus\{i\}$, we get the following result.

\begin{proposition}
The punctured code $\mathcal {C}^{*}\subseteq\left(\mathbb{F}_q^{n-1},d_{w,(\mathcal {L}^{-},\pi^{-})}\right)$ is a $(\mathcal {L}^{-},\pi^{-},w)$-code such that $$d_{w,(\mathcal {L}^{-},\pi^{-})}(\mathcal {C}^{*})\leq d_{w,(\mathcal {L},\pi)}(\mathcal {C}).$$
\end{proposition}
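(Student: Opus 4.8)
The plan is to reduce the statement to a single inequality between weighted poset block weights and then prove that inequality by a coordinate-by-coordinate comparison of the two ideals involved. Write $\boldsymbol{u}^{*}$ for the puncturing of $\boldsymbol{u}$ on the block $i$, and for an arbitrary $\boldsymbol{z}\in\mathbb{F}_q^n$ write $\boldsymbol{z}^{*}$ for the vector obtained by deleting its $i$-th block. The heart of the matter is the claim that
$$\overline{\omega}_{w,(\mathcal{L}^{-},\pi^{-})}(\boldsymbol{z}^{*})\leq\overline{\omega}_{w,(\mathcal{L},\pi)}(\boldsymbol{z})\qquad\text{for every }\boldsymbol{z}\in\mathbb{F}_q^n.$$
Granting this, I would pick $\boldsymbol{u}\neq\boldsymbol{v}$ in $\mathcal{C}$ realizing $d_{w,(\mathcal{L},\pi)}(\mathcal{C})$ and whose images $\boldsymbol{u}^{*},\boldsymbol{v}^{*}$ are still distinct, so that they form a genuine pair of codewords of $\mathcal{C}^{*}$; applying the claim to $\boldsymbol{z}=\boldsymbol{u}-\boldsymbol{v}$ gives $d_{w,(\mathcal{L}^{-},\pi^{-})}(\mathcal{C}^{*})\leq d_{w,(\mathcal{L}^{-},\pi^{-})}(\boldsymbol{u}^{*},\boldsymbol{v}^{*})\leq d_{w,(\mathcal{L},\pi)}(\mathcal{C})$, which is the asserted bound.

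Next I would unwind the two ideals. Put $S=supp_{\pi}^{\mathcal{L}}(\boldsymbol{z})$ and observe that the block support of $\boldsymbol{z}^{*}$ is $S^{*}=S\setminus\{i\}$, while $W_{j}^{\mathcal{L}^{-}}(\boldsymbol{z}^{*})=W_{j}^{\mathcal{L}}(\boldsymbol{z})$ for every $j\neq i$, since puncturing only relabels the blocks and leaves their entries untouched. Because the order of $\mathcal{L}^{-}$ is by definition the restriction of the order of $\mathcal{L}$, any $j\in I_{\boldsymbol{z}^{*}}^{\mathcal{L}^{-}}=\langle S^{*}\rangle_{\mathcal{L}^{-}}$ satisfies $j\leq_{\mathcal{L}}k$ for some $k\in S^{*}\subseteq S$, whence $j\in\langle S\rangle_{\mathcal{L}}=I_{\boldsymbol{z}}^{\mathcal{L}}$. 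This records the two facts I will use repeatedly: $I_{\boldsymbol{z}^{*}}^{\mathcal{L}^{-}}\subseteq I_{\boldsymbol{z}}^{\mathcal{L}}$ and $i\notin I_{\boldsymbol{z}^{*}}^{\mathcal{L}^{-}}$.

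The crux is then a term-by-term comparison. I would rewrite each weight as a sum over its ideal, assigning to a coordinate $j$ the coefficient $W_{j}(\boldsymbol{z})$ when $j$ is maximal in the relevant ideal and $M_w$ otherwise. For a fixed $j\in I_{\boldsymbol{z}^{*}}^{\mathcal{L}^{-}}$, its coefficient in $\overline{\omega}_{w,(\mathcal{L},\pi)}(\boldsymbol{z})$ and in $\overline{\omega}_{w,(\mathcal{L}^{-},\pi^{-})}(\boldsymbol{z}^{*})$ can differ only through its maximality status. The sole way the punctured coefficient could exceed the original one is if $j$ were maximal in $I_{\boldsymbol{z}}^{\mathcal{L}}$ (coefficient $W_{j}(\boldsymbol{z})\leq M_w$) but non-maximal in $I_{\boldsymbol{z}^{*}}^{\mathcal{L}^{-}}$ (coefficient $M_w$); yet non-maximality in $I_{\boldsymbol{z}^{*}}^{\mathcal{L}^{-}}$ would furnish some $j'\in I_{\boldsymbol{z}^{*}}^{\mathcal{L}^{-}}\subseteq I_{\boldsymbol{z}}^{\mathcal{L}}$ with $j<_{\mathcal{L}}j'$, contradicting the maximality of $j$ in $I_{\boldsymbol{z}}^{\mathcal{L}}$. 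Hence the punctured coefficient of every surviving $j$ is at most its original one. Since all coefficients are nonnegative and $I_{\boldsymbol{z}^{*}}^{\mathcal{L}^{-}}\subseteq I_{\boldsymbol{z}}^{\mathcal{L}}$, summing over $I_{\boldsymbol{z}^{*}}^{\mathcal{L}^{-}}$ and discarding the remaining nonnegative terms indexed by $I_{\boldsymbol{z}}^{\mathcal{L}}\setminus I_{\boldsymbol{z}^{*}}^{\mathcal{L}^{-}}$ proves the claim.

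I expect the genuine obstacle to be exactly this coefficient flip: the weight is not simply monotone in the support, because shrinking the ideal can turn an interior coordinate (weight $M_w$) into a maximal one (weight $W_j\leq M_w$) and, a priori, the reverse as well; one must exclude the harmful reverse direction, and the containment $I_{\boldsymbol{z}^{*}}^{\mathcal{L}^{-}}\subseteq I_{\boldsymbol{z}}^{\mathcal{L}}$ is precisely what does so. A secondary point requiring care is the reduction in the first paragraph: one should ensure the chosen minimum-distance pair does not collapse under puncturing, equivalently that $\boldsymbol{u}-\boldsymbol{v}$ is not supported on the block $i$ alone. This is the weighted-poset analogue of the classical hypothesis that puncturing does not merge codewords, and it guarantees that $\mathcal{C}^{*}$ carries a well-defined minimum distance to compare against.
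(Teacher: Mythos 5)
Your argument is correct and rests on the same core idea as the paper's proof, namely reducing everything to the pointwise inequality $\overline{\omega}_{w,(\mathcal{L}^{-},\pi^{-})}(\boldsymbol{z}^{*})\leq\overline{\omega}_{w,(\mathcal{L},\pi)}(\boldsymbol{z})$ and tracking how the ideal and its maximal elements change under puncturing; but your execution of that step is genuinely different and, in fact, sounder. The paper splits into three cases according to whether $i$ lies in $M_{\boldsymbol{u-v}}^{\mathcal{L}}$, in $I_{\boldsymbol{u-v}}^{\mathcal{L}}\setminus M_{\boldsymbol{u-v}}^{\mathcal{L}}$, or outside $I_{\boldsymbol{u-v}}^{\mathcal{L}}$, and asserts exact equalities such as $I_{\boldsymbol{u^{*}-v^{*}}}^{\mathcal{L}^{-}}=I_{\boldsymbol{u-v}}^{\mathcal{L}}\setminus\{i\}$. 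These equalities are not literally true: deleting $i$ from the support can strand elements $j<i$ that lie below no other support element, so the punctured ideal may be strictly smaller than $I_{\boldsymbol{u-v}}^{\mathcal{L}}\setminus\{i\}$ (the displayed formula $\overline{\omega}_{w,(\mathcal{L},\pi)}(\boldsymbol{u}-\boldsymbol{v})-W_i(\boldsymbol{u}-\boldsymbol{v})$ then overestimates the punctured weight, though only in the harmless direction). Your containment $I_{\boldsymbol{z}^{*}}^{\mathcal{L}^{-}}\subseteq I_{\boldsymbol{z}}^{\mathcal{L}}$ combined with the observation that a coordinate maximal in the larger ideal cannot become non-maximal in the smaller one is exactly the right way to make the comparison airtight, and it buys a proof that does not depend on those fragile equalities. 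You are also right to flag the reduction step: if every minimum-distance pair of $\mathcal{C}$ collapses under puncturing, the proposition as stated can fail. For instance, take $P$ the chain $1<2<3$ with all blocks of dimension one, $w$ the Hamming weight over $\mathbb{F}_2$, and $\mathcal{C}$ spanned by $(1,0,0)$ and $(0,0,1)$; then $d_{w,(\mathcal{L},\pi)}(\mathcal{C})=1$, yet puncturing on the first block gives $\mathcal{C}^{*}=\{(0,0),(0,1)\}$ whose nonzero vector has weight $2$. So the hypothesis you isolate, that some minimizing pair survives puncturing, is genuinely needed, and the paper's proof passes over it silently.
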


\begin{proof}
Let $\boldsymbol{u^{*}},\boldsymbol{v^{*}}\in \mathcal {C}^{*}$ whose corresponding vectors are $\boldsymbol{u},\boldsymbol{v}\in\mathcal {C}$ respectively. It follows from the definition of puncturing poset that
$$\begin{array}{cccc}
\hline
    & i\in M_{\boldsymbol{u-v}}^{\mathcal {L}} & i\in I_{\boldsymbol{u-v}}^{\mathcal {L}}\setminus M_{\boldsymbol{u-v}}^{\mathcal {L}}& i\notin I_{\boldsymbol{u-v}}^{\mathcal {L}} \\[2mm]
  \hline
  I_{\boldsymbol{u^{*}-v^{*}}}^{\mathcal {L}^{-}} & I_{\boldsymbol{u-v}}^{\mathcal {L}}\setminus\{i\} & I_{\boldsymbol{u-v}}^{\mathcal {L}}\setminus\{i\} & I_{\boldsymbol{u-v}}^{\mathcal {L}} \\[2mm]
  M_{\boldsymbol{u^{*}-v^{*}}}^{\mathcal {L}^{-}} & M_{\boldsymbol{u-v}}^{\mathcal {L}}\setminus\{i\} & M_{\boldsymbol{u-v}}^{\mathcal {L}} & M_{\boldsymbol{u-v}}^{\mathcal {L}}.\\
  \hline
\end{array}$$
Thus
\begin{eqnarray*}
d_{w,(P^{-},\pi^{-})}\left(\boldsymbol{u^{*}},\boldsymbol{v^{*}}\right)&=& \overline{\omega}_{w,(\mathcal {L}^{-},\pi^{-})}\left(\boldsymbol{u^{*}}-\boldsymbol{v^{*}}\right)=\sum\limits_{j\in M_{\boldsymbol{u^{*}-v^{*}}}^{\mathcal {L}^{-}}}W_i\left(\boldsymbol{u^{*}}-\boldsymbol{v^{*}}\right)+ \left|I_{\boldsymbol{u^{*}-v^{*}}}^{\mathcal {L}^{-}}\setminus M_{\boldsymbol{u^{*}-v^{*}}}^{\mathcal {L}^{-}}\right|M_w\\
&=&\left\{
\begin{array}{ll}
\overline{\omega}_{w,(\mathcal {L},\pi)}(\boldsymbol{u}-\boldsymbol{v})-W_i(\boldsymbol{u}-\boldsymbol{v})&\text{if}\ i\in M_{\boldsymbol{u-v}}^{\mathcal {L}};\\[2mm]
\overline{\omega}_{w,(\mathcal {L},\pi)}(\boldsymbol{u}-\boldsymbol{v})-M_w &\text{if}\ i\in I_{\boldsymbol{u-v}}^{\mathcal {L}}\setminus M_{\boldsymbol{u-v}}^{\mathcal {L}};\\[2mm]
\overline{\omega}_{w,(\mathcal {L},\pi)}(\boldsymbol{u}-\boldsymbol{v})&\text{if}\ i\notin I_{\boldsymbol{u-v}}^{\mathcal {L}}.
\end{array}
\right.
\end{eqnarray*}
Hence $d_{w,(P^{-},\pi^{-})}(\mathcal {C}^{*})\leq d_{w,(\mathcal {L},\pi)}(\mathcal {C})$.
\end{proof}

\begin{remark}\label{weight}
From the above proof, we have that
$$\overline{\omega}_{w,(\mathcal {L}^{-},\pi^{-})}(\boldsymbol{u^{*}})\leq\overline{\omega}_{w,(\mathcal {L},\pi)}(\boldsymbol{u})$$
for any $\boldsymbol{u}\in\mathbb{F}_q^n$ such that $\boldsymbol{u^{*}}$ is the punctured vector of $\boldsymbol{u}$ on $i$-th block.
\end{remark}

\begin{theorem}
Let $\mathcal {C}$ be a linear $(\mathcal {L},\pi,w)$-code over $\mathbb{F}_q$. The punctured code $\mathcal {C}^{*}$ of $\mathcal {C}$ satisfies $\tilde{\rho}(\mathcal {C}^{*})\leq\tilde{\rho}(\mathcal {C})$.
\end{theorem}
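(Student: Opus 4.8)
The plan is to bound the covering radius of $\mathcal{C}^{*}$ directly from its definition $\tilde{\rho}(\mathcal{C}^{*})=\max_{\boldsymbol{v}^{*}}\min_{\boldsymbol{u}^{*}\in\mathcal{C}^{*}} d_{w,(\mathcal{L}^{-},\pi^{-})}(\boldsymbol{u}^{*},\boldsymbol{v}^{*})$, using the covering property of $\mathcal{C}$ together with the weight comparison already recorded in Remark \ref{weight}. The central observation is that puncturing at the $i$-th block is a linear projection, so it commutes with subtraction and carries codewords of $\mathcal{C}$ to codewords of $\mathcal{C}^{*}$.

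First I would fix an arbitrary $\boldsymbol{v}^{*}=(\boldsymbol{v_1},\ldots,\boldsymbol{v_{i-1}},\boldsymbol{v_{i+1}},\ldots,\boldsymbol{v_s})\in\mathbb{F}_q^{n-1}$ and lift it to $\boldsymbol{v}=(\boldsymbol{v_1},\ldots,\boldsymbol{v_{i-1}},\boldsymbol{0},\boldsymbol{v_{i+1}},\ldots,\boldsymbol{v_s})\in\mathbb{F}_q^{n}$ by inserting the zero block in position $i$; by construction $\boldsymbol{v}^{*}$ is exactly the puncture of $\boldsymbol{v}$ on the $i$-th block. Since $\mathcal{C}$ has covering radius $\tilde{\rho}(\mathcal{C})$, there is a codeword $\boldsymbol{u}\in\mathcal{C}$ with $d_{w,(\mathcal{L},\pi)}(\boldsymbol{u},\boldsymbol{v})\leq\tilde{\rho}(\mathcal{C})$.

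Next I would pass to the punctured vectors. Because deleting the $i$-th block is linear, one has $(\boldsymbol{u}-\boldsymbol{v})^{*}=\boldsymbol{u}^{*}-\boldsymbol{v}^{*}$, and $\boldsymbol{u}^{*}\in\mathcal{C}^{*}$ by the definition of the punctured code. Applying Remark \ref{weight} to the vector $\boldsymbol{u}-\boldsymbol{v}$ then gives
\[
d_{w,(\mathcal{L}^{-},\pi^{-})}(\boldsymbol{u}^{*},\boldsymbol{v}^{*})
=\overline{\omega}_{w,(\mathcal{L}^{-},\pi^{-})}(\boldsymbol{u}^{*}-\boldsymbol{v}^{*})
\leq\overline{\omega}_{w,(\mathcal{L},\pi)}(\boldsymbol{u}-\boldsymbol{v})
=d_{w,(\mathcal{L},\pi)}(\boldsymbol{u},\boldsymbol{v})\leq\tilde{\rho}(\mathcal{C}).
\]
Hence $\min_{\boldsymbol{w}^{*}\in\mathcal{C}^{*}} d_{w,(\mathcal{L}^{-},\pi^{-})}(\boldsymbol{w}^{*},\boldsymbol{v}^{*})\leq\tilde{\rho}(\mathcal{C})$ for this $\boldsymbol{v}^{*}$, and taking the maximum over all $\boldsymbol{v}^{*}\in\mathbb{F}_q^{n-1}$ yields $\tilde{\rho}(\mathcal{C}^{*})\leq\tilde{\rho}(\mathcal{C})$.

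The argument is short because the real work is hidden in Remark \ref{weight}: its block-by-block case analysis guarantees that puncturing never increases the weighted poset block weight. The only point that needs care is that each $\boldsymbol{v}^{*}\in\mathbb{F}_q^{n-1}$ can be realized as the puncture of some ambient $\boldsymbol{v}$, so that the covering property of $\mathcal{C}$ can be invoked; since the inserted block may be chosen freely this causes no difficulty. I do not expect a genuine obstacle here — the subtlest step is simply noting that the minimum over $\mathcal{C}^{*}$ is bounded above by the distance to the single codeword $\boldsymbol{u}^{*}$, without needing $\boldsymbol{u}^{*}$ to be a nearest codeword.
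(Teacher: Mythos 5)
Your proof is correct and follows essentially the same route as the paper: both arguments reduce to Remark \ref{weight} applied to $\boldsymbol{u}-\boldsymbol{v}$ for a codeword $\boldsymbol{u}$ within distance $\tilde{\rho}(\mathcal{C})$ of $\boldsymbol{v}$ (the paper phrases this via a coset-leader decomposition $\boldsymbol{v}=\boldsymbol{\alpha}+\boldsymbol{u}$, you via the covering property directly), together with the surjectivity of the puncturing map. The only cosmetic difference is that you lift $\boldsymbol{v}^{*}$ explicitly by inserting a zero block, whereas the paper starts from an arbitrary $\boldsymbol{v}\in\mathbb{F}_q^{n}$; the two are equivalent.
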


\begin{proof}
Let $\boldsymbol{v}\in\mathbb{F}_q^n$. Then there exist $\boldsymbol{u}\in\mathcal {C}$ and $\boldsymbol{\alpha}\in\mathbb{F}_q^n$ a coset leader of $\mathcal {C}$ such that $\boldsymbol{v}=\boldsymbol{\alpha}+\boldsymbol{u}$. It follows from Remark \ref{weight} that
$$d_{w,(\mathcal {L}^{-},\pi^{-})}\left(\boldsymbol{v^{*},\boldsymbol{u^{*}}}\right)= \overline{\omega}_{w,(\mathcal {L}^{-},\pi^{-})}\left(\boldsymbol{v^{*}-\boldsymbol{u^{*}}}\right)\leq \overline{\omega}_{w,(\mathcal {L},\pi)}(\boldsymbol{v}-\boldsymbol{u})=\overline{\omega}_{w,(\mathcal {L},\pi)}(\boldsymbol{\alpha})\leq \tilde{\rho}(\mathcal {C}).$$
Since $\boldsymbol{v}\in\mathbb{F}_q^n$ is arbitrary, we have $\tilde{\rho}(\mathcal {C}^{*})\leq\tilde{\rho}(\mathcal {C})$.
\end{proof}

\subsection{Construction 5}

\quad\;
Let $P$ and $Q$ be two posets with underlining sets $[s]$ and $[t]$ respectively. Let $\pi_1:[s]\rightarrow\mathbb{N}$ be a labeling map of $P$ such that $\sum\limits_{i\in[s]}\pi_1(i)=n_1$ and let $\pi_2:[t]\rightarrow\mathbb{N}$ be a labeling map of $Q$ such that $\sum\limits_{i\in[t]}\pi_2(i)=n_2$. Suppose that $\mathcal {L}=P\otimes Q$ (or $\mathcal {L}=P\star Q$). Then $\mathcal {L}$ is a poset with underlying set $[s]\times[t]=\left\{(i,j):i\in[s],j\in[j]\right\}$ and cardinality $st$.
Denote by $\pi_1(i)=\alpha_i$ and $\pi_2(i)=\beta_i$ in the remainder of this section.

Define the \emph{direct product} of labeling map $\pi_1$ and $\pi_2$ as $\pi=\pi_1\otimes\pi_2:[s]\times[t]\rightarrow \mathbb{N}$ such that $$\pi((i,j))=\alpha_i\beta_j$$
for $(i,j)\in[s]\times [t]$. Then $\pi$ is a labeling map of $\mathcal {L}$ such that $\sum\limits_{(i,j)\in\mathcal {L}}\pi((i,j))=n_1n_2$.

Let $\boldsymbol{u}=(\boldsymbol{u_1},\ldots,\boldsymbol{u_s})\in  \left(\mathbb{F}_q^{n_1},d_{w,(P,\pi_1)}\right)$ where $\boldsymbol{u_i}\in\mathbb{F}_q^{\alpha_i}$ and $\boldsymbol{v}=(\boldsymbol{v_1},\ldots,\boldsymbol{v_t})\in  \left(\mathbb{F}_q^{n_2},d_{w,(Q,\pi_2)}\right)$ where $\boldsymbol{v_i}\in\mathbb{F}_q^{\beta_i}$. Define $\boldsymbol{u}\otimes\boldsymbol{v}$ as $$\left\{u_{ij}v_{rl}:i\in[s],j\in[\alpha_i],r\in[t],l\in[\beta_l]\right\} \in\mathbb{F}_q^{n_1n_2}.$$
We write it in the form of a block matrix as following:

\begin{center}
  $\boldsymbol{u}\otimes \boldsymbol{v}=\left[
     \begin{array}{cccc}
       G_{1,1}^{\boldsymbol{uv}} & G_{1,2}^{\boldsymbol{uv}} & \cdots & G_{1,t}^{\boldsymbol{uv}}\\[1mm]
       G_{2,1}^{\boldsymbol{uv}} & G_{2,2}^{\boldsymbol{uv}} & \cdots & G_{2,t}^{\boldsymbol{uv}}\\[1mm]
       \vdots &\vdots&&\vdots\\[1mm]
       G_{s,1}^{\boldsymbol{uv}} & G_{s,2}^{\boldsymbol{uv}} & \cdots & G_{s,t}^{\boldsymbol{uv}}
     \end{array}
   \right]$
\end{center}
where $G_{i,j}^{\boldsymbol{uv}}$ is an $\alpha_i\times\beta_j$ matrix for $i\in[s]$ and $j\in[t]$, that is:
\begin{center}
  $G_{i,j}^{\boldsymbol{uv}}=\left[
     \begin{array}{cccc}
     u_{i1}v_{j1}&u_{i1}v_{j2}&\cdots&u_{i1}v_{j\beta_j}\\[1mm]
     u_{i2}v_{j1}&u_{i2}v_{j2}&\cdots&u_{i2}v_{j\beta_j}\\[1mm]
     \vdots&\vdots&&\vdots\\[1mm]
     u_{i\alpha_i}v_{j1}&u_{i\alpha_i}v_{j2}&\cdots&u_{i\alpha_i}v_{j\beta_j}
     \end{array}
   \right].$
\end{center}
Now we has given a partition of $\boldsymbol{u}\otimes\boldsymbol{v}$ whose $(i,j)$-th block is $G^{\boldsymbol{uv}}_{i,j}$ corresponding to the element $(i,j)$ of the poset $\mathcal {L}$. Set
$$W_{ij}^{\boldsymbol{uv}}=\max\left\{w(u_{i\varsigma}v_{j\mu}): 1\leq \varsigma\leq\alpha_i,1\leq \mu\leq\beta_j\right\}.$$
For any $\boldsymbol{u}\in\mathbb{F}_q^{n_1n_2}$ with $st$ blocks, the $(\mathcal {L},\pi,w)$-weight of $\boldsymbol{u}$ is then
$$\overline{\omega}_{w,(\mathcal {L},\pi)}(\boldsymbol{u})=\sum\limits_{(i,j)\in M_{\boldsymbol{u}}^{\mathcal {L}}}W_{ij}^{\boldsymbol{u}\otimes\boldsymbol{1}}+\sum\limits_{(i,j)\in I_{\boldsymbol{u}}^{\mathcal {L}}\setminus M_{\boldsymbol{u}}^{\mathcal {L}}}M_w=\sum\limits_{(i,j)\in M_{\boldsymbol{u}}^{\mathcal {L}}}W_{ij}^{\boldsymbol{u}\otimes\boldsymbol{1}}+\left|(i,j)\in I_{\boldsymbol{u}}^{\mathcal {L}}\setminus M_{\boldsymbol{u}}^{\mathcal {L}}\right|\cdot M_w.$$

Let $C_1\subseteq\left(\mathbb{F}_q^{n_1},d_{w,(P,\pi_1)}\right)$ be a $(P,\pi_1,w)$-code and let $C_2\subseteq\left(\mathbb{F}_q^{n_2},d_{w,(Q,\pi_2)}\right)$ be a $(Q,\pi_2,w)$-code. The \emph{tensor product} of $C_1$ and $C_2$, denoted by $\mathcal {C}=C_1\bigotimes C_2$, is given by
$$C_1\bigotimes C_2=\left\{\boldsymbol{u}\otimes\boldsymbol{v}:\boldsymbol{u}\in C_1,\ \boldsymbol{v}\in C_2\right\}.$$

\begin{proposition}\label{CAR}
Let $C_1$ be a linear $(P,\pi_1,w)$-code and let $C_2$ be a linear $(Q,\pi_2,w)$-code. Let $\mathcal {L}=P\otimes Q$ and $\pi=\pi_1\otimes\pi_2$. Then the following results hold:
\begin{enumerate}[(1)]
\item Suppose that $P$ is a chain with order relation $1<2<\cdots<s$ and $Q$ is an antichain, then
    $$d_{(Q,\pi_2)}(C_2)\left(d_{(P,\pi_1)}(C_1)-1\right)M_w+d_{w,(Q,\pi_2)}(C_2)\leq d_{w,(\mathcal {L},\pi)}(\mathcal {C})\leq d_{(P,\pi_1)}(C_1)d_{(Q,\pi_2)}(C_2)M_w.$$
\item Suppose that $P$ and $Q$ are both antichains, then
$$d_{(P,\pi_1)}(C_1)d_{(Q,\pi_2)}(C_2) m_w\leq d_{w,(\mathcal {L},\pi)}(\mathcal {C})\leq d_{(P,\pi_1)}(C_1)d_{(Q,\pi_2)}(C_2)M_w.$$
\item Suppose that $P$ is a chain with order relation $1<2<\cdots<s$ and $Q$ is a chain with order relation $1<2<\cdots<t$, then
    $$\left(d_{(P,\pi_1)}(C_1)d_{(Q,\pi_2)}(C_2)-1\right)M_w+m_w\leq d_{w,(\mathcal {L},\pi)}(\mathcal {C})\leq d_{(P,\pi_1)}(C_1)d_{(Q,\pi_2)}(C_2)M_w.$$

\end{enumerate}

\end{proposition}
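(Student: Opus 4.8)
The plan is to work throughout with the matrix picture of a tensor: identify $\boldsymbol{u}\otimes\boldsymbol{v}$ with the outer product $\boldsymbol{u}\boldsymbol{v}^{T}$, an $n_1\times n_2$ array whose $(i,j)$-th block is exactly $G_{i,j}^{\boldsymbol{uv}}$. Two facts I would isolate first are purely structural. First, since $\mathcal{C}$ is only the set of simple tensors I work with pairwise differences: for codewords $\boldsymbol{u}\otimes\boldsymbol{v}$ and $\boldsymbol{u'}\otimes\boldsymbol{v'}$, the difference $\boldsymbol{z}=\boldsymbol{u}\boldsymbol{v}^{T}-\boldsymbol{u'}\boldsymbol{v'}^{T}$ has every row lying in $\langle\boldsymbol{v},\boldsymbol{v'}\rangle\subseteq C_2$ and every column lying in $\langle\boldsymbol{u},\boldsymbol{u'}\rangle\subseteq C_1$ (here linearity of $C_1,C_2$ is essential). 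Second, in a Cartesian product one has $\langle A\times B\rangle_{P\otimes Q}=\langle A\rangle_{P}\times\langle B\rangle_{Q}$, so for a single tensor $I_{\boldsymbol{u}\otimes\boldsymbol{v}}^{\mathcal{L}}=I_{\boldsymbol{u}}^{P}\times I_{\boldsymbol{v}}^{Q}$ with $\big|I_{\boldsymbol{u}\otimes\boldsymbol{v}}^{\mathcal{L}}\big|=|I_{\boldsymbol{u}}^{P}|\cdot|I_{\boldsymbol{v}}^{Q}|$. Write $d_1=d_{(P,\pi_1)}(C_1)$ and $d_2=d_{(Q,\pi_2)}(C_2)$.

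The upper bound $d_{w,(\mathcal{L},\pi)}(\mathcal{C})\le d_1 d_2 M_w$ is uniform across the three cases and I would dispatch it immediately: pick nonzero $\boldsymbol{u}\in C_1$, $\boldsymbol{v}\in C_2$ with $|I_{\boldsymbol{u}}^{P}|=d_1$ and $|I_{\boldsymbol{v}}^{Q}|=d_2$; since $\boldsymbol{0}\in\mathcal{C}$, the minimum distance is at most $\overline{\omega}_{w,(\mathcal{L},\pi)}(\boldsymbol{u}\otimes\boldsymbol{v})$, and by the ideal identity this weight is a sum of $d_1 d_2$ block contributions, each at most $M_w$.

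For the lower bounds I would treat a nonzero difference $\boldsymbol{z}$ with rows in $C_2$ and columns in $C_1$, using that along a chain a nonzero codeword reaches a block of index at least the minimum distance, while along an antichain its block support has size at least the minimum distance. In case (2) the poset $\mathcal{L}$ is an antichain, so $\overline{\omega}_{w,(\mathcal{L},\pi)}(\boldsymbol{z})\ge |S|\,m_w$ where $S$ is the block support; a row-block by row-block count (each nonzero block-row carries a nonzero $C_2$-codeword, hence at least $d_2$ nonzero blocks, and there are at least $d_1$ nonzero block-rows) gives $|S|\ge d_1 d_2$. In case (3), both chains, I would locate a single dominating corner: taking $j_0$ the largest nonzero block-column, the nonzero columns there form a nonzero $C_1$-codeword reaching some block-row $i_1\ge d_1$, while a highest nonzero block-row forces $j_0\ge d_2$; the block $(i_1,j_0)$ is maximal in $I_{\boldsymbol{z}}^{\mathcal{L}}$ and generates the rectangle $[1,i_1]\times[1,j_0]\subseteq I_{\boldsymbol{z}}^{\mathcal{L}}$, all of whose $i_1 j_0-1\ge d_1 d_2-1$ non-corner elements are non-maximal and contribute $M_w$, the corner contributing at least $m_w$. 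In case (1), $\mathcal{L}$ is a disjoint union of $t$ chains, so the weight splits as $\sum_{j}\big((m_j-1)M_w+W_{m_j,j}\big)$ over nonzero block-columns $j$, where $m_j\ge d_1$ is the top nonzero block-row of column $j$; there are at least $d_2$ such columns, and the highest overall block-row supplies a nonzero $C_2$-codeword whose $(Q,\pi_2,w)$-weight lower-bounds $\sum_j W_{m_j,j}$ by $d_{w,(Q,\pi_2)}(C_2)$, which assembles into the claimed bound.

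The main obstacle is the lower bounds, and specifically the two certificate constructions: in case (3) verifying that the corner $(i_1,j_0)$ is genuinely maximal and that the full rectangle it dominates lies inside the ideal (so that only one block escapes the $M_w$ count), and in case (1) isolating a single row of $\boldsymbol{z}$ living in the top block-row whose entries simultaneously witness both the factor $d_2$ and the full weighted distance $d_{w,(Q,\pi_2)}(C_2)$. The structural reduction that differences have rows in $C_2$ and columns in $C_1$ is what makes all three cases flow from the minimum-distance properties of the factor codes; stating that reduction together with the ideal identity cleanly is the backbone of the argument.
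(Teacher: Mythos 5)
Your proof is correct, but for the lower bounds it takes a genuinely different route from the paper's. The paper fixes a single simple tensor $\boldsymbol{u}\otimes\boldsymbol{v}$, writes out its block matrix, identifies $I^{\mathcal{L}}_{\boldsymbol{u}\otimes\boldsymbol{v}}=I^{P}_{\boldsymbol{u}}\times I^{Q}_{\boldsymbol{v}}$ together with its maximal elements in each of the three order configurations, and reads off the weight directly (e.g.\ in the chain--chain case the unique maximal block is $(\lambda,\delta)$ and the weight is $W^{\boldsymbol{uv}}_{\lambda\delta}+(\lambda\delta-1)M_w$); this bounds the minimum nonzero \emph{weight} of $\mathcal{C}$. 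You instead bound the weight of an arbitrary difference $\boldsymbol{z}=\boldsymbol{u}\boldsymbol{v}^{T}-\boldsymbol{u'}\boldsymbol{v'}^{T}$ via the structural lemma that every row of $\boldsymbol{z}$ lies in $C_2$ and every column in $C_1$ (this is where linearity of $C_1,C_2$ enters), and then run the corner certificate in the chain--chain case and the column-by-column decomposition in the chain--antichain case. This is strictly more general, and in fact it closes a gap in the paper's argument: $\mathcal{C}=C_1\bigotimes C_2$ is only the set of simple tensors and is not closed under subtraction, so the minimum distance as defined ranges over differences that need not be simple tensors; the paper's computation only controls distances to $\boldsymbol{0}$ (equivalently, weights of codewords), whereas yours controls all pairs. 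What the paper's approach buys is exact weight formulas for simple tensors, which are reused later (e.g.\ in the lexicographic-product proposition); what yours buys is that the stated lower bounds really do apply to $d_{w,(\mathcal{L},\pi)}(\mathcal{C})$. The upper bound is handled identically in both treatments. Your individual steps all check out: the product ideal identity, the maximality of the corner $(i_1,j_0)$ and the inclusion of the full rectangle $[1,i_1]\times[1,j_0]$ in the ideal, the count of at least $d_{(P,\pi_1)}(C_1)d_{(Q,\pi_2)}(C_2)$ supported blocks in the antichain--antichain case, and the inequality $\sum_{j}W_{m_j,j}\ge d_{w,(Q,\pi_2)}(C_2)$ obtained from a nonzero row inside the top nonzero block-row.
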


\begin{proof}
Let $\boldsymbol{u}\otimes\boldsymbol{v}\in\mathcal {C}$.
\begin{enumerate}[(1)]
\item If $P$ is a chain and $Q$ is an antichain, then $(i,j)\leq (i',j')\in\mathcal {L}$ if and only if $i\leq i'$ and $j=j'$. Assume that $d_{(P,\pi_1)}(\boldsymbol{u})=\lambda$ and $I_{\boldsymbol{v}}^Q=\{\eta_1,\eta_2,\ldots,\eta_r\}$. Then
\begin{center}
  $\boldsymbol{u}\otimes \boldsymbol{v}=\left[
     \begin{array}{ccccccccc}
       O & \cdots  & G_{1,\eta_1}^{\boldsymbol{uv}}&\cdots& G_{1,\eta_2}^{\boldsymbol{uv}}&\cdots& G_{1,\eta_r}^{\boldsymbol{uv}}&\cdots& O\\[1mm]
       O & \cdots  & G_{2,\eta_1}^{\boldsymbol{uv}}&\cdots& G_{2,\eta_2}^{\boldsymbol{uv}}&\cdots& G_{2,\eta_r}^{\boldsymbol{uv}}&\cdots& O\\[1mm]
       \vdots&&\vdots&&\vdots&&\vdots&&\vdots\\[1mm]
       O & \cdots  & G_{\lambda, \eta_1}^{\boldsymbol{uv}}&\cdots& G_{\lambda, \eta_2}^{\boldsymbol{uv}}&\cdots& G_{\lambda, \eta_r}^{\boldsymbol{uv}}&\cdots& O\\[1mm]
       O&\cdots&O&\cdots&O&\cdots&O&\cdots&O\\[1mm]
       \vdots&&\vdots&&\vdots&&\vdots&&\vdots\\[1mm]
       O&\cdots&O&\cdots&O&\cdots&O&\cdots&O\\[1mm]
     \end{array}
   \right]$
\end{center}
satisfies $G_{\lambda, \eta_l}^{\boldsymbol{uv}}\neq O$ for $1\leq l\leq r$ is an $\alpha_{\lambda}\times\beta_{\eta_l}$ matrix. Then
$$\overline{\omega}_{w,(\mathcal {L},\pi)}(\boldsymbol{u}\otimes\boldsymbol{v})= \sum\limits_{(i,j)=(\lambda,\eta_l),\atop 1\leq l\leq r}W_{ij}^{\boldsymbol{uv}}+(\lambda-1)j_r M_w=\sum\limits_{(i,j)=(\lambda,\eta_l),\atop 1\leq l\leq r}W_{ij}^{\boldsymbol{uv}}+\left(d_{(P,\pi_1)}(\boldsymbol{u})-1\right) d_{(Q,\pi_2)}(\boldsymbol{v})M_w.$$
Note that
$$\overline{\omega}_{w,(Q,\pi_2)}(\boldsymbol{v})=\sum\limits_{j=\eta_l,1\leq l\leq r}W_j^Q(\boldsymbol{v})=\sum\limits_{j=\eta_l,1\leq l\leq r}\max\left\{w(v_{j\mu}):1\leq \mu\leq \beta_{i}\right\}.$$
Therefore
\begin{eqnarray*}
\sum\limits_{(i,j)=(\lambda,\eta_l),\atop 1\leq l\leq r}W_{ij}^{\boldsymbol{uv}}&=&\sum\limits_{(i,j)=(\lambda,\eta_l),\atop 1\leq l\leq r}\max\left\{w(u_{i\varsigma}v_{j\mu}): 1\leq \varsigma\leq\alpha_i,1\leq \mu\leq\beta_j\right\}\\
&=&\sum\limits_{j=\eta_l,1\leq l\leq r}\max\left\{w(u_{\lambda\varsigma}v_{j\mu}): 1\leq \varsigma\leq\alpha_{\lambda},1\leq \mu\leq\beta_j\right\}\\
&\geq& \sum\limits_{j=\eta_l,1\leq l\leq r}\max\left\{w(u_{\lambda\varsigma}v_{j\mu}): 1\leq \mu\leq\beta_j\right\}\\
&=&\overline{\omega}_{w,(Q,\pi_2)}(u_{\lambda\varsigma}\boldsymbol{v}).
\end{eqnarray*}
Therefore
$$d_{w,(\mathcal {L},\pi)}(\mathcal {C})\geq d_{(Q,\pi_2)}(C_2)\left(d_{(P,\pi_1)}(C_1)-1\right)M_w+d_{w,(Q,\pi_2)}(C_2).$$

On the other hand, consider $\boldsymbol{u}\in C_1$ such that $d_{(P,\pi_1)}(\boldsymbol{u})=\lambda=d_{(P,\pi_1)}(C_1)$ and $d_{(Q,\pi_2)}(\boldsymbol{v})=r=d_{(Q,\pi_2)}(C_2)$. From above discussion, we conclude that
$$\overline{\omega}_{w,(\mathcal {L},\pi)}(\boldsymbol{u}\otimes\boldsymbol{v})\leq \lambda r M_w.$$
The result then follows.
\item If $P$ and $Q$ are antichains, then $\mathcal {L}$ is an antichain. Therefore
    $$\overline{\omega}_{w,(\mathcal {L},\pi)}(\boldsymbol{u}\otimes\boldsymbol{v})= \sum\limits_{(i,j)\in I_{\boldsymbol{u}\otimes\boldsymbol{v}}^\mathcal {L}}W_{ij}^{\boldsymbol{uv}}.$$
    The result immediately follows.
\item If $P$ and $Q$ are both chains, then $(i,j)\leq (i',j')\in\mathcal {L}$ if and only if $i\leq i'$ and $j\leq j'$. Assume that $d_{(P,\pi_1)}(\boldsymbol{u})=\lambda$ and $d_{(Q,\pi_2)}(\boldsymbol{v})=\delta$. Then
\begin{center}
\begin{equation}\label{eq1}
  \boldsymbol{u}\otimes \boldsymbol{v}=\left[
     \begin{array}{cccccc}
       G_{1,1}^{\boldsymbol{uv}}&\cdots& G_{1,\delta}^{\boldsymbol{uv}}&O&\cdots& O\\[1mm]
       \vdots&&\vdots&\vdots&&\vdots\\[1mm]
       G_{\lambda,1}^{\boldsymbol{uv}}&\cdots& G_{\lambda,\delta}^{\boldsymbol{uv}}&O&\cdots& O\\[1mm]\\
       O&\cdots& O&O&\cdots& O\\[1mm]
       \vdots&&\vdots&\vdots&&\vdots\\[1mm]
       O&\cdots& O&O&\cdots& O\\
     \end{array}
   \right]
\end{equation}
\end{center}
satisfies $G_{\lambda,\delta}^{\boldsymbol{uv}}\neq O$ is an $\alpha_{\lambda}\times\beta_{\delta}$ matrix. Then
$$\overline{\omega}_{w,(\mathcal {L},\pi)}(\boldsymbol{u}\otimes\boldsymbol{v})= W_{\lambda\delta}^{\boldsymbol{uv}}+(\lambda\delta-1)M_w.$$
Hence
$$(\lambda\delta-1)M_w+m_w\leq\overline{\omega}_{w,(\mathcal {L},\pi)}(\boldsymbol{u}\otimes\boldsymbol{v})\leq \lambda\delta M_w.$$
The result then follows.
\end{enumerate}
\end{proof}

\begin{remark}
The case for $Q$ being a chain and $P$ being an antichain is symmetric with the case $P$ being a chain and $Q$ being an antichain.
\end{remark}

\begin{remark}
Let $P$ and $Q$ are two chains and let $\pi_1$ and $\pi_2$ be labeling maps of $P$ and $Q$ respectively.
When $\pi_1(i)=1$ for all $i\in[s]$ and $\pi_2(j)=1$ for all $j\in[t]$, we have that
\begin{eqnarray*}
d_{w,(\mathcal {L},\pi)}(\mathcal {C})&=&\left(d_{(P,\pi_1)}(C_1)d_{(Q,\pi_2)}(C_2)-1\right)M_w+m_w\\
&=&\left(d_{w,(P,\pi_1)}(C_1)-m_w\right)d_{(Q,\pi_2)}(C_2)+d_{w,(Q,\pi_2)}(C_2)\\
&=&\left(d_{w,(Q,\pi_2)}(C_2)-m_w\right)d_{(P,\pi_1)}(C_1)+d_{w,(P,\pi_1)}(C_1).
\end{eqnarray*}
\end{remark}

The following corollary, which has been shown in [\ref{POMSET}], is a special case of Proposition \ref{CAR} .

\begin{corollary}
 Let $C_1$ be a linear $(P,\pi_1,w)$-code and let $C_2$ be a linear $(Q,\pi_2,w)$-code. Let $\mathcal {L}=P\otimes Q$ and $\pi=\pi_1\otimes\pi_2$. For $w$ being the Lee weight over $\mathbb{Z}_m$ where $m$ is prime (that is, $\mathbb{Z}_m$ is a field) and $\pi_i$ is trivial, we have
\begin{enumerate}[(1)]
\item if $P$ and $Q$ are antichains, then
$$d_{(P,\pi_1)}(C_1)d_{(Q,\pi_2)}(C_2)\leq d_{w,(\mathcal {L},\pi)}(\mathcal {C})\leq d_{(P,\pi_1)}(C_1)d_{(Q,\pi_2)}(C_2)\left\lfloor\frac{m}{2}\right\rfloor.$$
\item if $P$ is a chain and $Q$ is an antichain, then
$$d_{(Q,\pi_2)}(C_2)\left(d_{(P,\pi_1)}(C_1)-1\right) \left\lfloor\frac{m}{2}\right\rfloor d_{w,(Q,\pi_2)}(C_2)\leq d_{w,(\mathcal {L},\pi)}(\mathcal {C})\leq d_{(P,\pi_1)}(C_1)d_{(Q,\pi_2)}(C_2)\left\lfloor\frac{m}{2}\right\rfloor.$$
\item if $P$ and $Q$ are two chains, then
\begin{eqnarray*}
d_{w,(\mathcal {L},\pi)}(\mathcal {C})&=&\left(d_{w,(P,\pi_1)}(C_1)-1\right)d_{(Q,\pi_2)}(C_2)+d_{w,(Q,\pi_2)}(C_2)\\
&=&\left(d_{w,(Q,\pi_2)}(C_2)-1\right)d_{(P,\pi_1)}(C_1)+d_{w,(P,\pi_1)}(C_1).
\end{eqnarray*}
\end{enumerate}
\end{corollary}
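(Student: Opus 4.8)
The plan is to obtain the corollary as a direct specialization of Proposition \ref{CAR}, so the only substantive work is to compute the two constants $M_w$ and $m_w$ attached to the Lee weight on $\mathbb{Z}_m$ and then substitute them into the three bounds already established there. First I would recall that for $\alpha\in\mathbb{Z}_m$ the Lee weight is $w(\alpha)=\min\{\alpha,m-\alpha\}$. This quantity is maximized at $\alpha=\lfloor m/2\rfloor$, giving $M_w=\lfloor m/2\rfloor$, and its smallest nonzero value is attained at $\alpha=1$, giving $m_w=1$. The hypothesis that $m$ is prime is used only to ensure that $\mathbb{Z}_m$ is a field, so that the weighted poset block framework of the earlier sections and Proposition \ref{CAR} apply verbatim; it does not affect the values of these two constants.

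With these two identities in hand, each part follows by rewriting the corresponding inequality of Proposition \ref{CAR}. For part (1), where both $P$ and $Q$ are antichains, I would substitute $m_w=1$ and $M_w=\lfloor m/2\rfloor$ into the chain of inequalities in Proposition \ref{CAR}(2), which immediately yields
$$d_{(P,\pi_1)}(C_1)d_{(Q,\pi_2)}(C_2)\leq d_{w,(\mathcal{L},\pi)}(\mathcal{C})\leq d_{(P,\pi_1)}(C_1)d_{(Q,\pi_2)}(C_2)\left\lfloor\frac{m}{2}\right\rfloor.$$
For part (2), with $P$ a chain and $Q$ an antichain, the same substitution of $M_w=\lfloor m/2\rfloor$ into Proposition \ref{CAR}(1) reproduces the stated two-sided bound. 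For part (3), where both posets are chains, rather than using the weaker inequality of Proposition \ref{CAR}(3) I would appeal to the sharper equality recorded in the Remark immediately preceding this corollary, which holds precisely because both labelings $\pi_1,\pi_2$ are trivial. Setting $m_w=1$ in that three-line identity turns its two latter forms into exactly the two displayed equalities of part (3).

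Since the substantive inequalities and the chain-chain equality are all inherited from Proposition \ref{CAR} and its accompanying Remark, the argument carries no hidden obstacle. The only point requiring care is the elementary verification that $M_w=\lfloor m/2\rfloor$ and $m_w=1$ for the Lee weight; once this is in place, the three assertions are obtained by mechanical substitution, and the proof is complete.
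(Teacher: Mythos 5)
Your proposal is correct and follows exactly the route the paper intends: the paper offers no separate proof, stating only that the corollary ``is a special case of Proposition \ref{CAR},'' and your computation of $M_w=\left\lfloor m/2\right\rfloor$ and $m_w=1$ for the Lee weight, together with the appeal to the trivial-labeling Remark for the chain--chain equality in part (3), is precisely the substitution that justifies that claim. No gaps.
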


\begin{proposition}\label{LEX}
Let $C_1$ be a linear $(P,\pi_1,w)$-code and let $C_2$ be a linear $(Q,\pi_2,w)$-code. Let $\mathcal {L}=P\star Q$ and $\pi=\pi_1\otimes\pi_2$. Then the following results hold:
\begin{enumerate}[(1)]
\item Suppose that $P$ is a chain such that $1<2<\cdots<s$ and $Q$ is an antichain, then
    $$\left(d_{(P,\pi_1)}(C_1)-1\right)tM_w+d_{w,(Q,\pi_2)}(C_2)\leq d_{w,(\mathcal {L},\pi)}(\mathcal {C})\leq\left(d_{(P,\pi_1)}(C_1)-1\right)tM_w+d_{(Q,\pi_2)}(C_2)M_w.$$
\item Suppose that $P$ is a chain with order relation $1<2<\cdots<s$ and $Q$ is a chain with order relation $1<2<\cdots<t$, then
     $$m_w+\left(d_{(P,\pi_1)}(C_1)-1\right)tM_w+ \left(d_{(Q,\pi_2)}(C_2)-1\right)M_w\leq d_{w,(\mathcal {L},\pi)}(\mathcal {C})\leq \left(d_{(P,\pi_1)}(C_1)-1\right)tM_w+d_{(Q,\pi_2)}(C_2)M_w .$$
\item Suppose that $P$ and $Q$ are both antichains, then
    $$d_{(P,\pi_1)}(C_1)d_{(Q,\pi_2)}(C_2) m_w\leq d_{w,(\mathcal {L},\pi)}(\mathcal {C})\leq d_{(P,\pi_1)}(C_1)d_{(Q,\pi_2)}(C_2)M_w.$$
\item Suppose that $Q$ is a chain such that $1<2<\cdots<t$ and $P$ is an antichain, then
    $$ d_{w,(P,\pi_1)}(C_1)+d_{(P,\pi_1)}(C_1)\left(d_{(Q,\pi_2)}(C_2)-1\right)M_w\leq d_{w,(\mathcal {L},\pi)}(\mathcal {C})\leq d_{(P,\pi_1)}(C_1)d_{(Q,\pi_2)}(C_2)M_w.$$
\end{enumerate}
\end{proposition}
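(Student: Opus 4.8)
The plan is to run, for each of the four configurations, exactly the template of the proof of Proposition~\ref{CAR}: for a pure tensor $\boldsymbol{u}\otimes\boldsymbol{v}\in\mathcal{C}$ I first describe the order on $\mathcal{L}=P\star Q$, then identify the ideal $I_{\boldsymbol{u}\otimes\boldsymbol{v}}^{\mathcal{L}}$ and its maximal set $M_{\boldsymbol{u}\otimes\boldsymbol{v}}^{\mathcal{L}}$, read off a closed formula for $\overline{\omega}_{w,(\mathcal{L},\pi)}(\boldsymbol{u}\otimes\boldsymbol{v})$ from the block-matrix picture, and finally obtain the two inequalities by squeezing the surviving block maxima $W_{ij}^{\boldsymbol{uv}}$ between $m_w$ and $M_w$. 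Write $\lambda=d_{(P,\pi_1)}(\boldsymbol{u})$ and $\delta=d_{(Q,\pi_2)}(\boldsymbol{v})$; for a nonzero pure tensor the linearity of $C_1$ and $C_2$ forces $\lambda\geq d_{(P,\pi_1)}(C_1)$ and $\delta\geq d_{(Q,\pi_2)}(C_2)$, and all the weight formulas below are monotone in $\lambda,\delta$, which is what converts the pointwise bounds into bounds on $d_{w,(\mathcal{L},\pi)}(\mathcal{C})$.

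The crucial structural input is how the lexicographic order fills in its lower levels: since $(i,j)\leq(i',j')$ exactly when $i<_P i'$, or $i=i'$ and $j\leq_Q j'$, a single top-level support element drags in \emph{all} of the strictly lower $P$-levels. Concretely, in case~(1) ($P$ a chain, $Q$ an antichain) $\mathcal{L}$ is a stack of $s$ size-$t$ antichains, so $M_{\boldsymbol{u}\otimes\boldsymbol{v}}^{\mathcal{L}}=\{(\lambda,j):j\in I_{\boldsymbol{v}}^{Q}\}$ while the whole levels $1,\ldots,\lambda-1$ are filled, giving $\overline{\omega}_{w,(\mathcal{L},\pi)}(\boldsymbol{u}\otimes\boldsymbol{v})=\sum_{j\in I_{\boldsymbol{v}}^{Q}}W_{\lambda j}^{\boldsymbol{uv}}+(\lambda-1)tM_w$. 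In case~(2) (both chains) $\mathcal{L}$ is the lexicographic chain of length $st$ with unique top support element $(\lambda,\delta)$, so $\overline{\omega}_{w,(\mathcal{L},\pi)}(\boldsymbol{u}\otimes\boldsymbol{v})=W_{\lambda\delta}^{\boldsymbol{uv}}+\bigl((\lambda-1)t+\delta-1\bigr)M_w$. In case~(3) both factors are antichains, whence $\mathcal{L}$ is an antichain and the computation is literally that of Proposition~\ref{CAR}(2). In case~(4) ($P$ an antichain, $Q$ a chain) the relation collapses to ``$i=i'$ and $j\leq j'$'', so $\mathcal{L}$ is a disjoint union of $s$ length-$t$ chains, $M_{\boldsymbol{u}\otimes\boldsymbol{v}}^{\mathcal{L}}=\{(i,\delta):i\in I_{\boldsymbol{u}}^{P}\}$, and $\overline{\omega}_{w,(\mathcal{L},\pi)}(\boldsymbol{u}\otimes\boldsymbol{v})=\sum_{i\in I_{\boldsymbol{u}}^{P}}W_{i\delta}^{\boldsymbol{uv}}+d_{(P,\pi_1)}(\boldsymbol{u})(\delta-1)M_w$.

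With the formulas in hand the inequalities are routine. For the lower bounds the only nontrivial move is to extract a minimum distance of $C_2$ (resp.\ $C_1$) from the surviving sum of block maxima: in cases~(1) and~(2) fix a coordinate $\varsigma_0$ with $u_{\lambda\varsigma_0}\neq0$, use $W_{\lambda j}^{\boldsymbol{uv}}=\max_{\varsigma,\mu}w(u_{\lambda\varsigma}v_{j\mu})\geq\max_{\mu}w(u_{\lambda\varsigma_0}v_{j\mu})$, and sum over $j\in I_{\boldsymbol{v}}^{Q}$ to obtain $\overline{\omega}_{w,(Q,\pi_2)}(u_{\lambda\varsigma_0}\boldsymbol{v})$, which is $\geq d_{w,(Q,\pi_2)}(C_2)$ because $u_{\lambda\varsigma_0}\boldsymbol{v}$ is a nonzero codeword of the linear code $C_2$; case~(4) is symmetric, fixing $\mu_0$ with $v_{\delta\mu_0}\neq0$ to bound $\sum_i W_{i\delta}^{\boldsymbol{uv}}\geq d_{w,(P,\pi_1)}(C_1)$, while in case~(2) one simply uses $W_{\lambda\delta}^{\boldsymbol{uv}}\geq m_w$ and in case~(3) $W_{ij}^{\boldsymbol{uv}}\geq m_w$ termwise. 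Combining with $\lambda\geq d_{(P,\pi_1)}(C_1)$ and $\delta\geq d_{(Q,\pi_2)}(C_2)$ gives the stated left-hand sides. Each upper bound follows by choosing $\boldsymbol{u}\in C_1$, $\boldsymbol{v}\in C_2$ with $d_{(P,\pi_1)}(\boldsymbol{u})=d_{(P,\pi_1)}(C_1)$ and $d_{(Q,\pi_2)}(\boldsymbol{v})=d_{(Q,\pi_2)}(C_2)$, substituting into the corresponding formula, and bounding every surviving $W_{ij}^{\boldsymbol{uv}}\leq M_w$.

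The main obstacle is purely combinatorial and lives in the second paragraph: correctly counting $I_{\boldsymbol{u}\otimes\boldsymbol{v}}^{\mathcal{L}}\setminus M_{\boldsymbol{u}\otimes\boldsymbol{v}}^{\mathcal{L}}$ for the lexicographic order, that is, recognising that each nonzero top level forces the $(\lambda-1)t$ lower-level positions into the ideal even at $Q$-coordinates outside the support of $\boldsymbol{v}$. This ``filling in'' is exactly what separates $P\star Q$ from $P\otimes Q$ and produces the extra factor $t$ absent from Proposition~\ref{CAR}. A secondary point to watch is that for a general weight $w$ the block maximum $W_{\lambda j}^{\boldsymbol{uv}}$ need not equal $W_j^{Q}(\boldsymbol{v})$, so the extraction step above must remain a one-sided estimate; this asymmetry, together with the spread $m_w\leq W_{ij}^{\boldsymbol{uv}}\leq M_w$, is precisely why each case yields an interval rather than an exact value.
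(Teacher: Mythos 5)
Your proof is correct and follows essentially the same route as the paper: you reduce cases (3) and (4) to the Cartesian-product computation, and for cases (1) and (2) you derive the same closed formulas $\sum_{j\in I_{\boldsymbol{v}}^{Q}}W_{\lambda j}^{\boldsymbol{uv}}+(\lambda-1)tM_w$ and $W_{\lambda\delta}^{\boldsymbol{uv}}+\bigl((\lambda-1)t+\delta-1\bigr)M_w$ that the paper obtains, with the same extraction of $d_{w,(Q,\pi_2)}(C_2)$ via the nonzero codeword $u_{\lambda\varsigma_0}\boldsymbol{v}\in C_2$. If anything, you supply more detail than the paper on case (4) and on the one-sided nature of the block-maximum estimate, but the argument is the same.
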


\begin{proof}
(3) and (4) are straightforward from Proposition \ref{CAR}.
Let $\boldsymbol{u}\otimes\boldsymbol{v}\in \mathcal {C}$.
\begin{enumerate}[(1)]
\item If $P$ is a chain and $Q$ is an antichain, then $(i,j)\leq (i',j')\in\mathcal {L}$ if and only if $i<i'$ or $(i,j)=(i',j')$. Suppose that $d_{(P,\pi_1)}(\boldsymbol{u})=\lambda$ and $I_{\boldsymbol{v}}^Q=\{\eta_1,\eta_2,\ldots,\eta_r\}$.
    The similar discussion as Proposition \ref{CAR}, we have
   \begin{eqnarray*}
   \overline{\omega}_{w,(\mathcal {L},\pi)}(\boldsymbol{u}\otimes\boldsymbol{v})&=& \sum\limits_{(i,j)=(\lambda,\eta_l),\atop 1\leq l\leq r}W_{ij}^{\boldsymbol{uv}}+(\lambda-1)tM_w\\[1mm]
   &=&\sum\limits_{(i,j)=(\lambda,\eta_l),\atop 1\leq l\leq r}W_{ij}^{\boldsymbol{uv}}+\left(d_{(P,\pi_1)}(\boldsymbol{u})-1\right)tM_w\\[1mm]
   &\geq&d_{w,(Q,\pi_2)}(C_2)+ \left(d_{(P,\pi_1)}(C_1)-1\right)tM_w.
   \end{eqnarray*}
\item Suppose that $P$ and $Q$ are chains. Then $\mathcal {L}$ is a chain such that $(i,j)\leq(i',j')$ if and only if $i<i'$ or $i=i¡®$ and $j\leq j'$. Assume that $d_{(P,\pi_1)}(\boldsymbol{u})=\lambda$ and $d_{(Q,\pi_2)}(\boldsymbol{v})=\delta$. Consider the matrix (\ref{eq1}) in the proof in Proposition \ref{CAR} (3), we have
    $$\overline{\omega}_{w,(\mathcal {L},\pi)}(\boldsymbol{u}\otimes\boldsymbol{v})=W_{\lambda\delta}^{\boldsymbol{uv}} +(\lambda-1)tM_w+(\delta-1)M_w\geq m_w+\left(d_{(P,\pi_1)}(C_1)-1\right)tM_w+ \left(d_{(Q,\pi_2)}(C_2)-1\right)M_w.$$
   On the other hand, if $\lambda=d_{(P,\pi_1)}(C_1)$ and $\delta=d_{(Q,\pi_2)}(C_2)$, we conclude that
    $$\overline{\omega}_{w,(\mathcal {L},\pi)}(\boldsymbol{u}\otimes\boldsymbol{v})\leq \left(d_{(P,\pi_1)}(C_1)-1\right)tM_w+d_{(Q,\pi_2)}(C_2)M_w .$$
\end{enumerate}
\end{proof}

\begin{remark}
Let $P$ be a chain with order relation $1<2<\cdots<s$ and let $Q$ be an antichain. Let $\pi_1$ and $\pi_2$ be labeling maps of $P$ and $Q$ respectively.
When $\pi_1(i)=1$ for all $i\in[s]$ and $\pi_2(j)=1$ for all $j\in[t]$, we have that
\begin{eqnarray*}
d_{w,(\mathcal {L},\pi)}(\mathcal {C})&=&\left(d_{(P,\pi_1)}(C_1)-1\right)tM_w+d_{w,(Q,\pi_2)}(C_2)\\
&=&\left(d_{w,(P,\pi_1)}(C_1)-m_w\right)t+d_{w,(Q,\pi_2)}(C_2).
\end{eqnarray*}
\end{remark}

\begin{remark}
Let $P$ be a chain with order relation $1<\cdots<s$ and let $Q$ be a chain with order relation $1<\cdots<t$. Let $\pi_1,\pi_2$ be labeling maps of $P,Q$ respectively.
When $\pi_1(i)=1$ for all $i\in[s]$ and $\pi_2(j)=1$ for all $j\in[t]$, we have that
\begin{eqnarray*}
d_{w,(\mathcal {L},\pi)}(\mathcal {C})&=&m_w+\left(d_{(P,\pi_1)}(C_1)-1\right)tM_w+ \left(d_{(Q,\pi_2)}(C_2)-1\right)M_w\\
&=&\left(d_{w,(P,\pi_1)}(C_1)-m_w\right)t+d_{w,(Q,\pi_2)}(C_2).
\end{eqnarray*}
\end{remark}

Let $P$ be a chain with order relation $1<\cdots<s$. Set
$$D_{(P,\pi_1)}(C_1)=\max\left\{d_{(P,\pi)}(\boldsymbol{u}): \boldsymbol{u}\in C_1\right\}.$$
Similarly $D_{(Q,\pi_2)}(C_2)$ can be defined.

Denote by $R_1$ and $R_2$ the covering radius of $C_1$ and $C_2$ respectively when $w$ is taken to be Hamming weight.

\begin{remark}
Let $C_1$ be a linear $\left(P,\pi_1,w\right)$-code. When $P$ is a chain such that $1<\cdots<s$, we have that $$(R_1-1)M_w<\tilde{\rho}(C_1)\leq R_1M_w.$$
\end{remark}

\begin{theorem}\label{PRODUCT}
Let $C_1$ be a linear $(P,\pi_1,w)$-code and let $C_2$ be a linear $(Q,\pi_2,w)$-code. Suppose that $\mathcal {L}=P\otimes Q$ and $\pi=\pi_1\otimes\pi_2$. Then
    $$\tilde{\rho}(\mathcal {C})\geq \max\left\{s\tilde{\rho}(C_2),\ t\tilde{\rho}(C_1)\right\}.$$
Moreover,
\begin{enumerate}[(1)]
\item Suppose that $P$ is a chain with order relation $1<2<\cdots<s$ and $Q$ is an antichain, then
     \begin{enumerate}[(a)]
    \item $\tilde{\rho}(\mathcal {C})= stM_w$ if $D_{(P,\pi_1)}(C_1)<s$.
    \item $\tilde{\rho}(\mathcal {C})\geq R_2(s-1)M_w+R_2m_w$.
    \item $\tilde{\rho}(\mathcal {C})\leq(s-1)tM_w+\tilde{\rho}(C_2)$ if $D_{(P,\pi_1)}(C_1)=s$ and $\alpha_s=1$.
    \end{enumerate}

\item Suppose that $P$ and $Q$ are both chains with order relations $1<2<\cdots<s$ and $1<2<\cdots<t$ respectively, then
    \begin{enumerate}[(a)]
    \item
    $\tilde{\rho}(\mathcal {C})= stM_w$ if $D_{(P,\pi_1)}(C_1)< s$ or $D_{(Q,\pi_2)}(C_2)<t$.

    \item
    $\tilde{\rho}(\mathcal {C})\leq(s-1)tM_w+\tilde{\rho}(C_2)$ if $ D_{(P,\pi_1)}(C_1)=s$, $D_{(Q,\pi_2)}(C_2)=t$ and $\alpha_s=1$.
    \item
    $\tilde{\rho}(\mathcal {C})\leq(t-1)sM_w+\tilde{\rho}(C_1)$ if $ D_{(P,\pi_1)}(C_1)=s$, $D_{(Q,\pi_2)}(C_2)=t$ and $\beta_t=1$.
    \item
    $\tilde{\rho}(\mathcal {C})\leq\min\{(s-1)tM_w+\tilde{\rho}(C_2),(t-1)sM_w+\tilde{\rho}(C_1)\}$  if $ D_{(P,\pi_1)}(C_1)=s$, $D_{(Q,\pi_2)}(C_2)=t$ and $\alpha_s=\beta_t=1$.
    \item $\tilde{\rho}(\mathcal {C})\geq \max\{(sR_2-1)M_w+m_w,(tR_1-1)M_w+m_w\}$.
    \end{enumerate}

\end{enumerate}
\end{theorem}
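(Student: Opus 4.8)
The plan is to treat the global lower bound first and then the refined case-by-case estimates, throughout exploiting the symmetry $P\otimes Q\cong Q\otimes P$, which swaps the roles of $(P,\pi_1,C_1,s)$ and $(Q,\pi_2,C_2,t)$; hence it suffices to prove $\tilde{\rho}(\mathcal{C})\ge s\tilde{\rho}(C_2)$, the companion bound $\tilde{\rho}(\mathcal{C})\ge t\tilde{\rho}(C_1)$ following by exchanging the two factors and taking the maximum. The engine of the whole theorem is a row-decomposition inequality for the Cartesian product: writing $\boldsymbol{\varepsilon}^{(i)}$ for the restriction of $\boldsymbol{\varepsilon}\in\mathbb{F}_q^{n_1n_2}$ to the $i$-th block-row (the blocks $(i,1),\dots,(i,t)$, regarded as carrying the poset $Q$), I would show $\overline{\omega}_{w,(\mathcal{L},\pi)}(\boldsymbol{\varepsilon})\ge\sum_{i=1}^{s}\overline{\omega}_{w,(Q,\pi_2)}(\boldsymbol{\varepsilon}^{(i)})$. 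Granting this, I pick $\boldsymbol{y}\in\mathbb{F}_q^{n_2}$ in a heaviest coset of $C_2$, so that $\overline{\omega}_{w,(Q,\pi_2)}(\boldsymbol{y}-\boldsymbol{c}_2)\ge\tilde{\rho}(C_2)$ for every $\boldsymbol{c}_2\in C_2$, and take the deep hole $\boldsymbol{z}=\boldsymbol{1}_{n_1}\otimes\boldsymbol{y}$. For any $\boldsymbol{c}\in\mathcal{C}$ the first sub-row of the $i$-th block-row of $\boldsymbol{z}-\boldsymbol{c}$ equals $\boldsymbol{y}-\boldsymbol{c}_2^{(i)}$ for some $\boldsymbol{c}_2^{(i)}\in C_2$; since block-weights dominate sub-row weights and $\overline{\omega}_{w,(Q,\pi_2)}$ is monotone under enlarging the support, each row contributes at least $\tilde{\rho}(C_2)$, and the decomposition yields $\overline{\omega}_{w,(\mathcal{L},\pi)}(\boldsymbol{z}-\boldsymbol{c})\ge s\tilde{\rho}(C_2)$.

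For the two saturation identities (1)(a) and (2)(a) I would use that the code never reaches a maximal block of $\mathcal{L}$. In (1)(a), $P$ is the chain $1<\cdots<s$ and $Q$ an antichain, so $\mathcal{L}$ is the disjoint union of the $t$ chains $(1,j)<\cdots<(s,j)$; the hypothesis $D_{(P,\pi_1)}(C_1)<s$ forces $\boldsymbol{u_s}=\boldsymbol{0}$ for all $\boldsymbol{u}\in C_1$, hence every codeword of $\mathcal{C}$ vanishes on block-row $s$. Choosing $\boldsymbol{z}$ with a weight-$M_w$ entry in each top block $(s,j)$ then puts each $(s,j)$ in the support of every error vector, so $I_{\boldsymbol{z-c}}^{\mathcal{L}}=\mathcal{L}$ and every maximal block has weight $M_w$, giving $\overline{\omega}_{w,(\mathcal{L},\pi)}(\boldsymbol{z}-\boldsymbol{c})=stM_w$ for all $\boldsymbol{c}$; as $stM_w$ is the global maximum weight, $\tilde{\rho}(\mathcal{C})=stM_w$. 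Case (2)(a) is identical, with the unique maximal element $(s,t)$ of the grid poset in place of the top row, the hypothesis $D_{(P,\pi_1)}(C_1)<s$ or $D_{(Q,\pi_2)}(C_2)<t$ keeping $(s,t)$ outside the reach of every codeword.

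The remaining lower bounds (1)(b) and (2)(e) come from planting a Hamming deep hole of $C_2$ in the first sub-row of block-row $s$: after subtracting the top row of any codeword it still has at least $R_2$ nonzero $Q$-blocks (resp. top $Q$-level $\ge R_2$), which force, through the chain below, a contribution $\ge(s-1)M_w+m_w$ per column in case (1), and through the grid a contribution $\ge(sR_2-1)M_w+m_w$ in case (2), the symmetric column construction giving the $\max$. For the upper bounds (1)(c) and (2)(b) I would cover the extreme block-row optimally and pay the full ideal weight beneath it: the hypotheses $D_{(P,\pi_1)}(C_1)=s$ and $\alpha_s=1$ force the $s$-th block projection of $C_1$ to be all of $\mathbb{F}_q$, hence there is a $\boldsymbol{u}^{(0)}\in C_1$ with nonzero $s$-th block, and then $\{\boldsymbol{u}^{(0)}\otimes\boldsymbol{v}:\boldsymbol{v}\in C_2\}$ realises on block-row $s$ every element of $C_2$. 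Given $\boldsymbol{z}$, choose $\boldsymbol{v}$ covering the top row within $\tilde{\rho}(C_2)$; since block-row $s$ is then isomorphic to $(\mathbb{F}_q^{n_2},(Q,\pi_2))$ it contributes at most $\tilde{\rho}(C_2)$, while the $(s-1)t$ blocks of the lower rows each contribute at most $M_w$, whence $\tilde{\rho}(\mathcal{C})\le(s-1)tM_w+\tilde{\rho}(C_2)$. Case (2)(c) repeats this for the right column (using $\beta_t=1$), and (2)(d) takes the minimum of the two.

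The main obstacle is the row-decomposition inequality underpinning the global bound. I expect to prove it by a block-by-block comparison: every block $(i,j)$ lying in the $Q$-ideal of $\boldsymbol{\varepsilon}^{(i)}$ also lies in $I_{\boldsymbol{\varepsilon}}^{\mathcal{L}}$, and its contribution to $\overline{\omega}_{w,(\mathcal{L},\pi)}$ dominates its contribution to $\overline{\omega}_{w,(Q,\pi_2)}(\boldsymbol{\varepsilon}^{(i)})$, the only delicate point being that a block maximal in $\mathcal{L}$ is automatically maximal in its own row, so the cheaper weight $W_{ij}$ never undercuts a forced $M_w$; summing over the disjoint families of blocks gives the inequality. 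The secondary technical nuisance, recurring in every part, is the block-size mismatch $\alpha_i\beta_j$ versus $\beta_j$, which I would absorb uniformly by projecting each $P$-block onto its first coordinate and invoking monotonicity of the weighted poset weight, and I would finally cross-check all constants against the exact weight formula established in Proposition \ref{CAR}.
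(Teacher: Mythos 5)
Your proposal is correct and follows essentially the same route as the paper: the same deep hole $\boldsymbol{1}_{n_1}\otimes\boldsymbol{y}$ (with $\boldsymbol{y}$ a heaviest coset leader of $C_2$) and the same Hamming deep holes planted in block-row $s$ for the lower bounds, the same top-row/top-block saturation argument for (1)(a) and (2)(a), and the same covering argument via a codeword of $C_1$ with last block equal to $1$ for the upper bounds. The only real difference is presentational: you isolate and justify the row-decomposition inequality $\overline{\omega}_{w,(\mathcal{L},\pi)}(\boldsymbol{\varepsilon})\geq\sum_{i}\overline{\omega}_{w,(Q,\pi_2)}(\boldsymbol{\varepsilon}^{(i)})$ together with the maximality argument behind it, which the paper asserts in a single displayed line without proof.
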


\begin{proof}
Let $\zeta\in\mathbb{F}_q$ satisfies $w(\zeta)=M_w$. Let $\tilde{\boldsymbol{h}}\in\mathbb{F}_q^{n_2}$ be a coset leader of $C_2$ satisfing that $\overline{\omega}_{w,(Q,\pi_2)}(\tilde{\boldsymbol{h}})=\tilde{\rho}(C_2)$. Take $\boldsymbol{h}=(\tilde{\boldsymbol{h}},\ldots,\tilde{\boldsymbol{h}})^T_{1\times n_1}\in \mathbb{F}_q^{n_1n_2}$. Let $\boldsymbol{u}\otimes\boldsymbol{v}\in\mathcal {C}$. Suppose that $\boldsymbol{u}=(u_{11},\ldots,u_{1\alpha_1},\ldots,u_{s1},\ldots,u_{s\alpha_s})$. Then $$\boldsymbol{u}\otimes\boldsymbol{v}= (u_{11}\boldsymbol{v},\ldots,u_{1\alpha_1}\boldsymbol{v},\ldots, u_{s1}\boldsymbol{v},\ldots,u_{s\alpha_{s}}\boldsymbol{v})^T_{1\times n_1}$$
    and hence
\begin{eqnarray*}
d_{w,(\mathcal {L},\pi)}(\boldsymbol{h},\boldsymbol{u}\otimes\boldsymbol{v}) &=&\overline{\omega}_{w,(\mathcal {L},\pi)}(\boldsymbol{h},\boldsymbol{u}\otimes\boldsymbol{v})
\geq\sum\limits_{i=1}^{s}\overline{\omega}_{w,(Q,\pi_2)} (\tilde{\boldsymbol{h}}-u_{i\alpha_i}\boldsymbol{v})\\
&\geq&\sum\limits_{i=1}^{s}\overline{\omega}_{w,(Q,\pi_2)} (\tilde{\boldsymbol{h}})=s\tilde{\rho}(C_2).
\end{eqnarray*}
Therefore $\tilde{\rho}(\mathcal {C})\geq s\tilde{\rho}(C_2)$. In the same way. we can prove $\tilde{\rho}(\mathcal {C})\geq t\tilde{\rho}(C_1)$

\begin{enumerate}[(1)]
\item Suppose that $P$ is a chain and $Q$ is an antichain.
\begin{itemize}

\item {\bf (a): }
Suppose that $D_{(P,\pi_1)}(C_1)< s$, then $\boldsymbol{g}\in \mathcal {C}$ has the form
    \begin{equation}\label{eq3}
    \left[\begin{array}{ccc}
       G_{1,1}&\cdots&G_{1,t}\\[1mm]
       \vdots&&\vdots\\[1mm]
       G_{s-1,1}&\cdots&G_{s-1,t}\\[1mm]
       O&\cdots&O\\
           \end{array}
   \right]
   \end{equation}
   where $G_{i,j}$ is a $\alpha_i\times\beta_j$ matrix. Take $\boldsymbol{h}\in\mathbb{F}_q^{n_1n_2}$ of the form
   \begin{equation}\label{eq2}
    \left[\begin{array}{ccc}
       H_{1,1}&\cdots&H_{1,t}\\[1mm]
       \vdots&&\vdots\\[1mm]
       H_{s-1,1}&\cdots&H_{s-1,t}\\[1mm]
       H_{s,1}&\cdots&H_{s,t}\\
      \end{array}
   \right]
   \end{equation}
   where $H_{s,r}$ is a $\alpha_s\times \beta_r$ such that $\zeta$ is an element of $H_{s,r}$ for $1\leq r\leq t$. Then for any $\boldsymbol{c}\in\mathcal {C}$, we have $\overline{\omega}_{w,(\mathcal {L},\pi)}(\boldsymbol{h}-\boldsymbol{c})=stM_w$ and hence $\min\limits_{\boldsymbol{c}\in C}d_{w,(\mathcal {L},\pi)}(\boldsymbol{h},\boldsymbol{c})=stM_w$ which implies that $\tilde{\rho}(\mathcal {C})=stM_w$.

\item {\bf (b): } Let $\boldsymbol{h}\in\mathbb{F}_q^{n_1n_2}$ have the form (\ref{eq2}). Write
\begin{equation}\label{eq4}
 [H_{s,1}\cdots H_{s,t}]=\left[\begin{array}{ccc}
       h_{11}&\cdots&h_{1n_2}\\[1mm]
       \vdots&&\vdots\\[1mm]
       h_{\alpha_s1}&\cdots&h_{\alpha_sn_2}\\
      \end{array}
   \right].
 \end{equation}
  Suppose that $\boldsymbol{h_1}=(h_{11},\ldots,h_{1n_2})\in\mathbb{F}_q^{n_2}$ such that $\min\limits_{\boldsymbol{c}\in C_2}d_{(Q,\pi_2)}(\boldsymbol{h_1},\boldsymbol{c})=R_2$. Then
  $$\overline{\omega}_{w,(\mathcal {L},\pi)}(\boldsymbol{h}- \boldsymbol{u}\otimes\boldsymbol{v})\geq (s-1)R_2M_w+R_2m_w.$$

\item {\bf (c): }
Suppose that $D_{(P,\pi_1)}(C_1)=s$ and $\alpha_s=1$. Let $\boldsymbol{h}\in\mathbb{F}_q^{n_1n_2}$.
\begin{itemize}
  \item  If $\overline{\omega}_{w,(\mathcal {L},\pi)}(\boldsymbol{h})\leq (s-1)tM_w$, then $\min\limits_{\boldsymbol{c}\in C}d_{w,(\mathcal {L},\pi)}(\boldsymbol{v},\boldsymbol{c})\leq (s-1)tM_w$ since $\boldsymbol{0}\in\mathcal {C}$.
  \item If $\overline{\omega}_{w,(\mathcal {L},\pi)}(\boldsymbol{h})>(s-1)tM_w$, then $\boldsymbol{v}$ has the form (\ref{eq2}) such that not all $H_{sr}$ are $O$ for $1\leq r\leq t$. Write
      $$[H_{s1}\cdots H_{st}]=(h_1,\ldots,h_{\beta_1},\ldots, h_{\beta_1+\cdots+\beta_{t-1}+1},\ldots,h_{n_2})= \boldsymbol{h_s}\in\mathbb{F}_q^{n_2}.$$
For $\boldsymbol{h_s}\in\mathbb{F}_q^{n_2}$, there exists $\boldsymbol{v}\in C_2$ such that $d_{w,(Q,\pi_2)}(\boldsymbol{h_s},\boldsymbol{v})\leq\tilde{\rho}(C_2)$. Consider $\boldsymbol{0}\neq \boldsymbol{u}\otimes\boldsymbol{v}\in\mathcal {C}$ with $\boldsymbol{u}=(u_1,\ldots,u_{\alpha_1},\ldots,u_{\alpha_1+\cdots+\alpha_{s-1}},1)\in C_1$, then $$\boldsymbol{u}\otimes\boldsymbol{v}= (u_{1}\boldsymbol{v},\ldots,u_{\alpha_1}\boldsymbol{v},\ldots, u_{\alpha_1+\cdots+\alpha_{s-1}}\boldsymbol{v},\boldsymbol{v})^{T}.$$
Therefore
$$\overline{\omega}_{w,(\mathcal {L},\pi)}(\boldsymbol{h}-\boldsymbol{u}\otimes\boldsymbol{v})\leq(s-1)tM_w+\tilde{\rho}(C_2).$$
\end{itemize}
To sum up, we conclude that
$$\tilde{\rho}(\mathcal {C})\leq(s-1)tM_w+\tilde{\rho}(C_2).$$
\end{itemize}

\item The proof is similar to (1) and hence we omit it.

\end{enumerate}
\end{proof}

\begin{theorem}
Let $C_1$ be a linear $(P,\pi_1,w)$-code and let $C_2$ be a linear $(Q,\pi_2,w)$-code. Suppose that $\mathcal {L}=P\star Q$ and $\pi=\pi_1\otimes\pi_2$. Then
$$\tilde{\rho}(\mathcal {C})\geq \max\left\{s\tilde{\rho}(C_2),\ t\tilde{\rho}(C_1)\right\}.$$
Moreover,
\begin{enumerate}[(1)]
\item If $P$ is a chain such that $1<2<\cdots<s$, then
    \begin{enumerate}[(a)]
    \item
    $\tilde{\rho}(\mathcal {C})= stM_w$ if $D_{(P,\pi_1)}(C_1)< s$.\\
    Especially when $Q$ is a chain,  we have $\tilde{\rho}(\mathcal {C})= stM_w$ if $D_{(P,\pi_1)}(C_1)< s$ or $D_{(Q,\pi_2)}(C_2)< t$.
    \item
    $\tilde{\rho}(\mathcal {C})\leq(s-1)tM_w+\tilde{\rho}(C_2)$ if $ D_{(P,\pi_1)}(C_1)=s$, $D_{(Q,\pi_2)}(C_2)=t$ and $\alpha_s=1$.
    \item
    $\tilde{\rho}(\mathcal {C})\geq \max\left\{(s-1)tM_w+\tilde{\rho}(C_2),\ t\tilde{\rho}(C_1)\right\}.$
        \end{enumerate}
\item If $P$ is an antichain and $Q$ is a chain with order relation $1<2<\cdots<t$, then
     \begin{enumerate}[(a)]
    \item $\tilde{\rho}(\mathcal {C})= stM_w$ if $D_{(Q,\pi_2)}(C_2)<t$.
    \item $\tilde{\rho}(\mathcal {C})\leq(t-1)sM_w+\tilde{\rho}(C_1)$ if $D_{(Q,\pi_2)}(C_2)=t$ and $\beta_t=1$.
    \end{enumerate}
 \end{enumerate}
\end{theorem}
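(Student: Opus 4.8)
The plan is to follow the blueprint of the proof of Theorem~\ref{PRODUCT} for the Cartesian product, adapting every step to the lexicographic ideal structure recorded in Proposition~\ref{LEX}. The essential tool is a decomposition inequality: writing $\boldsymbol{x}=\boldsymbol{h}-\boldsymbol{u}\otimes\boldsymbol{v}$ in block--matrix form with $s$ row--blocks (indexed by $P$) and $t$ column--blocks (indexed by $Q$), I will show that for $\mathcal{L}=P\star Q$ one still has $\overline{\omega}_{w,(\mathcal{L},\pi)}(\boldsymbol{x})\geq\sum_{i=1}^{s}\overline{\omega}_{w,(Q,\pi_2)}(\boldsymbol{x}_i)$, where $\boldsymbol{x}_i$ is the $i$-th $P$-level restricted to its last row, together with the symmetric column version $\overline{\omega}_{w,(\mathcal{L},\pi)}(\boldsymbol{x})\geq\sum_{j=1}^{t}\overline{\omega}_{w,(P,\pi_1)}(\boldsymbol{x}^{j})$. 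Both follow from two observations: within a fixed $P$-level (resp. $Q$-level) the induced order is exactly $Q$ (resp. $P$), and a non-maximal level in the lexicographic order drags its \emph{entire} set of blocks into the ideal, each contributing $M_w$, which dominates the corresponding $Q$- (resp. $P$-) weight; monotonicity of the weighted poset block weight under enlargement of support then permits the passage to a single row/column. Feeding the stacked vector $\boldsymbol{h}=(\tilde{\boldsymbol{h}},\dots,\tilde{\boldsymbol{h}})^{T}$ (with $\tilde{\boldsymbol{h}}$ a heaviest coset leader of $C_2$) and its column analogue, together with the coset-leader inequality $\overline{\omega}_{w,(Q,\pi_2)}(\tilde{\boldsymbol{h}}-u_{i\alpha_i}\boldsymbol{v})\geq\tilde{\rho}(C_2)$, then yields $\tilde{\rho}(\mathcal{C})\geq\max\{s\tilde{\rho}(C_2),\,t\tilde{\rho}(C_1)\}$.

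For the ``$=stM_w$'' statements (1)(a) and (2)(a) I would argue that the hypothesis forces every codeword to miss the top of the relevant chain(s): when $P$ is a chain and $D_{(P,\pi_1)}(C_1)<s$ no codeword reaches $P$-level $s$, and when $\mathcal{L}$ is a disjoint union of copies of the chain $Q$ and $D_{(Q,\pi_2)}(C_2)<t$ no codeword reaches $Q$-level $t$. Placing an element $\zeta$ with $w(\zeta)=M_w$ in each topmost block then produces a vector whose difference from every codeword has that top block among its maximal elements, so the whole ideal below it is filled and the weight equals the global maximum $stM_w$; since $stM_w$ bounds every weight, equality of the covering radius follows. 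The finer lower bound in (1)(c), $\tilde{\rho}(\mathcal{C})\geq(s-1)tM_w+\tilde{\rho}(C_2)$, uses the vector whose first $s-1$ row--blocks are saturated with $\zeta$ and whose top block carries $\tilde{\boldsymbol{h}}$; the key point is that the top $P$-level of the difference can never be annihilated, for annihilation would force $\tilde{\boldsymbol{h}}=u_{s\alpha_s}\boldsymbol{v}\in C_2$, contradicting that $\tilde{\boldsymbol{h}}$ is a nonzero coset leader, so the $(s-1)t$ lower blocks contribute $(s-1)tM_w$ while the top contributes at least $\tilde{\rho}(C_2)$.

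The upper bounds (1)(b) and (2)(b) are where the hypotheses $\alpha_s=1$ and $\beta_t=1$ enter decisively. For (1)(b) I would split on the weight of $\boldsymbol{h}$: if $\overline{\omega}_{w,(\mathcal{L},\pi)}(\boldsymbol{h})\leq(s-1)tM_w$ then $\boldsymbol{0}\in\mathcal{C}$ already covers it; otherwise $P$-level $s$ lies in the support of $\boldsymbol{h}$, and since $\alpha_s=1$ this level is a single vector $\boldsymbol{h}_s\in\mathbb{F}_q^{n_2}$. Using $D_{(P,\pi_1)}(C_1)=s$ I pick $\boldsymbol{u}\in C_1$ reaching level $s$ and rescale so that $u_{s1}=1$; then the last row of $\boldsymbol{u}\otimes\boldsymbol{v}$ is exactly $\boldsymbol{v}$, so choosing $\boldsymbol{v}\in C_2$ within $\tilde{\rho}(C_2)$ of $\boldsymbol{h}_s$ produces a codeword at $\mathcal{L}$-distance at most $(s-1)tM_w+\tilde{\rho}(C_2)$. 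For (2)(b), where $\mathcal{L}$ is a disjoint union of $s$ copies of the chain $Q$, the condition $\beta_t=1$ makes the whole last column $\{(i,t):i\in[s]\}$ of $\boldsymbol{u}\otimes\boldsymbol{v}$ coincide with $\boldsymbol{u}$ itself (after normalizing a $\boldsymbol{v}\in C_2$ with $d_{(Q,\pi_2)}(\boldsymbol{v})=t$ to have top entry $1$); covering that last column, read as an element of $\mathbb{F}_q^{n_1}$ under the antichain $P$, by a codeword $\boldsymbol{u}\in C_1$ within $\tilde{\rho}(C_1)$, while bounding each chain's non-top contribution by $(t-1)M_w$, yields $\tilde{\rho}(\mathcal{C})\leq(t-1)sM_w+\tilde{\rho}(C_1)$.

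The main obstacle I anticipate is precisely the asymmetry introduced by the lexicographic order: unlike the Cartesian case, a single nonzero block at a higher $P$-level forces every block at all lower levels into the ideal, so the row/column decomposition must be established as an inequality (rather than the equality available in the Cartesian setting), and one must verify that the heaviest-coset-leader vectors chosen for the lower bounds genuinely cannot be pushed down a level by any codeword. In the upper bounds the delicate step is the normalization of a top-reaching codeword, which is legitimate only because $\alpha_s=1$ (resp. $\beta_t=1$) collapses the top block to a single coordinate and lets a free $C_2$- (resp. $C_1$-) codeword be installed there exactly; without this one-dimensionality the top block could not be matched and the clean bound would fail.
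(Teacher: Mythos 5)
Your proposal is correct and follows essentially the same route as the paper: the paper's own proof constructs the same stacked coset-leader vector $\boldsymbol{h}=(\tilde{\boldsymbol{h}},\ldots,\tilde{\boldsymbol{h}})^{T}$ to get the lower bound $(s-1)tM_w+\tilde{\rho}(C_2)$ and then declares the remaining cases ``on similar lines to Theorem \ref{PRODUCT}'', which is exactly the adaptation you carry out (row/column decomposition inequality, $\zeta$-saturated top blocks for the $stM_w$ equalities, and the normalized top-reaching codeword under $\alpha_s=1$ or $\beta_t=1$). Your version is in fact more explicit than the paper's about why the lexicographic ideal structure preserves these steps.
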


\begin{proof}

Let $P$ be a chain and let $Q$ be an antichain.
Let $\tilde{\boldsymbol{h}}\in\mathbb{F}_q^{n_2}$ be a coset leader of $C_2$ satisfing that $\overline{\omega}_{w,(Q,\pi_2)}(\tilde{\boldsymbol{h}})=\tilde{\rho}(C_2)$.
Take $\boldsymbol{h}=(\tilde{\boldsymbol{h}},\ldots,\tilde{\boldsymbol{h}})^T_{1\times n_1}$. Then $d_{w,(\mathcal {L},\pi)}(\boldsymbol{h},\boldsymbol{0})=(s-1)tM_w+\tilde{\rho}(C_2)$. Let $\boldsymbol{0}\neq\boldsymbol{u}\otimes\boldsymbol{v}\in\mathcal {C}$. Suppose that $\overline{\omega}_{(P,\pi_1)}(\boldsymbol{u})=\lambda$. Then
$$d_{w,(\mathcal {L},\pi)}(\boldsymbol{h},\boldsymbol{u}\otimes\boldsymbol{v}) \geq(s-1)tM_w+\tilde{\rho}(C_2).$$
The rest of the proof is on similar lines to Theorem \ref{PRODUCT} and hence we omit it.
\end{proof}

\section{Conclusion}

Now we show that our results lead to several previous results.
\begin{enumerate}[(1)]
\item When all blocks are trivial and $w$ is taken to be the Lee weight over $\mathbb{F}_q$, the results coincide with the results under pomset metric as seen in  [\ref{POMSET}].
\item When $w$ is taken to be the Hamming weight over $\mathbb{F}_q$, our results coincide with the poset block metric case. In particular, when all block are trivial, the results coincide with the poset metric case.
\item In case both conditions occur (all blocks are trivial, $w$ is the Hamming weight and $P$ is the antichain order), the results coincide with the result under classical Hamming metric as seen in [\ref{HUFFMAN}].
\end{enumerate}

\end{document}